\title{The full electroweak interaction:\\
       an autonomous account}
\author{José M. Gracia-Bondía,$^{1,2}$
Karl-Henning Rehren$^3$
and Joseph C. Várilly$^4$\\[9pt]
{\small
$^1\,$Escuela de Física, Universidad de Costa Rica,
San José 11501, Costa Rica}\\[3pt]
{\small
$^2\,$Centro de Astropartículas y Física de Altas Energías,
Zaragoza 50009, Spain}\\[3pt] 
{\small
$^3\,$Institut für Theoretische Physik, 
Georg-August-Universität Göttingen, 37077 Göttingen, Germany}\\[3pt]
{\small
$^4\,$Escuela de Matemática, Universidad de Costa Rica, 
San José 11501, Costa Rica}
}
\DeclareMathOperator{\ad}{ad}       
\DeclareMathOperator{\T}{T}         
\DeclareMathOperator{\tsum}{{\textstyle\sum}} 
\newcommand{\al}{\alpha}            
\newcommand{\bt}{\beta}             
\newcommand{\dl}{\delta}            
\newcommand{\Dl}{\Delta}            
\newcommand{\eps}{\varepsilon}      
\newcommand{\ga}{\gamma}            
\newcommand{\ka}{\kappa}            
\newcommand{\la}{\lambda}           
\newcommand{\sg}{\sigma}            
\renewcommand{\th}{\theta}          
\newcommand{\ups}{\upsilon}         
\newcommand{\vf}{\varphi}           
\newcommand{\bM}{\mathbb{M}}        
\newcommand{\bS}{\mathbb{S}}        
\newcommand{\g}{\mathfrak{g}}       
\newcommand{\gh}{\mathfrak{h}}      
\newcommand{\gS}{\mathfrak{S}}      
\newcommand{\gu}{\mathfrak{u}}      
\newcommand{\sO}{\mathcal{O}}       
\newcommand{\higgs}{{\mathrm{higgs}}} 
\newcommand{\id}{\mathrm{id}}       
\newcommand{\intl}{{\mathrm{int}}}  
\newcommand{\self}{{\mathrm{self}}} 
\bmdefine{\ee}{e}             
\bmdefine{\VV}{V}             
\bmdefine{\WW}{W}             
\bmdefine{\one}{1}            
\newcommand{\bx}{\boxtimes}         
\newcommand{\del}{\partial}         
\newcommand{\downto}{\downarrow}    
\newcommand{\dsp}{\displaystyle}    
\renewcommand{\geq}{\geqslant}      
\newcommand{\otto}{\leftrightarrow} 
\newcommand{\ox}{\otimes}           
\newcommand{\upto}{\uparrow}        
\newcommand{\w}{\wedge}             
\newcommand{\x}{\times}             
\newcommand{\8}{\bullet}            
\renewcommand{\.}{\cdot}            
\newcommand{\dlc}{\underline{\delta}} 
\newcommand{\dlxx}{\delta_{xx'}}    
\newcommand{\simdel}{\stackrel{\partial}{\sim}}
\newcommand{\half}{{\mathchoice{\thalf}{\thalf}{\shalf}{\shalf}}}
\newcommand{\ihalf}{\tfrac{i}{2}}   
\newcommand{\shalf}{{\scriptstyle\frac{1}{2}}} 
\newcommand{\thalf}{\tfrac{1}{2}}   
\newcommand{\quarter}{\tfrac{1}{4}} 
\newcommand{\mot}[1]{\enspace\text{#1}\enspace} 
\newcommand{\set}[1]{\{\,#1\,\}}  
\newcommand{\word}[1]{\quad\text{#1}\quad} 
\def\wick:#1:{\,\mathopen:#1\mathclose:\,} 
\newcommand{\pd}[2]{\frac{\partial#1}{\partial#2}} 
\newcommand{\pq}[2]{\langle\!\langle#1\,#2\rangle\!\rangle} 
\newcommand{\Tpq}[2]{\langle\!\langle\T#1\,#2\rangle\!\rangle}
\newcommand{\eqIdl}{\stackrel{\mathrm{mod}\,I\delta}{=}}
\theoremstyle{plain}
\newtheorem{thm}{Theorem}[section]    
\newtheorem{prop}[thm]{Proposition}   
\newtheorem{lemma}[thm]{Lemma}        
\theoremstyle{definition}
\theoremstyle{remark}
\newtheorem{remk}[thm]{Remark}        
\numberwithin{equation}{section}      
\newcommand{\hideqed}{\renewcommand{\qed}{}} 
\titleformat{\section}{\normalfont\large\bfseries}
                      {\thesection}{1em}{}
\titlespacing{\section}{0pt}{*3.5}{*2.3}
\titleformat{\subsection}{\normalfont\normalsize\bfseries}
                         {\thesubsection}{0.7em}{}
\titlespacing{\subsection}{0pt}{*3.25}{*1.5}
\titleformat{\paragraph}[runin]{\normalfont\bfseries}{}{0pt}{}
\titlespacing{\paragraph}{0pt}{\medskipamount}{\wordsep}
\renewcommand{\@dotsep}{200} 
\begin{document}

\maketitle

\thispagestyle{empty}

\begin{flushright}
\itshape
In memory of our dear friend and colleague Florian Scheck
\end{flushright}

\medskip

\begin{abstract}
The precise renormalizable interactions in the bosonic sector of
electroweak theory are intrinsically determined in the autonomous
approach to perturbation theory. This proceeds directly on the
Hilbert--Fock space built on the Wigner unirreps of the physical
particles, with their given masses: those of three massive vector
bosons, a photon, and a massive scalar (the ``higgs''). Neither
``gauge choices'' nor an unobservable ``mechanism of spontaneous
symmetry breaking'' is invoked. Instead, to proceed on Hilbert space
requires using string-localized fields to describe the vector bosons.
In such a framework, the condition of string independence of the
$\bS$-matrix yields consistency constraints on the coupling
coefficients, the essentially unique outcome being the experimentally
known one. The analysis can be largely carried out for other
configurations of massive and massless vector bosons, paving the way
towards consideration of consistent mass patterns beyond those of the
electroweak theory.
\end{abstract}

\begin{flushright}
\begin{minipage}[t]{20pc}
\small
\textit{It is a dereliction of duty for philosophers to repeat the
physicists' slogans rather than asking what is the content of the
reality that lies behind the veil of gauge}
\par \smallskip \hfill
-- John Earman \cite{Earman04}

\smallskip

\textit{The concept of symmetry breaking has been borrowed by the
elementary particle physicists, but their use of the term is strictly
an analogy, whether a deep or a specious one remaining to be
understood}
\par \smallskip \hfill
-- Philip W. Anderson \cite{Anderson72}

\end{minipage}
\end{flushright}

\setlength{\parskip}{1pt} 

\tableofcontents  

\setlength{\parskip}{\smallskipamount} 

%
%

\section{Introduction}
\label{sec:proclaim}

The theory and practice of the autonomous formulation of quantum field
theory \cite{Mund01, MundSY04, MundSY06, Mund07, Mund12, MundDO17,
MundRS17a, MundRS17b, Rosalind, Atropos, MundRS20, Borisov, Gass22a,
Gass22b, MundRS22, MundRS23, Antigone, Athor, RehrenS24}, also called
``string-localized quantum field theory'', or sQFT for short, were
born from dissatisfaction, both with the heuristics permeating the
generally used gauge formalism and with the limitations of algebraic
field theory~\cite{Witten18}. Instead of classical Lagrangians, its
building blocks are the free -- or asymptotic~\cite{Bain00} -- quantum
fields themselves on Fock--Hilbert space, the underlying one-particle
spaces being the irreducible unitary representation spaces of the
Poincaré group as classified in terms of mass and spin or helicity by
Wigner~\cite{Wigner39}. Within extant perturbative approaches to the
phenomenology of particle theory, this undertaking reasonably claims
to be most rigorous, fully enjoying the canonical triad of fundamental
quantum requirements: \textit{positivity} (on which every probability
interpretation hinges), Poincaré \textit{covariance}, and
\textit{locality}, ensuring Einstein causality.

\smallskip

The word ``autonomous'' warrants an explanation. The free fields of
the theory are defined on their physical Hilbert spaces directly,
without ``canonical quantization'' based on classical free
Lagrangians, and without the forced detours through indefinite metrics
and BRST techniques. A consequence is that vector fields associated
with vector bosons necessarily have a weaker localization than usual:
they are localized along some auxiliary ``string'' (whence the name
``sQFT''). One must then impose the \textit{principle of string
independence} (SI), which posits that the $\bS$-matrix must not depend
on the auxiliary string variables. This is necessary and sufficient to
keep consistency with the aforementioned triad of quantum
requirements. The principle turns out to be extremely restrictive on
the allowed interactions, once the field content is specified;
essentially it narrows down the set of admissible interactions to
precisely those found in Nature.

In its practical implementation, there arise several ``obstructions
against string independence'' at each perturbative order, see
Sect.~\ref{ssc:untune-that-string}. The need to cancel all those
obstructions enforces a recursive system of conditions on the
interaction coefficients. Not least, it shows that couplings to a
higgs particle are indispensable in theories with massive vector
bosons~\cite{Aurora}. It turns out that SI holds great power both as a
heuristic device and as a justification tool, dictating
\textit{symmetry from interaction},%
\footnote{Thereby reversing Yang's \textit{dictum}, restated in the
famous terminological discussion on gauge models between Dirac,
Ferrara, Kleinert, Martin, Wigner, Yang himself and Zichichi
\cite{Zichichi84}.}
down to almost every nut and bolt. 

\smallskip

The sQFT method to induce higher interactions by imposing the absence
of obstructions is in fact an offspring of an analogous program
(called ``perturbative'' or ``causal gauge invariance''
\cite{DuetschS99, AsteDS99, Duetsch05, Aurora}, reassembled in the
book \cite{Scharf16}). That program arrives at very similar results by
imposing BRST invariance at all orders. It therefore does not
\textit{start} with the fundamental principle of Hilbert space from
the outset, but imposes the possibility to \textit{recover} a Hilbert
space as its driving mechanism, where sQFT instead imposes
string-independence.

\smallskip

In summary, in the autonomous approach the ``gauge principle'' is
replaced by fundamental quantum principles. This reinforces an early 
objection to regarding gauge invariance as a principle 
\cite{OgietivskiP62}.

\smallskip

The heart of the Standard Model (SM), that is, the fermionic sector of
the electroweak theory in its coupling to the boson sector, was
already investigated in~\cite{Rosalind} by two of us, together with
Jens Mund, on the basis of the sQFT formulation. There we thoroughly 
showed why and how \textit{chirality}%
\footnote{Of the \textit{interaction}, as opposed to some ``intrinsic
nature'' of its carriers.}
is an indispensable trait of flavourdynamics. So-called Yukawa
couplings arise by way of consistency, and not ``in order to give the
leptons a~mass''.

The purpose of the present work is to show that sQFT leads to an
account of the whole electroweak theory from just the knowledge of an
(allowed) particle spectrum of specified masses. One recovers
precisely the phenomenological couplings of flavourdynamics in the~SM,
with massive bosons mediating the weak interactions, and the $\gu(2)$
structure constants -- as for instance in \cite{Scheck12,
Nagashima13}. (One cannot say that we recover the usual
\textit{formulation} of the SM, since our mathematical description of
the boson fields is at variance with gauge theory, and our rule set
does not care for Lagrangians. But the coincidence of the couplings
will be evident.)

In paper~\cite{Rosalind}, the higgs of the~SM was introduced as a
partner for the photon,%
\footnote{Following Okun~\cite{Okun91}, and for obvious grammatical
reasons, henceforth we refer to a (physical) Higgs boson as a higgs,
with a lower-case~h.}
similar to the Stückelberg fields in the Proca-like description of the
massive intermediate vector bosons, or the ``escort fields'' of sQFT
theory itself -- introduced below in Eq.~\eqref{eq:escorts-for-hire}.
Such a partner turned out to be extremely convenient, since the proof
of chirality required its presence. The main goal of the present work
is to complete the tasks in~\cite{Rosalind} and the results on the
Abelian higgs model in~\cite{MundRS23}, by unveiling from first
principles: (a) how (at~least) one quantum scalar particle is
necessarily part of the~SM, and~(b) what the shape of its
self-couplings \textit{must} be, without recourse to alleged,
unobservable~\cite{Lyre08} ``spontaneous symmetry breakings'' and
without pretending that the higgs is ``the giver of mass''.%
\footnote{SSB is not a physical process. Suffice it to say that the
enormous latent heat that would have been released in the early
universe is grossly incompatible with observations
\cite[Sect.~5.C]{Quigg09}.}
Negative-norm states, ghosts, anti-ghosts, are banished as~well.

The fermionic sector and its relation with the bosonic sector having
been dealt with in~\cite{Rosalind}, it remains to analyze the purely
bosonic sector in the present paper. The main task is the exact
determination of all bosonic interactions by consistency at second
order, whereas the higgs self-couplings remain undetermined. The
third-order consistency argument that fixes those self-couplings will
be essentially the same as in the Abelian higgs model \cite{MundRS23}.
One need only make sure that the nonabelian self-interactions do not
interfere with the pertinent conditions.

Our analysis is designed to reach well beyond electroweak theory: we
consider here theories with given numbers of massive and massless
vector bosons -- restricted to only one higgs particle. (The
generalization to more than one higgs is not difficult at second
order, compare~\cite{Aurora}.) Again, the only input is the masses;
all coupling coefficients are determined by string independence of the
$\bS$-matrix at first, second, and third orders in perturbation
theory, compatible with power-counting renormalizability. We are able
to derive all conditions as relations between the masses, the
Yang--Mills-like structure constants of a reductive Lie algebra, and
the higgs couplings and self-couplings. We do not attempt, however, a
general solution of these equations, characterizing all possible mass
and symmetry patterns. In the special case of the electroweak theory,
we give a complete analysis: string independence recovers the
empirically known coefficients, or those otherwise predicted by the
GWS~model.

\paragraph{Plan of the paper.}
After some technical preparations and a short introduction to
string-localized quantum fields describing vector bosons, we solve the
constraints imposed by string independence at first order in the
interactions (Sect.~\ref{sec:IVB}). We then give an outline of
``obstruction theory'', which is the utensil to determine higher order
interactions in sQFT (Sect.~\ref{sec:in-the-way}). We continue in
Sect.~\ref{sec:electro-weak} with the concrete case of the
\textit{electroweak theory}, whose particle content determines the Lie
algebra $\gu(2)$. With the comparison and matching of the sQFT results
with the empirical electroweak interactions, otherwise claimed to be
consequences of gauge invariance, the main goal of the paper is
achieved in Sect.~\ref{sec:electro-weak}.

Some necessary proofs are given in
Sect.~\ref{sec:seconding-the-motion}. The main \textit{structural}
result at second order is found in Sect.~\ref{ssc:soltando-carga}: the
mass-independent part of the self-interactions of massive and massless
vector bosons entails the structure constants of a reductive Lie
algebra; and an induced interaction -- similar to the quartic terms in
gauge theory treatments -- is predicted. Some more conditions on the
coupling coefficients are then identified, and more induced
interactions are found.

Bringing from Sect.~\ref{sec:seconding-the-motion} a few constraints
from second-order string independence, we fix all bosonic couplings
except for the higgs self-interactions, which require a third-order
result. To complete the analysis, a string-independent quartic
self-coupling of the higgs must be admitted, necessary to cancel an
obstruction present at this order (Sect.~\ref{sec:third-order}).
Remarkably, the higgs self-couplings turn out to be ``universal'':
they do not depend on the particulars (masses and Lie algebra) of the
models. Also, by fixing parameters, string independence at third order
shows other putative second-order induced interactions to be absent.

Technical appendices and a brief discussion of models with a more
general particle content are given at the end.

\section{Intermediate vector boson theory}
\label{sec:IVB}

\subsection{Proca and Maxwell field tensors} 
\label{ssc:Proca-and-Maxwell}

Let us start by considering the field strengths $G^{\mu\nu}(x)$ for a
massive boson of spin~$1$, and $F^{\mu\nu}(x)$ for a massless boson of
helicity one (photon). Both are operator-valued distributions on
Hilbert--Fock spaces over the corresponding Wigner's unitary
irreducible representations of the restricted Poincaré group. Their
pertinent time-ordered two-point functions differ only by the mass
parameter:
\begin{align}
\Tpq{G^{\al\mu}(x)}{G_{\bt\rho}(x')}
&= i\bigl( \dl^\al_\bt \del^\mu \del_\rho 
- \dl^\al_\rho \del^\mu \del_\bt - \dl^\mu_\bt \del^\al \del_\rho
+ \dl^\mu_\rho \del^\al\del_\bt \bigr) \,D^F_m(x - x'),
\notag \\
\Tpq{F^{\al\mu}(x)}{F_{\bt\rho}(x')}
&= i\bigl( \dl^\al_\bt \del^\mu \del_\rho
- \dl^\al_\rho \del^\mu \del_\bt - \dl^\mu_\bt \del^\al \del_\rho
+ \dl^\mu_\rho \del^\al \del_\bt \bigr) \,\Dl^F(x - x'),
\label{eq:two-pointer} 
\end{align}
where $D_m^F$ and $\Dl^F \equiv D_0^F$ are respectively the standard
Feynman propagators for massive and massless scalar particles:
\begin{equation}
D_m^F(x) := \frac{1}{(2\pi)^4} \int d^4p\,
\frac{e^{-i(px)}}{p^2 - m^2 + i0},
\word{so that} (\square + m^2) D_m^F(x) = -\dl(x).
\label{eq:Feynman-propagator} 
\end{equation}

The corresponding \textit{potential fields} $B(x)$ and $A(x)$ such
that $G^{\al\mu} = \del^\al B^\mu - \del^\mu B^\al$, as well as
$F^{\al\mu} = \del^\al A^\mu - \del^\mu A^\al$, are however
troublesome -- in wildly different ways. On the one hand, it is well
known that massive bosons of any spin can be described by potential
fields enjoying:
(i)~positivity, i.e., they are operator-valued distributions on the
above indicated Hilbert space;
(ii)~covariance under the Wigner representation on that space; and
(iii)~causal localization.
A paradigmatic example is provided by the above massive (Proca)
particles of spin~$1$. The trouble is that the high-energy behaviour
of such potentials grows steadily worse with spin. Already for~$B$ it
is rather poor, \textit{no better} than that of~$G$:
\begin{align}
\Tpq{B^\al(x)}{B_\bt(x')} =
-\bigl( \dl^\al_\bt + m^{-2} \del^\al \del_\bt \bigr) \,D^F_m(x - x'),
\label{eq:two-pointer-Proca} 
\end{align}
where the derivatives spoil renormalizability in the~UV.

Massless bosons of helicity $|h| \geq 1$, on the other hand, have
free-field descriptions enjoying those desirable properties, too:
think of either the electromagnetic field $F^{\al\mu}$ or the linear
Riemann--Christoffel field $R^{\al\mu\bt\nu}$ for
gravitons~\cite{Antigone}. Nevertheless, the corresponding
\textit{potential} fields do not. The limit as $m \downto 0$ of the
tensor field $G^{\al\mu}(x)$ for the Proca particle \textit{is} of the
Faraday--Maxwell field type; both are positive, covariant and local.
However, whereas the corresponding field potential $B^\mu(x)$ shares
these properties, it clearly possesses no limit as $m \downto 0$ --
and the ordinary electrodynamic potential $A^\mu(x)$ is neither
positive, nor covariant, nor local. Since violation of positivity
conflicts with the probabilistic interpretation of quantum theory, to
salvage positivity \textit{for observables} on use of~$A^\mu(x)$, one
is apparently forced to introduce indefinite metrics, ghost fields,
and the like.

\subsection{Two birds with one stone}
\label{ssc:rara-avis}

The sQFT framework addresses those problems of received QFT -- and
quite a few others. The key point is the definition of \textit{new
field potentials} with desirable properties. These potentials depend
on spatial directions~$e$ (the ``strings'') in Minkowski space, both
for massive carriers of interaction and for massless ones -- like
photons, gluons or gravitons. Their definition is identical in the
massive ($s = 1$) and the massless ($|h| = 1$) cases:
\begin{equation}
A^\mu(x,e) := \int_0^\infty ds\, G^{\mu\nu}(x + se)\,e_\nu  \word{or}
A^\mu(x,e) := \int_0^\infty ds\, F^{\mu\nu}(x + se)\,e_\nu,
\label{eq:A-vector-field} 
\end{equation}
with $e = (e^0,\ee)$ denoting a spacelike direction, taken by
convention in the hyperboloid $H \subset \bM^4$ of unit radius~$1$,
that is, $e_\mu e^\mu = -1$. 

The operations \eqref{eq:A-vector-field} are invertible, namely, there
holds:
\begin{equation}
G^{\mu\nu}(x) = \del^\mu A^\nu(x,e) - \del^\nu A^\mu(x,e),
\word{resp.}
F^{\mu\nu}(x) = \del^\mu A^\nu(x,e) - \del^\nu A^\mu(x,e).
\label{eq:field-recovery} 
\end{equation}

The new potential vector for massless vector bosons (``photons'') has
positive two-point functions on the same Hilbert space as~$F$; it is
covariant, i.e., the string $e$ is Lorentz-transformed under Poincaré
transformations; and localized, i.e., $[A(x,e), A(x',e')] = 0$
whenever the half-lines $\set{x + se : s \geq 0}$ and
$\set{x' + se' : s \geq 0}$ are causally disjoint.

\smallskip

In the massive case, where $G^{\mu\nu}(x)$ is the curl of a pointlike
\textit{Proca field} $B^\mu(x)$, one can write
\begin{equation}
A^\mu(x,e) = B^\mu(x) + \del^\mu \phi(x,e),  \word{where}
\phi(x,e) := \int_0^\infty ds\, B^\mu(x + se)\,e_\mu, 
\label{eq:escorts-for-hire} 
\end{equation}
whereby the scalar ``escort'' field $\phi$ carries away the 
``non-renormalizability'' -- see \cite{MundRS17b}.%
\footnote{One is reminded here of the Stückelberg fields. It is well
known that, in the case of a unique Proca field or ``massive photon'',
perturbative renormalizability of the model can be recovered with
their help~\cite{Burnel86}. However, this fails in the nonabelian
cases \cite{DragonHvN97, DelbourgoTT88, Felicitas}.}
Henceforth we write $A^\mu(x,e)$ for both massive and massless
intermediate vector bosons. 

It is also true (for any value of the mass) that
\begin{equation}
\del_\mu A^\mu(x,e) + m^2 \phi(x,e) = 0; \qquad 
e_\mu A^\mu(x,e) = 0.
\label{eq:waving-away} 
\end{equation}
These relations follow from the definition \eqref{eq:A-vector-field}.
If $m^2 = 0$, \eqref{eq:waving-away} constitutes two constraint
equations for the massless potential; so it has two degrees of freedom
-- as it should. Note that for $A^\mu(x,e)$ the limit $m \downto 0$ is
smooth.

\smallskip

In view of all of the above, and since we do not share the presumption
that interactions brought by massless intermediate vector bosons are
``more natural'' than interactions mediated by massive carriers, we
put massive and massless interaction carriers on the same footing --
as we have done in our investigation of
flavourdynamics~\cite{Rosalind}.

To unify the notation for both massless and massive vector bosons,
from now on we shall write $F^{\mu\nu}$ instead of $G^{\mu\nu}$ in the
massive case also; thus $F^{\mu\nu}(x)$ is the curl of $B^\mu(x)$ when
$m > 0$, and it is the curl of $A^\mu(x,e)$ in both cases.

\subsection{Notations and nomenclature}
\label{ssc:notations}

We use throughout the notation $(VW) \equiv V_\nu W^\nu 
= \eta_{\mu\nu} V^\mu W^\nu = V^0 W^0 - \VV\.\WW$ for Minkowski
products of vectors (including fields or differential operators)
on~$\bM^4$. In particular, $(\del A) \equiv \del_\mu A^\mu$ denotes a
divergence.

\paragraph{String integrations.}
It is convenient to introduce the notation $I^\mu_e$ for integration
in the spacelike direction~$e$, which always appears accompanied by
multiplication with~$e^\mu$; that is, for any field component (or
numerical distribution) $X(x)$ we write
\begin{equation}
I^\mu_e X(x) := e^\mu \int_0^\infty ds\, X(x + se).
\label{eq:line-integral} 
\end{equation}
The formulas \eqref{eq:A-vector-field} may thus be rewritten as
\begin{equation}
A_\mu(x,e) := I^\nu_e F_{\mu\nu}(x),
\label{eq:I-notation} 
\end{equation}
and in the massive case the escort field \eqref{eq:escorts-for-hire}
is given by $\phi(x,e) := I^\nu_e B_\nu(x)$. Assuming that $X(x + se)$
falls off for large~$s$, as is justified (in the sense of correlation
functions) for all pertinent fields, the fundamental theorem of
calculus reverses this integral transformation:
\begin{equation}
\del_\mu(I^\mu_e X)(x) = I^\mu_e(\del_\mu X)(x) = -X(x),
\label{eq:our-TFC} 
\end{equation}
or more briefly, $(\del I_e) = (I_e \del) = -\id$. Indeed,
$$
\del_\mu(I^\mu_e X)(x) = (I^\mu_e \del_\mu X)(x)
= \int_0^\infty \! ds\, e^\mu \del_\mu X(x + se)
= \int_0^\infty \! ds\, \frac{\del}{\del s}\, X(x + se) = - X(x).
$$

The fields $A_\mu(x,e)$ and $\phi(x,e)$ are operator-valued
distributions in both $x$ and~$e$. As advertised, to get rid of
possible singularities in the $e$-dependence, we smear them in~$e$
with an arbitrary sufficiently smooth function $c(e)$, supported in
some small region of the hyperboloid $H$ of spacelike unit vectors:
\begin{equation}
A_\mu(x,c) := \int_H d\sg(e)\, c(e) A_\mu(x,e), \quad
\phi(x,c) := \int_H d\sg(e)\, c(e) \phi(x,e).
\label{eq:Ac-phic} 
\end{equation}
Here $d\sg(e)$ is the Lorentz-invariant measure on~$H$. More
generally, we write
$$
(I^\nu_c X)(x) := \int_H d\sg(e)\, c(e) (I^\nu_e X)(x)
$$
for distributions in~$x$.

For distributions in two variables like $\dl(x - x')$, we write
$I^\nu_c$ and $I'^\nu_c$ according to the string integration in the
variable $x$ or~$x'$. String-integrated delta functions like
$(I^\nu_c \dl)(x - x')$ or $(I'^\nu_c \dl)(x - x')$ will be called 
``string deltas''. If $c$ has weight one with respect to the
integration measure: $\int_H d\sg(e)\, c(e) = 1$, then
\eqref{eq:our-TFC} becomes
\begin{equation}
(\del\,I_c) = (I_c\,\del) = -\id,
\label{eq:Ic-del} 
\end{equation}
and equations \eqref{eq:field-recovery}, \eqref{eq:escorts-for-hire},
and \eqref{eq:waving-away} with the second constraint replaced by
$I_{c,\mu} A^\mu(x,c) = 0$, hold as~well.

\smallskip

The \textit{principle of string independence} -- see
Sect.~\ref{ssc:untune-that-string} -- requires to study string
variations, that is, variations of $c(e)$ by arbitrary functions
$h(e)$ of weight zero. Let us define
$$
\dl_c^h(X(c)) := \frac{d}{dt} X(c + th)\Bigr|_{t=0} \,.
$$
For massive fields, we introduce $u(h) := \dl_c^h(\phi(c))$. The
definition implies that
\begin{equation}
\dl_c^h(A_\mu(x,c)) 
= \dl_c^h\bigl( B_\mu(x) + \del_\mu\phi(x,c) \bigr) = \del_\mu u(x,h).
\label{eq:dc-A-massive} 
\end{equation}

\begin{lemma} 
\label{lm:dc-A}
For the massless field with helicity one, although $\phi(x,c)$ is not
present, there still exists $u(x,h)$ on the photon's Hilbert space
such that
\begin{equation}
\dl_c^h(A_\mu(x,c)) = \del_\mu u(x,h).
\label{eq:dc-A} 
\end{equation}
It is given by $u(x,h) := -I_c^\nu\bigl( \dl_c^h(A_\nu(x,c)) \bigr)$.
\end{lemma}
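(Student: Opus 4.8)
The plan is to verify directly that the field $u(x,h)$ exhibited in the statement has the asserted property; the mechanism is a string-localized Poincaré lemma: the covector field $\dl_c^h(A_\mu(x,c))$ is curl-free, and string integration against a weight-one $c$ is an explicit left inverse of the gradient (Eq.~\eqref{eq:Ic-del}). First I would note that $A_\mu(x,c)$ is linear in the smearing function, so $\dl_c^h(A_\mu(x,c)) = A_\mu(x,h) = I^\nu_h F_{\mu\nu}(x)$ by \eqref{eq:I-notation}; this is a bona fide operator-valued distribution built solely out of the pointlike Maxwell field $F$, smeared in $x$ and in the string direction, so it lives on the photon's Fock--Hilbert space, and hence so does the further such smearing $u(x,h) := -I^\nu_c\bigl(\dl_c^h(A_\nu(x,c))\bigr)$. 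No Proca potential and no escort field $\phi$ enters anywhere.

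Next I would check that $W_\mu(x) := \dl_c^h(A_\mu(x,c))$ is curl-free. By \eqref{eq:field-recovery} one has $\del_\mu A_\nu(x,c) - \del_\nu A_\mu(x,c) = F_{\mu\nu}(x)$, whose right-hand side does not depend on the smearing function; applying $\dl_c^h$ gives $\del_\mu W_\nu(x) - \del_\nu W_\mu(x) = \dl_c^h F_{\mu\nu}(x) = 0$. The computation is then immediate, since $\del_\mu$ passes through the string integration $I^\nu_c$ (which only integrates along the ray $x+se$):
$$
\del_\mu u(x,h) = -I^\nu_c\,\del_\mu W_\nu(x) = -I^\nu_c\,\del_\nu W_\mu(x) = -(I_c\,\del)\,W_\mu(x) = W_\mu(x),
$$
the second step using curl-freeness and the last using $(I_c\,\del) = -\id$ for the weight-one $c$ (Eq.~\eqref{eq:Ic-del}). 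Since $W_\mu(x) = \dl_c^h(A_\mu(x,c))$, this is precisely \eqref{eq:dc-A}. The same mechanism is consistent with the massive case \eqref{eq:dc-A-massive}, where $u(h) = \dl_c^h(\phi(c))$ is realized through the escort; the content of the lemma is that $u$ survives even when $\phi$ does not.

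The calculation itself is routine; the points that warrant a careful sentence rather than a formula are distribution-theoretic: that $\del_\mu$ genuinely commutes with the string integral, and that $(I_c\,\del) = -\id$, which rests on the fundamental theorem of calculus \eqref{eq:our-TFC} together with the vanishing of the boundary contribution at $s = \infty$ -- that is, on the falloff of the $F$-correlators along spacelike strings assumed throughout Sect.~\ref{ssc:notations}. I would also remark that neither the free field equation $\del^\mu F_{\mu\nu} = 0$ nor the weight-zero property of $h$ is used: the identity is purely kinematical, a direct reflection of $F_{\mu\nu}$ being independent of the string smearing.
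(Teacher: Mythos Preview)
Your proof is correct and follows essentially the same route as the paper: both arguments hinge on the curl-freeness of $\dl_c^h(A_\mu)$ (because the curl equals $\dl_c^h(F_{\mu\nu}) = 0$), the commutation of $\del_\mu$ with $I_c^\nu$, and the identity $(I_c\,\del) = -\id$ from~\eqref{eq:Ic-del}. The paper packs the same three ingredients into a single displayed chain of equalities, whereas you isolate the curl-freeness step first and add the side remarks on linearity, Hilbert-space well-definedness, and which hypotheses are \emph{not} used; these are accurate and helpful glosses but do not alter the argument.
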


\begin{proof}
Derivatives and string integrations commute. Thus
\begin{align*} 
\MoveEqLeft{
\del_\mu I_c^\nu\bigl( \dl_c^h(A_\nu(x,c)) \bigr)
+ \dl_c^h(A_\mu(x,c))}
\\
&= \del_\mu I_c^\nu\bigl( \dl_c^h(A_\nu(x,c)) \bigr)
- \del_\nu I_c^\nu\bigl( \dl_c^h(A_\mu(x,c)) \bigr) 
\\
&= I_c^\nu\bigl( \dl_c^h(\del_\mu A_\nu(x,c)
- \del_\nu A_\mu(x,c)) \bigr)
= I_c^\nu\bigl( \dl_c^h(F_{\mu\nu}(x)) \bigr) = 0,
\end{align*}
where \eqref{eq:Ic-del} as well as the smeared version
of~\eqref{eq:field-recovery} have been used.
\end{proof}

\paragraph{Equations of motion.} 
In the sequel we distinguish different vector bosons, massive or
massless, by an index~$a$. For every massive vector boson, there is an
associated \textit{family of fields}:
\begin{equation}
[a] := \bigl\{ 
F_a^{\mu\nu}, A_{a\mu}(c), B_{a\mu}, \phi_a(c), u_a(h) \bigr\}.
\label{eq:happy-family} 
\end{equation}
The associated family of fields for each massless vector boson
(``photon'') is given by
$$
[a] := \{F_a^{\mu\nu}, A_{a\mu}(c), u_a(h)\}.
$$
The family of relevant higgs fields is $[H] := \{H, \del_\mu H\}$. 

\smallskip

All fields within a family satisfy the Klein--Gordon equation with the
respective mass $m_a$ (equal to~$0$ for the photons) or~$m_H$. Within
each massive family~$[a]$, the following equations of motion hold:
\begin{gather}
\del^{[\mu} B_a^{\nu]} = \del^{[\mu} A_a^{\nu]}(c) = F_a^{\mu\nu},
\qquad
\del_\mu F_a^{\mu\nu} = -m_a^2 B_a^\nu,  \qquad
\del_\mu B_a^\mu = 0,
\notag \\
\del_\mu A_a^\mu(c) = -m_a^2 \phi_a(c), \qquad
\del_\mu \phi_a(c) = A_{a\mu}(c) - B_{a\mu} \,.
\label{eq:eom} 
\end{gather}
For the photons the same equations, apart from those involving $B_a$
or $\phi_a(c)$, hold after putting $m_a = 0$. Namely:
$$
\del^{[\mu} A_a^{\nu]}(c) = F_a^{\mu\nu}, \quad
\del_\mu F_a^{\mu\nu} = 0, \word{and} \del_\mu A_a^\mu(c) = 0.
$$

\paragraph{Sectors and types.}
We shall characterize interaction terms by their ``sector'' and their
``type''. The \emph{sector} specifies the families of the fields
making up a Wick product, say $[a][b][H]$ or $[a][b][c]$. The sectors
made from such $[a]$ and~$[H]$ form the bosonic sector of the
electroweak interaction (or generalizations thereof). Those involving
$[H]$ are collectively called the \emph{higgs sector}. The electroweak
fermionic sector involving leptonic currents was studied
in~\cite{Rosalind}.

The \emph{type} specifies the fields in a Wick product, irrespective
of their family, such as $FAA$ or $ABH$. Thus, $\phi_a A_b \del H$
belongs to the sector $[a][b][H]$ and is of type $\phi A \del H$.

In the context of the electroweak theory, we use labels $a = W_1,W_2$
(or simply $1,2$) and $Z,A$ or else $W_+,W_-,Z,A$ for the families of
the massive vector bosons and the photon, according to the standard
terminology. In addition, we shall write
\begin{equation}
W_1 \equiv A_1(c), \enspace W_2 \equiv A_2(c), \enspace
W_\pm \equiv \tfrac{1}{\sqrt{2}} (W_1 \mp iW_2)(c), \enspace
Z \equiv A_Z(c), \enspace  A \equiv A_A(c)
\label{eq:field-basis} 
\end{equation}
for the string-localized vector fields $A_{a\mu}$. Fields of type $A$,
$\phi$, or~$u$ are by default string-dependent, while $F$, $B$, and~$H$
are string-independent.%
\footnote{%
\label{fn:A-vs-A}
There should be no risk of confusion because the gauge potential for
the photon, usually called $A$ in textbooks, is not defined on a
Hilbert space, and thus it simply does not appear in sQFT. Recall
that, whereas the gauge potential raises ``particles'' with four
states per momentum in an indefinite Fock space, our string-localized
photon field $A(c)$ creates precisely the two physical states per
momentum on the Hilbert space.}

\smallskip

\paragraph{On notation.}
We shall drop writing the dependence on $c$ throughout the main body
of the paper. To forestall confusion in the presence of many field
subindices, the notation $\dlc$ (rather than $\dl_c^h$) will be used
for string variations. The notations $X \simdel Y$ and $X \eqIdl Y$
indicate that $X$ and~$Y$ differ by a total derivative:
$X = Y + \del_\mu Z^\mu$, or by a string delta. In expressions with
two or three variables $x,x',x''$, the symbols $\gS_{xx'}$ and
$\gS_{xx'x''}$ denote symmetrization with respect to them; namely,
$\gS_{xx'} f(x,x') := \half\bigl( f(x,x') + f(x',x) \bigr)$, and
analogously for three variables. We omit the notation
$\wick:\text{\,---\,}:$ for Wick products throughout. Operator
products and time-ordered products of two Wick products will be
expanded into Wick products by Wick's theorem, of which we need only
the tree-level part (one contraction) -- see
Section~\ref{sec:in-the-way}.

\subsection{The principle of string independence}
\label{ssc:untune-that-string}

In the sequel, we adopt the rigorous and flexible
Stückelberg--Bogoliubov--Epstein--Glaser formalism of
``renormalization without regularization'' in perturbation theory
\cite{BogoliubovS80, EpsteinG73}. It involves the
\textit{construction} of a unitary scattering operator $\bS[g,c]$
acting on the Fock spaces of the local free fields, functionally
dependent on a multiplet of smooth external fields $g(x)$, with the
stock requisites of causality and Lorentz covariance
\cite[Sect.~3]{Rosalind}. One looks for $\bS[g,c]$ as a time-ordered
exponential series:
\begin{equation}
\bS[g,c] = \T \sum_{k=0}^\infty \frac{i^k}{k!}
\int S_k(x_1,\dots,x_k,c) g(x_1) \cdots g(x_k) \,dx,
\label{eq:S-matrix} 
\end{equation}
In the adiabatic limit $g(x) \upto g$, this is thought of as the
perturbative expansion of the heuristic $\bS$-matrix
\begin{equation}
\T\exp\, i \int dx\, L_\intl(g,c),
\label{eq:practicalS-matrix} 
\end{equation}
where $g$ is a coupling constant, and in sQFT $c$ denotes a common
string smearing function for all fields appearing in $L_\intl$. With
$$
L_\intl = g L_1+ \half g^2 L_2 \,,
$$
the leading term is $g S_1(x,c) = g L_1(x,c)$ -- usually a cubic Wick
polynomial in the free fields, chosen in relation to the physics under
consideration. It specifies a model. The second-order term is of the
form
\begin{equation}
S_2(x,c) = \T[L_1(x,c)\, L_1(x',c)] - i L_2(x,c) \,\dl(x - x').
\label{eq:S2} 
\end{equation}

The ``principle of string independence'' posits that, notwithstanding
the appearance of the string smearing function~$c$ in $\bS[g,c]$, the
adiabatic limit does not depend on~it. In the subsequent sections, it
will become clear how this condition serves to determine the
interactions, order by order. It already constrains the choice of
$L_1$ (Sect.~\ref{ssc:first-things-first}); it determines $L_2$
(Sect.~\ref{sec:seconding-the-motion}); and it must warrant the
absence of higher-order interactions as $\propto g^3$, which would
violate the power-counting bound (Sect.~\ref{sec:third-order}). See
also \cite{MundRS23} for a systematic coverage of the Abelian case,
and \cite{Rehren24a} for an improved and more general formulation at
all orders.

\subsection{The vector boson sector at first order}
\label{ssc:first-things-first}

String independence of the $\bS$-matrix requires that $\dlc(S_n)$ be
a total derivative for every~$n$. Because $S_1(x,c) = L_1(x,c)$ at
first order, this amounts to the condition:
\begin{equation}
\dlc(L_1) = \del_\mu Q_1^\mu
\label{eq:L-Q} 
\end{equation}
for an appropriate vector polynomial $Q_1^\mu(x,c)$ in the fields. For
the self-interactions, we seek $L_{1,\self}$ as a scalar cubic Wick
polynomial in massless and/or massive string-localized vector
potentials $A_a^\mu(c)$, their field strengths $F_a^{\mu\nu}$, and
their escort fields $\phi_a(c)$, which exist only when $m_a > 0$.
Recall that in the latter case
$B_{a\mu} := A_{a\mu}(c) - \del_\mu \phi_a(c)$ is string-independent.

\begin{prop} 
\label{pr:boson-sector}
Apart from the higgs sector, the cubic self-interaction of a
string-local theory of interacting bosons with spin $s = 1$ or
helicity $|h| = 1$ must be of the form
\begin{align}
L_{1,\self}(x,c) &\equiv L^1_{1,\self}(x,c) + L^2_{1,\self}(x,c) 
\label{eq:bosons-mate} 
\\
&= \sum_{abc} f_{abc} F_a^{\mu\nu}(x) A_{b\mu}(x,c) A_{c\nu}(x,c)
+ \sum_{abc} f_{abc} m_{abc}^2 B_a^\mu(x) A_{b\mu}(x,c) \phi_c(x,c),
\notag
\end{align}
where $f_{abc}$ are completely skewsymmetric real coefficients, and
$$
m_{abc}^2 := m_a^2 - m_b^2 + m_c^2 = m_{cba}^2.
$$
Moreover, if particle $b$ is massless, then particles $a$ and~$c$ must
have equal mass:
\begin{equation}
m_b^2 = 0 \mot{and} f_{abc} \neq 0  \implies  m_a = m_c \,.
\label{eq:massless-b} 
\end{equation}
Consequently, if both particles $a$ and $b$ are massless, then
particle~$c$ is massless, too.
\end{prop}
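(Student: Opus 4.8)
The plan is to impose first-order string independence, namely \eqref{eq:L-Q} in the form $\dlc(L_{1,\self}) = \del_\mu Q_1^\mu$, on the most general admissible ansatz, using the variation rules $\dlc A_{a\mu} = \del_\mu u_a$, $\dlc\phi_a = u_a$, $\dlc F_a^{\mu\nu} = \dlc B_{a\mu} = 0$, together with the equations of motion \eqref{eq:eom}. The first task is to set up a normal form for the ansatz. Power-counting renormalizability (in the refined sQFT sense, which also bounds the degree in the string direction) restricts the admissible cubic Wick monomials, apart from the higgs sector, to those of total UV dimension $\leq 4$ built from $A_a$ (dimension $1$), $F_a$ (dimension $2$), $B_a$ (dimension $1$) and $\phi_a$ (dimension $0$) with at most two string-dependent factors. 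Using $B_{a\mu} = A_{a\mu} - \del_\mu\phi_a$, the identity $\del_\mu F_a^{\mu\nu} = -m_a^2 B_a^\nu$, the antisymmetry of $F_a^{\mu\nu}$ and integration by parts, I would reduce this collection, modulo total derivatives, to a short list of irreducible monomials -- essentially $F_a^{\mu\nu}A_{b\mu}A_{c\nu}$ and $B_a^\mu A_{b\mu}\phi_c$, together with a handful of mass-suppressed candidates of the schematic types $m^2 B_aB_b\phi_c$ (and possibly $F_aA_bB_c$). Since $F_a^{\mu\nu}A_{b\mu}A_{c\nu} = -F_a^{\mu\nu}A_{c\mu}A_{b\nu}$, only the part of the $F_aAA$ coefficient skew in its last two indices matters, so one may take $f_{abc} = -f_{acb}$ from the outset.

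The second task is to compute $\dlc(L_{1,\self})$ and to sort it, modulo $\del_\mu(\cdots)$, into ``obstruction buckets'' labelled by Wick type. The variation of the $F_aAA$ term, after using $F_a^{\mu\nu}\del_\mu A_{c\nu} = \half F_a^{\mu\nu}F_{c\mu\nu}$ and $\del_\mu F_a^{\mu\nu} = -m_a^2 B_a^\nu$, produces one bucket of type $F_aF_cu_b$ and mass-weighted buckets of types $B_aB_cu_b$ and $B_a\,\del u_b\,\phi_c$; the $B_aA_b\phi_c$ term and the mass-suppressed candidates contribute to the latter kinds. The key point is that $F_a^{\mu\nu}F_{c\mu\nu}u_b$ is symmetric under $a\leftrightarrow c$ and is manifestly not a total derivative, so its coefficient must vanish outright; this forces $f_{abc} = -f_{cba}$, which combined with the skewness in the last two indices makes $f_{abc}$ \emph{completely} skew -- the mechanism by which ``structure constants'' are dictated by consistency. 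Requiring the remaining ($m^2$-weighted) buckets to vanish is then a finite linear system, whose unique solution is the coefficient $f_{abc}(m_a^2 - m_b^2 + m_c^2) = f_{abc}m_{abc}^2$ for the $B_aA_b\phi_c$ term, all mass-suppressed would-be terms being forced to zero.

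For the mass constraints, recall that the escort field $\phi_a$ lives on the one-particle space only when $m_a > 0$; hence no interaction term may contain $\phi_b$ when $m_b = 0$. By the complete skewsymmetry of $f$, whenever $f_{abc}\neq 0$ we also have $f_{cab} = f_{abc}\neq 0$, so $L_{1,\self}$ contains the monomial $f_{cab}m_{cab}^2\,B_c^\mu A_{a\mu}\phi_b$; if $m_b = 0$ this is admissible only if $m_{cab}^2 = m_c^2 - m_a^2 + m_b^2 = m_c^2 - m_a^2 = 0$, i.e.\ $m_a = m_c$, which is \eqref{eq:massless-b}. Applying the same to $f_{bac} = -f_{abc}\neq 0$, now with $a$ the middle index, gives: if also $m_a = 0$ then $m_b = m_c$, hence $m_c = 0$ -- the concluding statement.

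I expect the only real difficulty to be the first task: correctly identifying the truly irreducible normal form. The relations $B = A - \del\phi$, $\del_\mu\phi = A_\mu - B_\mu$, $\del_\mu F^{\mu\nu} = -m^2 B^\nu$, and partial integration produce numerous syzygies among the dimension-$4$ monomials ($F_aA_bB_c$, $F_aF_b\phi_c$, $m^2 B_aB_b\phi_c$, $m^2 A_aB_b\phi_c$, \dots), and one must avoid both over- and under-counting; the safe route is to fix once and for all a spanning set of monomials modulo total derivatives and to track only coefficients relative to that basis. Once the basis is settled, the obstruction-bucket bookkeeping and the linear algebra are mechanical, and the mass conditions drop out of the skewsymmetry almost without further computation.
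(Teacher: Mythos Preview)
Your overall plan---compute $\dlc L_{1,\self}$, integrate by parts to remove $\del u$, sort the result into independent Wick types, and require each coefficient to vanish---is the paper's strategy, and your identification of the $(F_aF_c)u_b$ bucket as the source of complete skewsymmetry is exactly right.

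The argument for the mass constraint \eqref{eq:massless-b}, however, is circular as written. You claim: the linear system has the unique solution $g_{abc}=f_{abc}m_{abc}^2$; hence $L_{1,\self}$ contains $f_{cab}m_{cab}^2\,B_cA_a\phi_b$; since $\phi_b$ does not exist for $m_b=0$, that coefficient must vanish. But the formula $g_{abc}=f_{abc}m_{abc}^2$ is derived only when all three indices are massive---its derivation uses the obstruction equations for \emph{all} permutations of $(a,b,c)$, and the ones with $b$ in the first or third slot (coming from $B_bA_\8$ or $B_\8B_b$ pieces of $Z_\8$) simply do not exist when $m_b=0$. The correct handling of a massless $b$ is to impose \emph{a priori} $g_{\8\8 b}=g_{b\8\8}=0$ (no $B_b$, no $\phi_b$) and then read off the surviving equations: the $(B_aA_c)$ coefficient in $Z_b$ gives $g_{abc}=2m_a^2 f_{abc}$ (since $g_{acb}=0$), while $(B_aB_c)$ gives $g_{abc}+g_{cba}=0$, and the analogous $(B_cA_a)$ equation gives $g_{cba}=2m_c^2 f_{cba}$. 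These three together force $(m_a^2-m_c^2)f_{abc}=0$, which is \eqref{eq:massless-b}. Only \emph{after} this constraint does the formula $g=fm^2$ extend to the massless-$b$ case. So the mass relation is an output of the (overdetermined) linear system, not a post-hoc consistency check on a formula already in hand.

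A smaller point on your normal form: the paper's ansatz is $FAA+BA\phi+AA\phi$, and the $AA\phi$ piece is eliminated by its own obstruction buckets (types $u(AA)$ and $u\phi\phi$), which you do not mention. Your alternative candidates $m^2 BB\phi$ and $FAB$ are not obviously a basis for the same span modulo total derivatives, and your ``at most two string-dependent factors'' rule is not a constraint the paper invokes (indeed $AA\phi$ violates it). You should either verify that your basis is equivalent to the paper's, or adopt theirs; otherwise you risk missing an independent obstruction type.
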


\begin{proof}
(We drop the subscript ``$\self$'' in $L_{1,\self}$ during this
proof.) String independence \eqref{eq:L-Q} requires that $\dlc(L_1)$
be a total derivative $\del_\mu Q_1^\mu$. In the purely massless case,
only the fields $F$ and $A$ are available to build $L_1$, and (by
Lorentz covariance and the power-counting bound) $L_1$ can only be of
type $FAA$ as in $L_1^1$. In this case, complete skewsymmetry of the
coefficients was proved in~\cite{Borisov}.

If massive vector bosons are admitted, one may make a most general
(hermitian, Lorentz-covariant and power-counting renormalizable)
Ansatz:
$$
L_1 = \sum_{abc} f_{abc} F_a^{\mu\nu} A_{b\mu} A_{c\nu}
+ \sum_{abc} g_{abc} B_a^\mu A_{b\mu} \phi_c 
+ \sum_{abc} h_{abc} A_a^\mu A_{b\mu} \phi_c,
$$
with real coefficients $f_{abc}$, $g_{abc}$, $h_{abc}$ satisfying
$f_{abc} = - f_{acb}$ and $h_{abc} = h_{bac}$. Moreover, since $B$ and
$\phi$ fields are necessarily massive, the following supplementary
rules apply:
\begin{align}
g_{abc} &= 0 \quad\text{whenever $a$ or $c$ is massless, and}
\notag \\
h_{abc} &= 0 \quad\text{whenever $c$ is massless}.
\label{eq:suppl} 
\end{align}

We now compute $\dlc(L_1)$ on use of $\dlc(\phi) = u$ and
$\dlc(A_\mu) = \del_\mu u$. To remove terms of type $\del_\mu u$, we
employ
$$
XY\,\del u = \del(XYu) - \del(XY)u
\simdel -(\del X\,Y + X\,\del Y)u,
$$
as well as the equations of motion~\eqref{eq:eom}. Since
\begin{align}
\dlc(F_a^{\mu\nu} A_{b\mu} A_{c\nu})
&= F_a^{\mu\nu} \bigl( \del_\mu u_b A_{c\nu} + A_{b\mu} \del_\nu u_c)
\notag \\
&\simdel m_a^2 B_a^\nu u_b A_{c\nu} - m_a^2 B_a^\mu A_{b\mu} u_c
- \half F_a^{\mu\nu} u_b F_{c\mu\nu} 
- \half F_a^{\mu\nu} F_{b\nu\mu} u_c,
\label{eq:discard} 
\\
\dlc(B_a^\mu A_{b\mu} \phi_c) 
&= B_a^\mu A_{b\mu} u_c + B_a^\mu \del_\mu u_b \phi_c
\simdel B_a^\mu A_{b\mu} u_c - B_a^\mu u_b \del_\mu \phi_c,
\quad\text{and}
\notag \\
\dlc(A_a^\mu A_{b\mu} \phi_c) 
&= \del^\mu u_a A_{b\mu} \phi_c + A_a^\mu \del_\mu u_b \phi_c
+ A_a^\mu A_{b\mu} u_c
\notag \\
&\simdel m_b^2 u_a \phi_b \phi_c - u_a A_{b\mu} \del^\mu \phi_c
+ m_a^2 \phi_a u_b \phi_c - A_a^\mu u_b \del_\mu \phi_c 
+ A_a^\mu A_{b\mu} u_c \,,
\notag
\end{align}
this produces:
\begin{align*}
\dlc(L_1^1)
&\simdel \sum_{abc} f_{abc} \bigl(
m_a^2 u_b (B_a A_c) - m_a^2 u_c (B_a A_b)
- \half u_b (F_a F_c) + \half u_c (F_a F_b) \bigr)
\\
&\quad + \sum_{abc} g_{abc} \bigl( u_c (B_a A_b)
- u_b (B_a  (A_c - B_c)) \bigr)
\\
&\quad + \sum_{abc} h_{abc} \bigl( m_b^2 u_a \phi_b \phi_c
- u_a (A_b (A_c - B_c)) 
\\
&\hspace*{5em}
+ m_a^2 u_b \phi_a \phi_c - u_b (A_a (A_c - B_c)) + u_c (A_a A_b)
\bigr). 
\end{align*}
Next, relabel the indices $abc$ conveniently, using
$f_{abc} = - f_{acb}$ and $h_{abc} = h_{bac}$, to write the sum as
$\sum_b u_b Z_b$ with
\begin{align}
Z_b &= \sum_{ac} 2 m_a^2 f_{abc} (B_a A_c) - f_{abc} (F_a F_c) 
\notag \\
&\quad + \sum_{ac} g_{acb} (B_a A_c) - g_{abc}(B_a (A_c - B_c)) 
\notag \\
&\quad + \sum_{ac} 2 m_a^2 h_{abc} \phi_a \phi_c
- 2 h_{abc} (A_a (A_c - B_c)) + h_{acb} (A_a A_c).
\label{eq:Z} 
\end{align}

String independence requires that $Z_b$ must vanish for all~$b$. In
the sum over $a,c$, let us examine the coefficients of $(F_a F_c)$,
$(A_a A_c)$, $(B_a B_c)$ and $(B_a A_c)$, in turn.
\begin{itemize}[itemsep=0pt]
\item
Firstly, $\sum_{ac} f_{abc} (F_a F_c) = 0$ requires
$f_{abc} + f_{cba} = 0$. Thus, $f_{abc}$ is skewsymmetric under both
$b \otto c$ and $a \otto c$, and hence, it is completely
skewsymmetric. This reproduces the result from the massless case.

\item
For the symmetric coefficients of $(A_a A_c)$, there are two cases to
consider. If all fields are massive, then 
$h_{acb} - 2 h_{abc} + [a \otto c] = 0$ for all~$b$, that is:
$2 h_{acb} - 2 h_{abc} - 2 h_{cba} = 0$, whereby 
$h_{abc} = h_{acb} - h_{cba} = h_{cab} - h_{cba}$. This is both
symmetric and skew under $a \otto b$; hence $h_{\8\8\8} = 0$.

\item
If instead one index, say $b = b'$, refers to a massless field, then
a~priori $h_{acb'} = h_{cab'} = 0$ by \eqref{eq:suppl}. In this case,
the coefficient of $(A_a A_{b'})$ in $Z_c$ equals
$-2h_{acb'} + h_{ab'c} = h_{ab'c}$. Its vanishing, along with that of
$a \otto c$, again implies $h_{\8\8\8} = 0$. Consequently, the third
summation in~\eqref{eq:Z} may be dropped altogether.

\item
The coefficients of $(B_a B_c)$ and of $(B_a A_c)$ in~$Z_b$ now serve
to determine $g_{abc}$. Their vanishing gives
$$
\mathrm{(I)}  \quad g_{abc} + g_{cba} = 0, \word{and}
\mathrm{(II)} \quad g_{abc} - g_{acb} = 2 m_a^2 f_{abc}\,.
$$
When $a,b,c$ are all massive, these relations must hold for all
permutations, and the relations (I), (II) then also imply:
$$
g_{bac} + g_{acb} = -2 m_b^2 f_{abc} \word{and}
- g_{bac} + g_{abc} = 2 m_c^2 f_{abc} \,.
$$
These equations have a unique solution:
\begin{equation}
g_{abc} = (m_a^2 - m_b^2 + m_c^2) f_{abc} =: m_{abc}^2 f_{abc}\,,
\label{eq:all-massive} 
\end{equation}
valid for all permutations.

\item
On the other hand, if say $b'$ is massless, then a~priori only
$g_{ab'c}$ and $g_{cb'a}$ can be nonzero by \eqref{eq:suppl}.
Formula~(I) implies $g_{ab'c} = - g_{cb'a}$; then (II) together with
$a \otto c$ yields:
$$
2 m_a^2 f_{ab'c} = g_{ab'c} = - g_{cb'a} = -2 m_c^2 f_{cb'a}
= 2 m_c^2 f_{ab'c}\,.
$$
This implies $m_a = m_c$ whenever $f_{ab'c} \neq 0$. With this
specification, Eq.~\eqref{eq:all-massive} holds again for all
permutations. In particular, if any two of the fields $a,b,c$ are
massless, then the third one is also massless, and all permutations of
$g_{abc}$ vanish.
\qed
\end{itemize}
\hideqed
\end{proof}

\begin{remk} 
\label{rk:boson-sector}
\begin{enumerate}[itemsep=0pt]
\item 
It follows from~\eqref{eq:massless-b} that $f_{abc}\,m^2_{abc} = 0$
whenever $a$ or~$c$ is massless. This deletes the non-existent terms
``$B_a A_b \phi_c$'' in $L^2_{1,\self}$ whenever $a$ or~$c$ is
massless. Moreover, $L^2_{1,\self}$ contains no terms with more than
one massless index, because in that case $f_{abc}\,m_{abc}^2 = 0$.
However, terms with massless~$b$ may indeed appear.%
\footnote{This qualifies the meaning of the restricted sum $\sum'$ in
\cite[Eq.~(4.1)]{Rosalind}.}

\item 
In Proposition~\ref{pr:Jacobi} we shall show (by SI at second order)
that $f_{abc}$ in fact must satisfy the Jacobi identity, and thus they
are the structure constants of a reductive Lie algebra of compact
type.

\item 
If $a$ and $b$ are massless and $f_{abc} \neq 0$, then $c$ is also
massless. In view of~(ii), this can be reformulated: the structure
constants $f_{abc}$ for the massless particles define a Lie
subalgebra. The latter may be nonabelian, as for instance in QCD.
\end{enumerate}
\end{remk}

We shall also need $Q^\mu_{1\self}$ so that
$\dlc(L_{1 \self}) = \del_\mu Q^\mu_{1\self}$ holds. This can be
obtained, for instance, by collecting the total derivatives discarded
in the first step of the previous proof.

\begin{prop} 
\label{pr:Q1-self}
\begin{equation}
Q^\mu_{1,\self} \equiv Q^{1\mu}_{1,\self} + Q^{2\mu}_{1,\self} 
= 2 \sum_{abc} f_{abc} F_a^{\mu\nu} u_b A_{c\nu}
+ \sum_{abc} f_{abc} m_{abc}^2 B_a^\mu u_b \phi_c \,.
\label{eq:Q1-self} 
\end{equation}
\end{prop}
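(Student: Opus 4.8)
The plan is to retrace the computation of $\dlc(L_1)$ from the proof of Proposition~\ref{pr:boson-sector}, but now \emph{retaining} the total-derivative terms that were silently discarded there: each step of that proof used $XY\,\del u = \del(XYu) - \del(XY)u \simdel -(\del X\,Y + X\,\del Y)u$ and dropped the $\del(XYu)$ piece. Collecting precisely those dropped pieces, for the interaction $L_{1,\self} = L^1_{1,\self} + L^2_{1,\self}$ with $g_{abc} = m_{abc}^2 f_{abc}$ and $h_{abc} = 0$ as established there, produces a candidate $Q^\mu_{1,\self}$; it then remains to check that the non-derivative remainder vanishes. That remainder is, by construction, exactly the quantity $\sum_b u_b Z_b$ of~\eqref{eq:Z} (with $h_{abc} = 0$ and $g_{abc} = m_{abc}^2 f_{abc}$ substituted), which string independence forces to vanish. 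So $\dlc(L_{1,\self}) = \del_\mu Q^\mu_{1,\self}$ with the $Q^\mu_{1,\self}$ so collected, and the content of the proposition is just the explicit form of that collected divergence.

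First I would treat $L^1_{1,\self} = \sum_{abc} f_{abc} F_a^{\mu\nu} A_{b\mu} A_{c\nu}$. Since $F_a$ is string-independent, $\dlc(L^1_{1,\self}) = \sum_{abc} f_{abc} F_a^{\mu\nu}(\del_\mu u_b\, A_{c\nu} + A_{b\mu}\,\del_\nu u_c)$; renaming $b\otto c$ and $\mu\otto\nu$ in the second summand and using $f_{abc} = -f_{acb}$ together with $F_a^{\mu\nu} = -F_a^{\nu\mu}$ shows it equals the first, so $\dlc(L^1_{1,\self}) = 2\sum_{abc} f_{abc} F_a^{\mu\nu}\,\del_\mu u_b\, A_{c\nu}$. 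Rewriting $F_a^{\mu\nu}\,\del_\mu u_b\, A_{c\nu} = \del_\mu(F_a^{\mu\nu} u_b A_{c\nu}) - (\del_\mu F_a^{\mu\nu}) u_b A_{c\nu} - F_a^{\mu\nu} u_b\,\del_\mu A_{c\nu}$, then inserting $\del_\mu F_a^{\mu\nu} = -m_a^2 B_a^\nu$ from~\eqref{eq:eom} and $F_a^{\mu\nu}\,\del_\mu A_{c\nu} = \half F_a^{\mu\nu} F_{c\mu\nu}$ (antisymmetry of $F_a$), I get $\dlc(L^1_{1,\self}) = \del_\mu Q^{1\mu}_{1,\self} + 2\sum_{abc} f_{abc}\bigl(m_a^2 (B_aA_c)\,u_b - \half (F_aF_c)\,u_b\bigr)$ with $Q^{1\mu}_{1,\self} = 2\sum_{abc} f_{abc} F_a^{\mu\nu} u_b A_{c\nu}$. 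The $(F_aF_c)$ piece vanishes on its own, because $(F_aF_c) = (F_cF_a)$ while $f_{cba} = -f_{abc}$.

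Next I would treat $L^2_{1,\self} = \sum_{abc} f_{abc} m_{abc}^2 B_a^\mu A_{b\mu}\phi_c$, with $B_a$ string-independent and $\dlc(\phi_c) = u_c$, so that $\dlc(L^2_{1,\self}) = \sum_{abc} f_{abc} m_{abc}^2\bigl(B_a^\mu\,\del_\mu u_b\,\phi_c + B_a^\mu A_{b\mu}\,u_c\bigr)$. Writing $B_a^\mu\,\del_\mu u_b\,\phi_c = \del_\mu(B_a^\mu u_b\phi_c) - (\del_\mu B_a^\mu)u_b\phi_c - B_a^\mu u_b\,\del_\mu\phi_c$ and using $\del_\mu B_a^\mu = 0$ and $\del_\mu\phi_c = A_{c\mu} - B_{c\mu}$ from~\eqref{eq:eom} yields $\dlc(L^2_{1,\self}) = \del_\mu Q^{2\mu}_{1,\self} + \sum_{abc} f_{abc} m_{abc}^2\bigl(-(B_aA_c)u_b + (B_aB_c)u_b + (B_aA_b)u_c\bigr)$ with $Q^{2\mu}_{1,\self} = \sum_{abc} f_{abc} m_{abc}^2 B_a^\mu u_b\phi_c$. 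The $(B_aB_c)u_b$ term again vanishes on its own (by $a\otto c$ symmetry, $m_{cba}^2 = m_{abc}^2$, and $f_{cba} = -f_{abc}$); renaming $b\otto c$ in the $(B_aA_b)u_c$ term collapses the rest to $-\sum_{abc} f_{abc}\bigl(m_{abc}^2 + m_{acb}^2\bigr)(B_aA_c)u_b = -2\sum_{abc} f_{abc} m_a^2 (B_aA_c)u_b$, which cancels the surviving remainder from the previous paragraph. Hence $\dlc(L_{1,\self}) = \del_\mu\bigl(Q^{1\mu}_{1,\self} + Q^{2\mu}_{1,\self}\bigr)$, which is~\eqref{eq:Q1-self}.

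The one genuine difficulty is bookkeeping: carrying the antisymmetry of $F_a^{\mu\nu}$, the complete skewsymmetry of $f_{abc}$, and the symmetry $m_{abc}^2 = m_{cba}^2$ correctly through every index relabelling, and making sure each discarded divergence $\del_\mu(XYu)$ has been picked up. Conceptually there is nothing new beyond the proof of Proposition~\ref{pr:boson-sector}: the cancellation of the non-derivative remainder is not automatic but is precisely the first-order string-independence condition $Z_b = 0$ already enforced there, while $Q^\mu_{1,\self}$ is merely the sum of the total derivatives that that proof threw away.
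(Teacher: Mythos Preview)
Your proposal is correct and follows essentially the same approach as the paper's own proof: both collect the total-derivative pieces discarded in the $\simdel$ steps of Proposition~\ref{pr:boson-sector} and identify their sum as $\del_\mu Q^\mu_{1,\self}$. The paper does this in a single displayed line, whereas you spell out the product rule and the subsequent cancellation of the non-derivative remainders in detail; your explicit verification that the leftover $(B_aA_c)u_b$ terms cancel between $L^1_{1,\self}$ and $L^2_{1,\self}$ is simply a re-derivation of the already-established condition $Z_b = 0$, but the bookkeeping is accurate.
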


\begin{proof}
On using $h_{abc} = 0$, we readily retrieve the derivatives discarded
in \eqref{eq:discard}:
\begin{align*}
& \sum_{abc} f_{abc} \bigl( \del_\mu(F_a^{\mu\nu} u_b A_{c\nu})
+ \del_\nu(F_a^{\mu\nu} A_{b\mu} u_c) \bigr)
+ \sum_{abc} g_{abc} \del_\mu(B_a^\mu u_b \phi_c)
\\
&\quad = \del_\mu \sum_{abc} \bigl[ 2f_{abc} F_a^{\mu\nu} u_b A_{c\nu}
+ g_{abc} B_a^\mu u_b \phi_c \bigr] =: \del_\mu Q^\mu_{1,\self}.
\tag*{\qed}
\end{align*}
\hideqed
\end{proof}

\subsection{Let the higgs be with you}
\label{ssc:higgling}

Our $(L_{1,\self}, Q_{1,\self})$ pair above is \textit{not complete},
since bosonic couplings involving massive neutral spinless fields
(i.e., higgses) have not been included, and as we shall see, they play
an important role in our problem. Such physical pointlike scalar
fields do not suffer from the renormalizability issues discussed in
Sect.~\ref{ssc:Proca-and-Maxwell}. As already learned
in~\cite{DuetschS00}, the presence of some higgses, hinted at by the
appearance of escort fields, is \textit{required} for consistency of
models wherever \textit{massive} $A$-fields are present. One can study
what their couplings ought to be from the standpoint of the SI
principle. To simplify matters, we shall suppose that \textit{only
one} higgs field $H(x)$ of mass~$m_H$ is present -- like in the
minimal~SM.

\begin{prop} 
\label{pr:L1-higgs}
The most general $(L,Q)$ pair coupling the vector bosons to the higgs
is of the form
\begin{align}
L_{1,\higgs} &= \sum_{ab} \bigl[
k_{ab} \bigl( A_{a\mu} B_b^\mu H 
+ A_{a\mu} \phi_b \del^\mu H - \half m_H^2 \phi_a \phi_b H \bigr)
+ \ell H^3,
\label{eq:L1-higgs} 
\\
Q^\mu_{1,\higgs} &= \sum_{ab} 
k_{ab} \bigl( u_b B_a^\mu H + u_b \phi_a \del^\mu H \bigr),
\label{eq:Q1-higgs} 
\end{align}
where the coupling matrix $[k_{ab}]$ is symmetric and links 
\emph{only massive fields}.
\end{prop}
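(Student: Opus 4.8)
The plan is to mirror the structure of the proof of Proposition~\ref{pr:boson-sector}: first write down the most general hermitian, Lorentz-covariant, power-counting-renormalizable cubic Ansatz that involves the higgs family $[H]=\{H,\del_\mu H\}$ together with the vector-boson families; then compute $\dlc$ on it using $\dlc(\phi)=u$, $\dlc(A_\mu)=\del_\mu u$, $\dlc(H)=\dlc(B)=0$, the equations of motion~\eqref{eq:eom}, and the $\simdel$-moves $XY\,\del u \simdel -(\del X\,Y + X\,\del Y)u$; and finally collect the result in the form $\sum_b u_b Z_b + \sum_H u_H Z_H$ (but $u_H$ does not exist, so really just $\sum_b u_b Z_b$, since $H$ has no escort), demanding that every $Z_b$ vanish identically as a Wick polynomial. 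Each independent Wick monomial in $Z_b$ then yields a linear relation among the coupling coefficients. The mass-dimension bound forces the higgs-sector terms to be at most cubic with at most one derivative; the allowed types coupling one $H$ to two vector-family fields are $ABH$, $A\phi\,\del H$, $B\phi H$, $\phi\phi H$, $\phi\phi\,\del H$ (the last reducible), plus the pure-higgs term $H^3$, and a priori also $BBH$-type and $AAH$-type with their own symmetric coefficient matrices; all of these carry coefficients linking only massive $a,b$ because $B_a$ and $\phi_a$ exist only for $m_a>0$ (and an $A_aA_bH$ term with massless indices would have to be analyzed separately, as in Proposition~\ref{pr:boson-sector}).

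First I would argue that no term of type $A_a^\mu A_{b\mu} H$ can survive: its string variation produces $2\,u\,(\del^\mu u_{(a}) A_{b)\mu}H$-type pieces and, after $\simdel$, monomials like $u_a A_{b\mu}\del^\mu H$ and $u_a (A_b(A_{?}))$ that cannot be cancelled by anything else in the Ansatz — exactly the mechanism that killed $h_{abc}$ before — so its coefficient vanishes. Similarly a stand-alone $B_aB_b H$ term is string-independent ($\dlc$ of it is zero) but is not of the $(L,Q)$-generating form we want, so it must be excluded by the requirement that $L_{1,\higgs}$ be $\dlc$-exact with the specific $Q$ of~\eqref{eq:Q1-higgs}; more precisely, I would show that once the $Z_b=0$ conditions are imposed, the only way to saturate them with a nontrivial coupling is to tie the $ABH$, $A\phi\,\del H$ and $\phi\phi H$ coefficients together through a single symmetric matrix $k_{ab}$ with the fixed ratio $1:1:-\thalf m_H^2$, leaving $\ell$ (the $H^3$ coupling) completely free since $\dlc(H^3)=0$. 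The symmetry $k_{ab}=k_{ba}$ comes out of the vanishing of the coefficient of the symmetric monomial $(B_a A_b)H$ combined with $(B_b A_a)H$ in $Z$, analogously to condition (I) for $g_{abc}$.

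The key computation is $\dlc\bigl(k_{ab}(A_{a\mu}B_b^\mu H + A_{a\mu}\phi_b\del^\mu H - \thalf m_H^2\phi_a\phi_b H)\bigr)$: using $\del_\mu\phi_b = A_{b\mu}-B_{b\mu}$, $\del_\mu A_a^\mu = -m_a^2\phi_a$, and $(\square+m_H^2)H=0$, the three pieces should conspire so that $\dlc$ of the sum equals $\del_\mu$ of $k_{ab}(u_b B_a^\mu H + u_b\phi_a\del^\mu H)$ up to terms $\propto u_b Z_b$ whose vanishing is automatic (or pins down $k_{ab}=k_{ba}$). I expect the main obstacle to be bookkeeping: correctly tracking which $\simdel$-discarded total derivatives reassemble into $\del_\mu Q^\mu_{1,\higgs}$ and verifying that the residual $u_b$-coefficients genuinely cancel rather than imposing further constraints on the masses — in particular checking that the $-\thalf m_H^2\phi_a\phi_b H$ term is exactly what is needed to cancel the $m_H^2$ generated when $\del^\mu$ hits $\del_\mu H$ in the middle term, and that no spurious $\phi_a\phi_b\,\del H$ or $u_a(A_bB_c)$ obstruction is left over. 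Everything else — hermiticity, the dimension count, the restriction to massive indices — is routine once the Ansatz is written down.
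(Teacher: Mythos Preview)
Your overall strategy matches the paper's: write the most general renormalizable cubic Ansatz in the higgs sector, compute $\dlc$, integrate by parts, and read off linear constraints on the coefficient matrices from the vanishing of the $u_b$-coefficients. That part is fine.

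There is, however, a genuine gap in your treatment of the $B_aB_bH$ term. You correctly note that it is string-independent, but then try to exclude it by appealing to ``the $(L,Q)$-generating form we want'' --- which is circular, since the form of $Q$ is what you are trying to derive. The actual reason it is absent is power counting: in sQFT the Proca field $B^\mu$ has UV scaling dimension~$2$ (this is exactly the ``poor UV behaviour'' of~\eqref{eq:two-pointer-Proca} that motivates the string-localized $A$ in the first place), so $B_aB_bH$ has dimension~$5$ and is excluded from a renormalizable Ansatz. Without this, your argument does not close: a string-independent term with $\dlc=0$ would simply remain as an extra free coupling in ``the most general $(L,Q)$ pair'', contradicting the stated result. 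Relatedly, your list of candidate types is not quite right: ``$B\phi H$'' and ``$\phi\phi\,\del H$'' are not Lorentz scalars as written, and your description of the $AAH$ mechanism mentions a monomial ``$u_a(A_b(A_?))$'' that cannot arise in a sector with only one higgs.

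The paper's proof is more economical than your sketch: it takes the Ansatz with four coefficient matrices $k,k',k'',k'''$ (for types $ABH$, $AAH$, $A\phi\,\del H$, $\phi\phi H$), computes $\dlc$ and subtracts the obvious total derivatives, and then reads off directly: the $u(A\,\del H)$ coefficient gives $k''_{ba}-k''_{ab}=2k'_{ab}$, forcing $k'=0$ and $k''$ symmetric; the $u(B\,\del H)$ coefficient gives $k''=k$; and the $u\phi H$ coefficient gives $2k'''=-m_H^2 k$. Symmetry of $k_{ab}$ thus comes via $k=k''$ rather than from a ``$(B_aA_b)H$ versus $(B_bA_a)H$'' comparison as you suggest.
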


\begin{proof}
A general renormalizable cubic Ansatz coupling the higgs to the vector
bosons is of the form
\begin{align*}
L_{1,\higgs}^1 := \sum_{ab} 
k_{ab} A_{a\mu} B_b^\mu H + k'_{ab} A_{a\mu} A_b^\mu H
+ k''_{ab} A_{a\mu} \phi_b \del^\mu H + k'''_{ab} \phi_a \phi_b H 
\bigr],
\end{align*}
with real constants $k^\8_{ab}$ such that $k'_{ab} = k'_{ba}$ and
$k'''_{ab} = k'''_{ba}$. We require that $\dlc(L_{1,\higgs})$ equal
$\del_\mu Q^\mu_{1,\higgs}$, with $Q^\mu_{1,\higgs}$ containing no
$\del u$-terms. On subtracting
$$
\sum_{ab} \del_\mu \bigl[ 
k_{ab} u_a B_b^\mu H + 2 k'_{ab} u_a A_b^\mu H
+ k''_{ab} u_a \phi_b \del^\mu H \bigr]
$$
from the string variation of $L_{1,\higgs}$, one finds that
\begin{align*}
\dlc(L_{1,\higgs}) &= \! \sum_{ab} \bigl[
k_{ab} \del_\mu u_a B_b^\mu H 
+ 2 k'_{ab} \del_\mu u_a A_b^\mu H
+ k''_{ab} \bigl( \del_\mu u_a \phi_b + A_{a\mu} u_b \bigr) \del^\mu H
+ 2 k'''_{ab} u_a \phi_b H \bigr]
\\
&\simdel \! \sum_{ab} \bigl[
k''_{ab} A_{a\mu} u_b \del^\mu H + 2 k'''_{ab} u_a \phi_b H
- k_{ab} u_a B_b^\mu \del_\mu H + 2 k'_{ab} m_b^2 u_a \phi_b H
\\
&\qquad - 2 k'_{ab} u_a A_b^\mu \del_\mu H
- k''_{ab} u_a (A_{b\mu} - B_{b\mu}) \del^\mu H
+ k''_{ab} m_H^2 u_a \phi_b H \bigr].
\end{align*}

The right-hand side (i.e., the last two lines) must vanish, by string 
independence. Comparing coefficients of $u (A \del H)$, one gets
$k''_{ba} - k''_{ab} = 2 k'_{ab}$; this is both symmetric and skew 
under $(a \otto b)$, so $k'_{ab} = 0$ and $k''_{ba} = k''_{ab}$. The 
coefficient of $u (B \del H)$ vanishes only if $k''_{ab} = k_{ab}$,
and that of $u \phi H$ is zero only if 
$2 k'''_{ab} = -k''_{ab} m_H^2 = -k_{ab} m_H^2$.

As a consequence, $k_{ab} = k_{ba}$, too: the matrix $[k_{ab}]$ is 
symmetric. To sum up:
\begin{align*}
L_{1,\higgs}^1 &= \sum_{ab} k_{ab} \bigl( A_{a\mu} B_b^\mu H 
+ A_{a\mu} \phi_b \del^\mu H - \half m_H^2 \phi_a \phi_b H \bigr),
\\
Q^\mu_{1,\higgs} &= \sum_{ab} 
k_{ab} \bigl( u_a B_b^\mu H + u_a \phi_b \del^\mu H \bigr)
= \sum_{ab} k_{ab}\bigl( u_b B_a^\mu H + u_b \phi_a \del^\mu H \bigr).
\end{align*}

We have found it convenient to include in \eqref{eq:L1-higgs} the
unique (up to a real multiple) renormalizable cubic self-interaction
of the higgs, given by $L_{1,\higgs}^2 := \ell H^3$, which trivially
satisfies the $(L,Q)$ condition \eqref{eq:L-Q}.
\end{proof}

The reader may note our assertion that the higgs does not couple to
massless fields, notwithstanding that it was first detected at the LHC
of CERN by its decay into two photons. The answer is of course that
this decay takes place through \textit{loop graphs}. It is well known
that in such cases the one-loop contribution is \textit{finite}. The
(correct) calculation of this contribution is as old
as~\cite{ShifmanVVZ79}. This was questioned in a paper by Gastmans, Wu
and Wu~\cite{GastmansWW15}, which actually received some support from
other calculations. Consensus around the result in~\cite{ShifmanVVZ79}
was reestablished in other papers, among them one by Duch, Dütsch and
one of us~\cite{Aglaea}.

\subsection{Summary: the list of all first-order couplings}
\label{ssc:wrap-up-first}

The complete $L_1$ and $Q^\mu_1$ terms found above are restated as
follows:
\begin{subequations}
\label{eq:L-Q-first-order} 
\begin{align} 
L_1 
&\equiv L_{1,\self}^1 + L_{1,\self}^2 + L_{1,\higgs}^1 + L_{1,\higgs}^2
\notag \\
&= \sum_{abc} f_{abc} F_a^{\mu\nu} A_{b\mu} A_{c\nu}
+ \sum_{abc} f_{abc} m_{abc}^2 B_a^\mu A_{b\mu} \phi_c
\notag \\
&\quad + \sum_{ab} k_{ab} \bigl( B_a^\mu A_{b\mu} H
+ \phi_a A_b^\mu \del_\mu H - \half m_H^2 \phi_a \phi_b H \bigr)
+ \ell H^3;
\label{eq:L-Q-first-order-L1} 
\\
Q^\mu_1 
&\equiv Q_{1,\self}^{1\mu} + Q_{1,\self}^{2\mu} + Q_{1,\higgs}^\mu 
\notag \\
&= 2 \sum_{abc} f_{abc} F_a^{\mu\nu} u_b A_{c\nu} 
+ \sum_{abc} f_{abc} m_{abc}^2 B_a^\mu u_b \phi_c
+ \sum_{ab} k_{ab} \bigl( B_a^\mu u_b H + \phi_a u_b \del^\mu H \bigr).
\label{eq:L-Q-first-order-Q1} 
\end{align}
\end{subequations}
For memory: $f_{abc}$ is completely skewsymmetric, and if $b$ is
massless while $f_{abc} \neq 0$, then $m_a^2 = m_c^2$. In particular,
$f_{abc}\,m_{abc}^2$ vanishes whenever $a,b,c$ stand for one massive
and two massless particles. Moreover, the matrix $[k_{ab}]$ is
symmetric, and it vanishes when either $a$ or~$b$ is massless.

\smallskip

The reader should be aware that until now nothing requires the
coefficients $k_{ab},\ell$ belonging to the higgs sector to be
different from zero. Consideration of the scattering matrix \textit{at
second order} will allow us to establish that in the presence of
massive intermediate bosons these coefficients do not vanish in
general -- and to extract further relations between them. The quartic
interactions $L_2$ will be determined by the condition of string
independence at second order -- see Propositions \ref{pr:four-aces},
\ref{pr:ffkk}, \ref{pr:abcH} and~\ref{pr:aaHH} below. Consideration of
the scattering matrix at third order will show that a term
proportional to~$H^4$ was indeed needed in~$L_2$, and will allow for
the coefficient $\ell$ to be computed.

\section{Obstruction theory}
\label{sec:in-the-way}

At the second order of the $\bS$-matrix, one encounters obstructions
against string independence. These arise because time ordering does
not commute with derivatives:%
\footnote{%
One could also question whether $\dlc$ commutes with time ordering. We
assume this to be the case at tree level (which is all we need), thus
fixing the propagators involving fields $u$ or~$\del u$. With a
broader view, the commutation between $\T$ and~$\dlc$ should be
considered as a renormalization condition on loop contributions.}
in the adiabatic limit,
\begin{equation}
\dlc\Bigl( \int dx\,dx'\, \T\bigl[ L_1(x) L_1(x') \bigr] \Bigr)
= \int dx\,dx'\, \Bigl( \T\bigl[ \del_\mu Q^\mu_1(x) L_1(x') \bigr]
+ [x\otto x'] \Bigr)
\label{eq:delS-at-first} 
\end{equation}
does not vanish because
$\T\bigl[ \del_\mu Q^\mu_1(x) L_1(x') \bigr]
\neq \del_\mu \T\bigl[ Q^\mu_1(x) L_1(x') \bigr]$. Quantities of the
general form 
\begin{equation}
\T\bigl[ \del_\mu Q^\mu(x) L(x') \bigr]
- \del_\mu \T\bigl[ Q^\mu(x) L(x') \bigr]
=: \sO_\mu(Q^\mu(x); L(x')),
\label{eq:OQL} 
\end{equation}
with Wick polynomials $Q^\mu$ and $L$, are the subject of
``obstruction theory'' \cite{Rosalind}, see below. Thus, by 
subtracting $\del_\mu \T[Q^\mu_1(x) L_1(x')] + [x\otto x']$ from the
integrand in \eqref{eq:delS-at-first}, one finds that the total
obstruction at second order (that is, the failure of the integrand
of~\eqref{eq:delS-at-first} to be a derivative) is the symmetric sum
$\sO^{(2)}(x,x')$ displayed below in \eqref{eq:O2}.

As anticipated in~\eqref{eq:S2}, it is decisive that the obstruction
can be cancelled, and string independence can be restored, by adding
``induced'' interactions at second order. However, this requires that
$\sO^{(2)}(x,x')$ be of a form suitable for cancellation: it must be
``resolvable'' -- see Eq.~\eqref{eq:seconds-out}. For this to be
possible, it is instrumental, but of course not sufficient, that
expressions like \eqref{eq:OQL} contain delta functions, which they do
thanks to the subtraction of the derivative term. We aim to show that
the string independence condition (that is: resolvability of
obstructions) imposes constraints on the (until~now) free parameters
of the first-order interactions, and allows to compute the induced
interactions. Through these systematics, sQFT determines interactions.
In this way, in the vein of~\cite{Rosalind} and~\cite{MundRS23}, one
recovers the precise electroweak self-interactions.

In a more ambitious program \cite{Rehren24a}, one can establish that
the coupling terms involving the escort field not only make the
$\bS$-matrix of sQFT string-independent, but make it coincide with the
$\bS$-matrix computed by means of ordinary local fields, like in gauge
theory or the Stückelberg field method of~\cite{Aurora}, once their
unphysical degrees of freedom are eliminated.

We now set out to compute and analyze the structure of obstructions. A
tree-level analysis is sufficient for the purpose of determining
induced interactions. Computing \eqref{eq:OQL} at tree level, Wick's
theorem at the tree level entails:
\begin{equation}
\sO_\mu(Q^\mu(x); L(x')) = \sum_{\psi,\chi'} 
\pd{Q^\mu}{\psi}\, \bigl( \Tpq{\del_\mu\psi(x)}{\chi(x')}
- \del_\mu \Tpq{\psi(x)}{\chi(x')} \bigr) \,\pd{L'}{\chi'} \,.
\label{eq:the-snag} 
\end{equation}
Namely, the uncontracted contributions to the Wick expansion drop out
in the difference, and \eqref{eq:the-snag} are precisely the terms
with one contraction. To evaluate \eqref{eq:O2}, we therefore only
require the \textit{two-point obstructions} of the involved free
fields:
\begin{equation}
\sO_\mu(\psi, \chi')
:= \Tpq{\del_\mu \psi(x)}{\chi(x')} - \del_\mu\Tpq{\psi(x)}{\chi(x')}.
\label{eq:2pt-obst-def} 
\end{equation}

The two-point obstructions are computed from time-ordered two-point
functions (i.e., the ``propagators'') of the free fields and their
derivatives. The latter in turn are determined by ordinary two-point
functions, up to a certain freedom of renormalization. In local QFT,
both terms on the right-hand side of \eqref{eq:2pt-obst-def} coincide
except on the singular set $x = x'$; so that the difference is a
multiple of (a derivative of) $\dlxx \equiv \dl(x - x')$. In the 
string-localized case, these delta functions generalize to string
deltas.

\smallskip

The above is discussed in Appendix~\ref{app:two-pt-obstructions},
where the two-point obstructions are computed; we present here the
results in tabular form. The tables exhibit three real parameters
denoted $c_H$, $c_B$, $c_F$, parametrizing the ambiguity of some
propagators -- see Appendix~\ref{app:two-pt-obstructions}. The freedom
to choose the values of these parameters may be exploited later to
secure string independence.

\begin{table}[htb] 
$$
\begin{array}{@{}l||c|c@{}}
& H' & \del'^\ka H'
\\ \hline
\sO_\mu(H,\cdot) 
& 0 & i c_H \dl_\mu^\ka \dlxx
\\
\sO_\mu(\del^\mu H,\cdot) 
& i\dlxx & -i(1 + c_H) \del^\ka \dlxx
\\ \hline
\end{array}
$$
\caption{Two-point obstructions in the higgs sector}
\label{tbl:higgs-obstructions} 
\end{table}

\begin{table}[htb] 
$$
\begin{array}{@{}l||c|c|c|c@{}}
& F'^{\ka\la} & A'^\ka & B'^\ka & \phi'
\\ \hline
\sO^\mu(F_{\mu\nu},\cdot)
& -i (1 + c_F) \dl_\nu^{[\ka}\del^{\la]} \dlxx \!
& -i (\dl_\nu^\ka - I'_\nu \del^\ka) \dlxx
& -i (1 + c_B) \dl_\nu^\ka \dlxx & -i I'_\nu \dlxx
\\
\sO_{[\mu}(A_{\nu]},\cdot) \!
& -i c_F \dl_\mu^{[\ka}\dl^{\la]}_\nu \dlxx & 0 & 0 & 0 
\\
\sO_\mu(A^\mu,\cdot)
& -i I^{[\ka}\del^{\la]} \dlxx 
& -i (I^\ka - (II')\del^\ka) \dlxx \!
& -i I^\ka \dlxx & -i(II') \dlxx
\\
\sO_\mu(B^\mu,\cdot)
& 0 & 0 & -i(1 + c_B) m^{-2} \del^\ka \dlxx \! & -i m^{-2} \dlxx 
\\
\sO_\mu(\phi,\cdot) & 0 & 0 & -i c_B m^{-2} \dl_\mu^\ka \dlxx & 0 
\\
\sO_\mu(u,\cdot) & 0 & 0 & 0 & 0
\\ \hline
\end{array}
$$
\caption{Two-point obstructions in the Proca sector}
\label{tbl:Proca-obstructions} 
\end{table}

For the massless photon fields ($F,A,u$),
Table~\ref{tbl:Proca-obstructions} holds without the rows and columns
for the non-existent fields $\phi(c)$ and $B = A(c) - \del\phi(c)$.

\subsection{Resolution of obstructions} 
\label{ssc:out-damned-spot}

In order to establish string independence of the $\bS$-matrix, one
must first compute the second-order obstruction
\begin{equation}
\sO^{(2)}(x,x') := \sO_\mu(Q^\mu_1(x); L_1(x')) + [x \otto x'],
\label{eq:O2} 
\end{equation}
with the tree-level formula given in \eqref{eq:the-snag}. Then one
needs to ensure that this obstruction is ``resolvable'', that is, the
result must be of the form~\cite{Athor}:
\begin{equation}
\sO^{(2)}(x,x') = i \dlc(L_2(x)) \,\dl(x - x')
- i \gS_{xx'} \del^x_\mu \, Q_2^\mu(x,x').
\label{eq:seconds-out} 
\end{equation}
If \eqref{eq:seconds-out} can be fulfilled, on then adding to
$(i^2/2)\,g^2 \T[L_1 L'_1]$ the ``induced'' interaction
$(i/2)\,g^2 L_2$, the obstruction of the $\bS$-matrix is cancelled at
second order -- up to a derivative yielding zero in the adiabatic
limit. By \eqref{eq:the-snag}, since $S_1 = L_1$ is cubic in the
fields, $L_2$ will be quartic.

The resolution of obstructions is \emph{the} method to determine
higher-order interactions in sQFT. Not least, condition
\eqref{eq:seconds-out} will require polynomial relations among the
masses, the coefficients $f_{abc}$ in the self-couplings
$L_{1,\self}^1$, $L_{1,\self}^2$, and the $k_{ab}$, $\ell$ in the
higgs couplings $L_{1,\higgs}^1$, $L_{1,\higgs}^2$, see
\eqref{eq:L-Q-first-order-L1}. For the electroweak theory, these
determine all the cubic couplings except~$\ell$, as well as most
quartic terms. String independence at third order will ultimately
determine the coefficients of the cubic and quartic self-couplings of
the higgs field.

\begin{remk} 
\label{rk:no-string-deltas}
The induced interactions $L_2$ are determined by the parts of
\eqref{eq:O2} without string deltas. All obstructions with string
deltas must be total derivatives, contributing a part $Q_2|_{I\dl}$ to
the fields $Q_2^\mu(x,x')$. Lemma~\ref{lm:uu2} in
Appendix~\ref{app:some-lemmata} shows that this is automatically the
case. In order to show that there are no induced third-order
interactions $L_3$ (see Sect.~\ref{sec:third-order}), we need only
third-order obstructions without string deltas, to which $Q_2|_{I\dl}$
does not contribute. We may therefore ignore the string deltas in
Table~\ref{tbl:Proca-obstructions} altogether.
\end{remk}

\subsection{Preliminaries on ``crossings''} 
\label{ssc:crossing-the-bar}

We denote pieces of $\sO_\mu(Q^\mu_1,L_1')$, corresponding to the
pieces of $Q_1$ and $L_1$ in \eqref{eq:L-Q-first-order} as
\textit{crossings}, and write the latter in a somewhat more readable
way as
$$
Q \bx L' := \sO_\mu(Q^\mu; L'),
$$
keeping in mind that according to~\eqref{eq:O2} one must always add
$Q' \bx L$ to this. The full obstruction that has to be resolved is a
sum of all crossings. Each crossing may have contributions belonging
to several sectors. Because cancellations are only possible within
sectors, we shall organize these crossings by field content, that is,
by the sectors and types of the outcome, as in~\cite{Rosalind}.

\section{The electroweak theory}
\label{sec:electro-weak}

As regards the chirality of the Standard Model, we proved in
\cite{Rosalind} that the \textit{physical particle spectrum} with
specified masses \textit{forces} the couplings of the massive bosons
with the fermions to be parity-violating. The proof involved the
crossings of the fermionic interaction set $L^F_1$ with the fermionic
$Q^F_1$ and bosonic $Q_1$ operators. That is, along with the fermionic
obstructions, fermionic-bosonic ones were required there. Now, the
\textit{entire determination} of the electroweak theory from that
spectrum is finally reached in the present paper, by crossing the
bosonic $Q_1$ with the bosonic vertices in~$L_1$. (The crossing of
fermionic $Q^F_1$ with bosonic $L_1$ is inert.)

In Sects.\ \ref{sec:seconding-the-motion} and~\ref{sec:third-order},
we offer a comprehensive computation of the obstructions appearing at
second and third orders in a more general situation than is necessary
in this section: any number of massive vector bosons and photons, plus
a single higgs. String independence is achieved provided several
algebraic relations among the masses, the structure constants
$f_{abc}$, the higgs couplings $k_{ab}$, and the cubic and quartic
higgs self-couplings $\ell$, $\ell'$ are satisfied. (The term
$\ell' H^4$ must be introduced at second order to cancel an upcoming
obstruction at third order.)

In electroweak force theory not all those relations are needed, and
some are redundant. We just anticipate a few pertinent ones from the
general setup, enough to determine the bosonic interactions of the
theory, whose input in sQFT is just \textit{the particle content},
namely the massless photon $A$, three massive vector bosons
$W_1,\,W_2,\,Z$, and the higgs~$H$.

It is assumed that the photon couples to the $W$-bosons, that is,
$f_{12A} \neq 0$. Hence by Eq.~\eqref{eq:massless-b} those have equal
masses $m_W$. It is also assumed that $m_Z \neq m_W$, whereby
$f_{AZ1} = f_{AZ2} = 0$, again by Eq.~\eqref{eq:massless-b}.
(Even so, the limiting case $m_Z = m_W$, also with
$f_{AZ1} = f_{AZ2} = 0$, does yield an acceptable model, but with a
completely decoupled photon.) Thus, $f_{abc}$ are the structure
constants of the Lie algebra $\g = \gu(2)$ in a suitable basis. (The
Jacobi identity does not constrain $f_{12A}$ and $f_{12Z}$.) We shall
also assume that $f_{12Z} \neq 0$. For the possibility $f_{12Z} = 0$,
not realized in Nature, see Remark~\ref{rk:fZ0}.

By a change of basis in the $1$-$2$ mass eigenspace, one may achieve
$k_{12} = 0$; however, the relations $k_{1Z} = k_{2Z} = 0$ are not
assumed for the higgs couplings: that will follow from string
independence. Therefore, the only nontrivial coupling coefficients at
first order are $f_{12A}$ and $f_{12Z}$, the symmetric matrix of higgs
couplings $k_{ab}$ (with $a,b = 1,2,Z$), and the higgs
self-coupling~$\ell$. From the principles of sQFT and that particle
content, we \textit{predict} all the characteristic relations of the
electroweak interactions usually ascribed to gauge theory and
``spontaneous symmetry breaking'' -- none of these constructs have a
place in~sQFT.

We first work out the consequences of the string independence
condition from Proposition~\ref{pr:abcH} further down, to the effect
that
$$
C_{abc} := \sum_{e=1,2,Z}
\frac{k_{ae}}{m_e^2} f_{ebc} m_{ebc}^2 + [a\otto c]
= C_{bca} = C_{cab} \quad (\text{for } a,b,c = 1,2,Z,A).
$$
Since $C_{abc}$ vanishes identically for massless $a$ or~$c$ -- see
Remark~\ref{rk:aurora}(i) -- it must also vanish for massless~$b$.
With $f_{ZAc} = 0$ for all~$c$, the condition $C_{ZA1} = 0$ yields
$k_{Z2} f_{2A1} m_{2A1}^2 = 0$, which implies $k_{Z2} = 0$. Similarly,
$k_{Z1} = 0$. Thus $k_{ab} =: k_a \dl_{ab}$ is diagonal. Then
condition $C_{1A2} = 0$ implies $k_1 = k_2$, and:
\begin{equation}
C_{12Z} = C_{2Z1} = C_{Z12}
\implies \frac{k_1}{m_W^2} = \frac{k_Z}{m_Z^2}
\implies k_a = K m_a^2
\label{eq:k-propto-m} 
\end{equation}
with some constant $K$ (having dimension of inverse mass). In
particular, $k_1 = k_2 =: k_W$. 

\smallskip

With this information, we anticipate and interpret the ``sum rule''
\eqref{eq:sum-rule}, a special case of Eq.~\eqref{eq:ff-kk}. For
$a = 1$ and $b = c = 2$, resp.\ $b = c = Z$, one finds:
\begin{align*}
(f_{12A})^2 \bigl( 4m_W^2 \bigr)
+ (f_{12Z})^2 \bigl( 4m_W^2 - 3m_Z^2 \bigr)
&\stackrel!= k_W^2 = K^2 m_W^4\,,
\\
\shortintertext{and}
(f_{12Z})^2 \biggl[ 
2m_Z^2 + \frac{(m_W^2 - m_Z^2)^2 - m_W^4}{m_W^2} \biggr]
\equiv (f_{12Z})^2 \frac{m_Z^4}{m_W^2} 
&\stackrel!= k_W k_Z = K^2 m_W^2 m_Z^2\,.
\end{align*}
These imply the equalities
\begin{equation}
\frac{(f_{12A})^2}{(f_{12Z})^2} = \frac{m_Z^2 - m_W^2}{m_W^2}
\word{and} f_{12A}^2 + f_{12Z}^2 = K^2 m_W^2\,.
\label{eq:ff-mm} 
\end{equation}
In particular, $m_W < m_Z$ must hold. 

The above results imply that the stronger conditions in
\eqref{eq:ff-kk} are identically satisfied for all $a,b,c,d$. Notice
that $K \neq 0$ unless all couplings are zero; thus, \textit{string
independence imposes the need for the higgs}. The narrative of
``spontaneous symmetry breaking'' is nowhere required. Together with
the higgs self-couplings that will be given by~\eqref{eq:ell-ell'} in
Proposition~\ref{pr:ell-ell'}, the previous relations secure string
independence at all orders.

\smallskip

Let us now compare these results to the Glashow--Weinberg--Salam
model, see~\cite{Schwartz14}. In terms of $\ups := 1/gK$ and
$\la := g^2 K^2 m_H^2/4$, result \eqref{eq:ell-ell'} becomes
\begin{align}
g \ell H^3 + \half g^2 \ell' H^4 = -\half\la (4\ups H^3 + H^4),
\label{eq:higgs-potential} 
\end{align}
which can be written as
$$
\half m_H^2 H^2 - \half\la \bigl( (\ups + H)^2 - \ups^2 \bigr)^2.
$$
In the GWS model, this expression is interpreted as the contribution
of the ``higgs potential'' to the interaction Lagrangian. In sQFT,
such a construct is devoid of meaning (only the interaction part makes
sense). In our treatment, which starts from the particle content of
the theory in order to determine the interactions, it makes sense to
define the Weinberg angle $\Theta$ via the mass ratio:
\begin{align}
\cos\Theta &:= \frac{m_W}{m_Z}\,.
\label{eq:Weinberg-angle} 
\\
\shortintertext{This implies}
\frac{f_{12A}}{f_{12Z}} &= \tan\Theta
\label{eq:ff-Weinberg} 
\end{align}
up to signs that can be absorbed in redefinitions $A \mapsto -A$ or
$Z \mapsto -Z$. 

\goodbreak 

By contrast, in the GWS model the photon field is the gauge field for
the unbroken $U(1)$. That determines the Weinberg angle of the
corresponding rotation in terms of the two gauge coupling constants
for $U(1)$ and $SU(2)$, namely $\tan\Theta := g_1/g_2$. The same
relations \eqref{eq:Weinberg-angle} follow, but here the egg and the
chicken are interchanged. For the sake of comparison, let us identify
the $WWZ$-coupling $g f_{Z12} F_Z^{\mu\nu} W_{1\mu} W_{2\nu}$ of~sQFT
with the corresponding term in the $U(1) \x SU(2)$ Yang--Mills
self-interaction. This amounts to identifying $g f_{12Z}$ with
$-\half g_2 \eps_{123} \cos\Theta$. We may then exploit the freedom of
rescaling $g \mapsto s g$, $K \mapsto s^{-1} K$ to identify $g$ with
$-g_2$, the $SU(2)$ coupling constant.%
\footnote{This choice is in accord with the conventions in
\cite{Rosalind}, producing \cite[Eq.~(4.4)]{Rosalind}, crucially used
in that paper.}
Then the preceding relations imply
$$
f_{12Z} = \half \cos\Theta, \quad
f_{12A} = \half \sin\Theta, \quad
g k_a = g K m_a^2 = \ups^{-1} m_a^2, \quad
m_W^2 = (4K^2)^{-1} = \quarter g_2^2 \ups^2.
$$
This completes the identification between the input parameters of
sQFT: $g$, $m_W = m_Z\cos\Theta$, $m_Z$, $m_H$, with those of the
gauge theory approach: $g_2, g_1 = g_2\tan\Theta, \la, \ups$ in the
bosonic sector of the electroweak theory.

\smallskip

With the standard notations for the particle-antiparticle pair
$W_\pm := (W_1 \mp W_2)/\sqrt{2}$, upon using $B = A + \del\phi$ and
dropping all couplings involving the escort field~$\phi$, there
emerges the total electroweak interaction of sQFT, consisting of:
\begin{itemize}[topsep=3pt, itemsep=0pt]
\item
the cubic self-couplings 
\begin{align*}
& \half g \biggl[ 
\sin\Theta \sum_{abc=1,2,A} \eps_{abc} F_a^{\mu\nu} A_{b\mu} A_{c\nu}
+ \cos\Theta \sum_{abc=1,2,Z} \eps_{abc} F_a^{\mu\nu} A_{b\mu} A_{c\nu}
\biggr]
\\[\jot]
&\quad = \ihalf g_2 \bigl(
F_3^{\mu\nu} W_\mu^+ W_\nu^- + F^{+\mu\nu} W^-_\mu W_{3\nu}
+ F^{-\mu\nu} W_{3\mu} W^+_\nu \bigr),
\end{align*}
where we abbreviate $W_3 := A \sin\Theta + Z \cos\Theta$, a
combination of fields of different masses \cite{Rosalind, Schwartz14},
so that the unit of electric charge%
\footnote{Hence $\Theta = 0$ in the limiting case $m_Z = m_W$, so that
$e = 0$ and the photon decouples.}
is $e = g_2 \sin\Theta$;
\item
the quartic self-couplings 
\begin{align*}
& -\frac{g^2}{4}
\sum_{abcde=1,2,A,Z} f_{abc} f_{ade} (A_b A_d)(A_c A_e)
\\
&= -\half g_2^2 \bigl[ (W^+ W^-)^2 - (W^+ W^+)(W^- W^-) 
+ 2(W^+ W^-)(W_3 W_3) - 2(W^+ W_3)(W^- W_3) \bigr];
\end{align*}
\item
the couplings to the higgs field:
$$
g \bigl[ 2k_W\, W_\mu^+ W^{-\mu} + k_Z Z_\mu Z^\mu \bigr] H
= \ups^{-1}\bigl[ 2m_W^2 W_\mu^+ W^{-\mu} + m_Z^2 Z_\mu Z^\mu \bigr]H;
$$
\item
and finally, the higgs self-couplings \eqref{eq:higgs-potential}.
\end{itemize}
Beyond the displayed terms, there are several couplings involving
escort fields of several types, which have no place in gauge theory --
much as couplings involving ghost fields have no place in~sQFT.
Replacing $A(c)$ by the gauge potential $A$ and ignoring the escort
couplings (whose role is to ensure that the $\bS$-matrix computed with
$A(c)$ is the same as that computed with~$A$), the bosonic interaction
Lagrangian of gauge theory would be recovered.%
\footnote{
\label{fn:cB}
However, a term of type $AAHH$ is missing, 
due to the use of the non-kinematic propagator with $c_B = -1$. The
coupling would be present with $c_B = 0$ via $L_2^*$, see
Appendix~\ref{aps:cH-cB}.}

We recall that the leptonic sector was completed already
in~\cite{Rosalind}. (The only technical difference being the use
in~\cite{Rosalind} of lightlike strings which need no smearing.)

\begin{remk} 
\label{rk:fZ0}
There is another solution with the field content $A$, $W_1$, $W_2$,
$Z$, compatible with all constraints imposed by string independence,
if one were to admit $f_{12Z} = 0$ while $f_{12A} \neq 0$. In physical
terms, the $W$-bosons would be electrically charged, but they would
not couple to the $Z$-boson. The constraints are solved with
$k_{1Z} = k_{2Z} = k_{ZZ} = 0$. Hence the $Z$-boson decouples
completely, and \eqref{eq:ff-Weinberg} does \textit{not} hold. The
resulting admissible theory is a model of a photon and a charged pair
of massive $W$-bosons with a higgs field, tensored with a ``massive
QED'' in which the massive $Z$-boson replaces the photon.
\end{remk}

\section{The boson sector at second order}
\label{sec:seconding-the-motion}

As mentioned, we reorganize the various bosonic crossings sector by
sector (and by~type). The analysis is not restricted to the
electroweak theory only, but is presented with arbitrary given numbers
of massive and massless vector bosons -- limited, however, to only one
higgs particle. The generalization to more than one higgs is not
difficult at second order, cf.~\cite{Aurora}.

\subsection{The Yang--Mills-like sector}
\label{ssc:soltando-carga}

We begin with the main \textit{structural} result at second order: the
Jacobi identity for the completely skewsymmetric coefficients
$f_{abc}$ in \eqref{eq:bosons-mate} is a necessary condition for
string independence at second order. It follows that $f_{abc}$ are
actually the structure constants of a reductive Lie algebra of compact
type (i.e., with negative semidefinite Killing form).

This result follows from the resolution of the obstructions of types
$uFAA$ and $\del uAAA$ in the higgs-free sector. Only the crossings of
$Q^1_{1,\self}$ with $L^1_{1,\self}$ can produce these types. The case
of only massless vector bosons (like QED) was exhaustively examined in
\cite{Borisov} and~\cite{Phocaea}, following~\cite{Gass22a}. One
finds, as~well, a quartic induced interaction $L_2^1$ of type $AAAA$,
familiar from gauge theory. Interestingly enough, this standard
outcome remains valid when massive vector bosons are present; indeed,
the proof is essentially the same as in the massless case.

Let us first look, then, at the types $uFAA$ and $\del uAAA$ in the
higgs-free sector.

\begin{prop} 
\label{pr:Jacobi}
String independence requires that the coefficients $f_{abc}$ of the
Yang--Mills type cubic self-coupling \eqref{eq:bosons-mate} satisfy
the Jacobi identity. Thus, they are the structure constants of a
reductive Lie algebra of compact type, that is, a direct sum of
abelian and simple compact Lie algebras.
\end{prop}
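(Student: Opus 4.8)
The plan is to compute explicitly the obstruction $\sO^{(2)}(x,x')$ restricted to the types $uFAA$ and $\del u AAA$ in the higgs-free sector, and to read off from the resolvability condition \eqref{eq:seconds-out} a constraint on the $f_{abc}$ that is precisely the Jacobi identity. First I would note that, by \eqref{eq:the-snag}, only contractions contributing to the types $uFAA$ or $\del u AAA$ matter; since $S_1 = L_1$ is cubic and $Q_1$ likewise, only the crossing $Q^{1\mu}_{1,\self} \bx L^1_{1,\self}$ can produce a factor $u_e$ together with three further vector-field factors. Using $Q^{1\mu}_{1,\self} = 2\sum f_{abc} F_a^{\mu\nu} u_b A_{c\nu}$ and $L^1_{1,\self} = \sum f_{a'b'c'} F_{a'}^{\rho\sg} A_{b'\rho} A_{c'\sg}$, the relevant two-point obstructions to contract are the entries of Table~\ref{tbl:Proca-obstructions} in which an $F$ or $A$ from the $Q$-vertex pairs with an $F$ or $A$ from the $L$-vertex; the surviving piece is essentially $\sO^\mu(F_{\mu\nu},A'^\ka)$ and $\sO_{[\mu}(A_{\nu]},F'^{\ka\la})$, both of which are (string-)delta-valued. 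After this single contraction one is left with $u_e$ times a Wick monomial cubic in the $A_a$'s, multiplied by a bilinear-in-$f$ coefficient.

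Next I would organize the result. Stripping off the overall $u_e$ and the string delta / delta, the coefficient of a given Wick monomial $A_b^\rho A_c^\sg A_d^\tau$ (with appropriate index contractions, either of type $F A A$ with one curl, or of type $\del A\,AAA$) is a sum of products $f_{\cdot e\cdot}f_{\cdot\cdot\cdot}$. Resolvability \eqref{eq:seconds-out} demands that the part without string deltas equal $i\,\dlc(L_2)\,\dlxx$ for some quartic $L_2$; since $\dlc$ of a quartic Wick polynomial in the $A$'s and $\phi$'s produces only specific structures (one $\del u$ or one $u$), matching the $\del u AAA$ and $u F A A$ pieces forces the purely antisymmetrized combination of the bilinear-$f$ coefficients — the one that cannot be written as $\dlc$ of anything — to vanish. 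That antisymmetrized combination is, up to normalization, $\sum_e (f_{abe}f_{ecd} + f_{bce}f_{ead} + f_{cae}f_{ebd})$, i.e.\ the Jacobi identity. In parallel, the part that \emph{is} of the form $\dlc(L_2)$ identifies the induced quartic $L_2^1 \propto \sum f_{abe}f_{cde}(A_aA_c)(A_bA_d)$, the familiar gauge-theory $AAAA$ term; I would record this but the Jacobi identity is the assertion to be proven.

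The hard part will be the bookkeeping: separating, among all the $uFAA$ and $\del uAAA$ contributions, the piece that is genuinely a $\dlc$ (hence absorbable into $L_2$ by \eqref{eq:seconds-out}) from the piece that is an irreducible obstruction forcing a constraint on $f$ — in other words, showing that the \emph{only} obstruction in these types is the total-antisymmetrization of $\sum_e f f$, and that everything else either cancels by the complete skewsymmetry of $f_{abc}$ (Proposition~\ref{pr:boson-sector}) or is resolved by $L_2^1$. Here I would lean on the fact, stated in the text, that for purely massless vector bosons this computation was carried out in \cite{Borisov, Gass22a, Phocaea}; the new input is merely that the extra mass-dependent terms in $Q^1_{1,\self}$ and in the equations of motion \eqref{eq:eom} do not feed into the types $uFAA$ or $\del uAAA$ (the $m_a^2$-terms in \eqref{eq:discard} produce $B A$ or $\phi$ structures, not $AAA$), so the coefficient of the irreducible obstruction is mass-independent and coincides with the massless case. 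Once the Jacobi identity is established, the structural conclusion is immediate: a real Lie algebra whose structure constants are totally antisymmetric in an orthonormal basis has negative semidefinite Killing form, hence (by the standard Levi/compactness argument) is reductive of compact type, a direct sum of abelian and simple compact ideals.
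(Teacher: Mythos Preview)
Your approach is essentially the same as the paper's, and the strategic outline is correct: isolate the types $uFAA$ and $\del u\,AAA$ in the crossing $Q^{1}_{1,\self}\bx L^{1\prime}_{1,\self}$, separate the resolvable part (which fixes the induced quartic $L_2^1$) from the irreducible part, and read off the Jacobi identity from the latter. Your remark that the mass-dependent pieces of $Q_{1,\self}$ and $L_{1,\self}$ feed only into $B$- or $\phi$-types, hence do not interfere, is also on target, as is the final structural argument for the compact-type reductive Lie algebra.

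There is, however, one concrete omission. You list only the pairings $\sO^\mu(F_{\mu\nu},A'^\ka)$ and $\sO_{[\mu}(A_{\nu]},F'^{\ka\la})$ as ``the surviving piece'', but the pairing
\[
\sO_\mu(F^{\mu\nu},F'^{\ka\la})
= -i(1+c_F)\,\dl_\nu^{[\ka}\del^{\la]}\dlxx
\]
is equally essential. It is the only contraction producing a \emph{derivative} of the delta; after integration by parts it supplies (i) the total derivative that becomes part of $Q_2^\mu|_\dl$, (ii) the $\del u\,AAA$ piece that is exactly $\dlc(L_2^1)$, and (iii) a residual $uFAA$ term with coefficient proportional to $(1+c_F)\,f_{abc}f_{aed}$. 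This last piece must be combined with the $c_F\,f_{dea}f_{abc}$ coming from the $A$--$F'$ pairing: the $c_F$-parts cancel, and only then does the remaining coefficient reduce to the clean, $c_F$-independent Jacobiator $f_{abe}f_{acd}+f_{abd}f_{aec}+f_{abc}f_{ade}$. If you omit the $F$--$F'$ contraction, the irreducible obstruction you extract will depend on the renormalization parameter $c_F$ and will not be the Jacobi identity. So the bookkeeping you flag as ``the hard part'' must include this third pairing.
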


\begin{prop} 
\label{pr:four-aces}
String independence determines the quartic self-coupling to be of the 
form
\begin{equation}
L_2^1 = -2(1 + c_F) \sum_{abcde} f_{abc} f_{ade} (A_b A_d) (A_c A_e).
\label{eq:four-aces} 
\end{equation}
\end{prop}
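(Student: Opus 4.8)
The plan is to resolve the second-order obstruction in the higgs-free sector restricted to the types $uFAA$ and $\del u AAA$, which by the discussion before Proposition~\ref{pr:Jacobi} can only come from crossing $Q^{1\mu}_{1,\self} = 2\sum f_{abc} F_a^{\mu\nu} u_b A_{c\nu}$ with $L^1_{1,\self} = \sum f_{abc} F_a^{\mu\nu} A_{b\mu} A_{c\nu}$. Using the tree-level formula~\eqref{eq:the-snag} together with the two-point obstructions of Table~\ref{tbl:Proca-obstructions}, the only contractions that survive are those pairing an $F$ or $A$ in $Q_1$ with an $F$ or $A$ in $L_1'$; the relevant entries are $\sO^\mu(F_{\mu\nu},F'^{\ka\la}) = -i(1+c_F)\,\dl_\nu^{[\ka}\del^{\la]}\dlxx$ and $\sO^\mu(F_{\mu\nu},A'^\ka)=-i(\dl_\nu^\ka - I'_\nu\del^\ka)\dlxx$ (and, symmetrically, contractions with $F$ or $A$ sitting in the second slot). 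The $u$ fields have vanishing two-point obstructions, so $u_b$ is a spectator and the outcome is genuinely of type $u\,(\text{stuff})$; after using the delta function to set $x=x'$ one gets a Wick polynomial of type $uFAA$ together with derivative terms, plus string-delta pieces of type $\del u AAA$ controlled by Remark~\ref{rk:no-string-deltas} and Lemma~\ref{lm:uu2}.

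First I would assemble the $uFAA$ part of $\sO^{(2)}$ explicitly. The coefficient structure will be a bilinear expression $\sum f_{abc} f_{a'b'c'}$ with the contraction identifying one index from each triple; after relabelling and using complete skewsymmetry of $f_{\8\8\8}$, the coefficient of the independent Wick monomial $u_e\,F^{\mu\nu}_a A_{b\mu} A_{c\nu}$ (suitably symmetrized) will organize into a sum $\sum_a (f_{?a?}f_{?a?} + \text{cyclic})$ — precisely the combination $\sum_a f_{abe}f_{acd} + f_{abd}f_{ace} + \dots$ that vanishes iff the Jacobi identity holds. This reproduces Proposition~\ref{pr:Jacobi}. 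Granting that, the residual obstruction of type $uFAA$ is \emph{not} identically zero: there is a leftover proportional to $(1+c_F)$ coming from the $F\!-\!F$ contraction $\sO^\mu(F_{\mu\nu},F'^{\ka\la})$, which, unlike the $F\!-\!A$ contraction, does not conspire into a Jacobi combination. This leftover must be cancelled by $i\,\dlc(L_2^1)\,\dlxx$ in~\eqref{eq:seconds-out}.

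Next I would read off $L_2^1$ by inverting the relation $\sO^{(2)}|_{uFAA} \eqder i\,\dlc(L_2^1)\dlxx$ modulo derivatives. The natural quartic Ansatz of type $AAAA$ is $L_2^1 = \la \sum_{abcde} f_{abc}f_{ade}(A_bA_d)(A_cA_e)$ with an undetermined constant $\la$; applying $\dlc$ via $\dlc(A_\mu)=\del_\mu u$ and the same ``$XY\del u \simdel -(\del X\,Y + X\,\del Y)u$'' manipulation used in the proof of Proposition~\ref{pr:boson-sector}, together with $\del^{[\mu}A^{\nu]}=F^{\mu\nu}$ and the equation of motion $\del_\mu A^\mu = -m^2\phi$, produces a $uFAA$-term plus $u\phi\cdots$ and $uAAA$ remainders; the $uFAA$-part matches the leftover obstruction precisely when $\la = -2(1+c_F)$, giving~\eqref{eq:four-aces}. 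I would finally check that the $u\phi AA$ and higher remainders from $\dlc(L_2^1)$ either are themselves total derivatives/string deltas or are cancelled by obstructions in the adjacent sectors, so that resolvability~\eqref{eq:seconds-out} holds with this $L_2^1$, and that no other independent quartic invariant of type $AAAA$ is needed (the $f\!f$ contraction of two structure constants spans the relevant invariants once Jacobi holds, by the standard argument in~\cite{Borisov,Phocaea,Gass22a}).

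The main obstacle I expect is bookkeeping rather than conceptual: correctly tracking the index contractions and the symmetrization $\gS_{xx'}$ so that the $u_e\,(\dots)$ coefficients group into exactly the Jacobi combination for the $F\!-\!A$ piece while the $F\!-\!F$ piece survives with coefficient $(1+c_F)$ — sign and combinatorial-factor errors here are easy to make and would spoil the identification $\la=-2(1+c_F)$. A secondary subtlety is justifying that the $\del uAAA$ string-delta terms are harmless; this is exactly what Remark~\ref{rk:no-string-deltas} and Lemma~\ref{lm:uu2} are there to supply, so I would simply cite them rather than re-derive. Everything else — commuting $\dlc$ with $\T$ at tree level, commuting derivatives with string integrations — is already licensed by the setup in Section~\ref{sec:in-the-way}.
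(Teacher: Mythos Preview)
Your overall strategy is right, but you have misidentified where the induced interaction comes from, and this would cause the argument to fail as written.

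You claim that after the Jacobi identity is imposed, there remains a residual obstruction of type $uFAA$ proportional to $(1+c_F)$ from the $F$--$F'$ contraction, and that $L_2^1$ is determined by matching this residual against the $uFAA$-part of $\dlc(L_2^1)$ after integration by parts. This is not what happens. In the paper's computation, the $F$--$F'$ contraction produces terms with a \emph{derivative} of the delta function; after pulling the derivative outside (as in~\eqref{eq:minor-crossing}), one obtains three pieces: a total derivative, a term of type $\del u\,AAA$, and a $uFAA$ term with coefficient $(1+c_F)$. The last of these joins the two $uFAA$ terms from the $F$--$A'$ contractions \emph{and} the $c_F$-term from the $A$--$F'$ contraction $\sO_{[\mu}(A_{\nu]},F'^{\ka\la})$ (which you only mention in passing); the $c_F$-parts cancel, and the \emph{whole} $uFAA$ content becomes the Jacobi combination~\eqref{eq:Jacobiator}. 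Once Jacobi holds, there is \emph{no} $uFAA$ residual at all.

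What survives is the $\del u\,AAA$ term in~\eqref{eq:minor-crossing}, and this is \emph{directly} the string variation of an $AAAA$ polynomial --- see the identity~\eqref{eq:variation}: $f_{abc}f_{aed}\,\dlc[(A_eA_b)(A_cA_d)] = 4 f_{abc}f_{aed}\,\del^\al u_e\,A_{b\al}(A_cA_d)$. No integration by parts on $\dlc(L_2^1)$ is needed or wanted. Your proposed route (IBP on $\dlc(L_2^1)$ to produce $uFAA + u\phi AA + \cdots$) would generate $u\phi AA$ terms carrying factors $m_b^2$ that have nothing to cancel against in $Q^1_{1,\self}\bx L^{1\prime}_{1,\self}$ (which contains no $\phi$-fields), so you would then have to show these are separately total derivatives --- extra work that the direct matching avoids entirely. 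The clean statement is: Jacobi kills $uFAA$; the $\del u\,AAA$ piece equals $-i(1+c_F)\,\dlc(L_2^1)\,\dlxx$ with $L_2^1$ as in~\eqref{eq:four-aces}; and the factor~$2$ comes from adding $[x\otto x']$.
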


\begin{proof} We prove both propositions by the same analysis. (The 
notation $L_2^1$ anticipates that the induced interaction $L_2$ is a 
sum of several pieces.)

\paragraph{Step 1: the Jacobi identity.}
Complete skewsymmetry of $\{f_{abc}\}$ is proved in
Proposition~\ref{pr:boson-sector}. To establish the Jacobi identity,
we study the obstruction to string independence arising from
the crossing 
$Q_{1,\self}^1 \bx L_{1,\self}^1 = 2 f_{def} F_d^{\mu\nu} A_{f\nu} u_e
\bx f_{abc} F_a'^{\al\bt} A'_{b\al} A'_{c\bt}$. It contains terms of
type $uFAA$ and $\del uAAA$ that cannot arise from other crossings and
must therefore be separately resolvable. In view of Lemma~\ref{lm:uu2}
we may drop the contribution with a string delta. The crossing at hand
involves three of the obstructions in
Table~\ref{tbl:Proca-obstructions}:
\begin{align}
\sO_\mu\bigl( F^{\mu\nu}, F'^{\al\bt} \bigr) 
&= -i(1 + c_F) \bigl( 
\eta^{\al\nu} \del^\bt - \eta^{\bt\nu} \del^\al \bigr)\,\dlxx,
\label{eq:gravity-sucks} 
\\
\sO_\mu(F^{\mu\nu}, A'_\al) &\eqIdl -i \dl^\nu_\al \,\dlxx,
\notag \\
\sO_{[\mu}(A_{\nu]}, F'^{\al\bt})
&= -i c_F (\dl_\mu^\al \dl_\nu^\bt - \dl_\mu^\bt \dl_\nu^\al) \,\dlxx.
\notag
\end{align}
When pairing $F_d^{\mu\nu} A_{f\nu}$ with $A'_{b\al}$, on writing
$2 F_d^{\mu\nu} A_{f\nu} 
= F_d^{\mu\nu} A_{f\nu} - F_d^{\la\mu} A_{f\la}$, the relevant
obstruction is $\sO_\mu(A_\nu, A'_\al) - \sO_\nu(A_\mu, A'_\al)
= \sO_{[\mu}(A_{\nu]}, A'_\al) = 0$; so we omit this trivial pairing.
Now observe that
\begin{align}
\MoveEqLeft{
Q^1_{1,\self} \bx L_{1,\self}^{1\prime} 
= \sO_\mu(2 f_{def} F_d^{\mu\nu} A_{f\nu}\, u_e;
f_{abc} F_a'^{\al\bt} A'_{b\al} A'_{c\bt})}
\notag \\
&= 2 f_{abc}\, u_e \bigl[
f_{aef} A_{f\nu} \,\sO_\mu(F_d^{\mu\nu}, F_a'^{\al\bt})\,
A'_{b\al} A'_{c\bt}
\notag \\
&\qquad + f_{bef} A_{f\nu} \,\sO_\mu(F_d^{\mu\nu}, A'_{b\al})\,
F_a'^{\al\bt} A'_{c\bt}
+ f_{cef} A_{f\nu} \,\sO_\mu(F_d^{\mu\nu}, A'_{c\bt})\,
F_a'^{\al\bt} A'_{b\al} 
\notag \\
&\qquad + \half f_{dea} F_d^{\mu\nu}
\,\sO_{[\mu}(A_{f\nu]}, F_a'^{\al\bt})\,A'_{b\al} A'_{c\bt} \bigr]
\notag \\
&\eqIdl 2i f_{abc}\, u_e \bigl[ -(1 + c_F) f_{aed} A_{d\nu} \bigl(
A_b'^\nu A'_{c\bt} \, \del^\bt \dlxx
- A'_{b\al} A_c'^\nu \, \del^\al \dlxx \bigr)
\notag \\
&\qquad - f_{bed} A_{d\nu} F_a'^{\nu\bt} A'_{c\bt} \,\dlxx 
- f_{ced} A_{d\nu} F_a'^{\al\nu} A'_{b\al} \,\dlxx 
- c_F f_{dea} F_d^{\mu\nu} A'_{b\mu} A'_{c\nu} \dlxx \bigr].
\label{eq:major-crossing} 
\end{align}
The first line on the right-hand side of \eqref{eq:major-crossing}
consists of two identical terms, summing to:
\begin{align}
& 4i(1 + c_F) f_{abc} f_{aed}\,u_e (A'_c A_d) A'_{b\al} \del^\al \dlxx
\notag \\
&= 4i(1 + c_F) f_{abc} f_{aed} \bigl\{
\del^\al \bigl[ u_e A'_{b\al} (A'_c A_d) \dlxx \bigr]
- \del^\al u_e A'_{b\al} (A'_c A_d) \dlxx \bigr\}
\notag \\
&\quad - 2i(1 + c_F) f_{abc} f_{aed}\,
u_e F_d^{\al\bt} A'_{b\al} A'_{c\bt} \dlxx \,.
\label{eq:minor-crossing} 
\end{align}
To this expression, we add the last line of \eqref{eq:major-crossing}.

\smallskip

To achieve string independence, the final result
of~\eqref{eq:major-crossing} must be the sum of a total derivative and
the string variation of an induced quartic interaction
\cite{MundRS23, Athor}. The first term on the right-hand side of
\eqref{eq:minor-crossing} is a total derivative, and the second one is
a string variation. Indeed:
\begin{align}
f_{abc} f_{aed}\, \dlc \bigl[ (A_e A_b) (A_c A_d) \bigr] 
&= f_{abc} f_{aed}\, \dlc[A_e^\al] A_{b\al} (A_c A_d)
+ f_{aed} f_{abc}\, \dlc[A_b^\al] A_{e\al} (A_d A_c)
\notag \\
&\quad + f_{ade} f_{acb}\, \dlc[A_c^\al] A_{d\al} (A_e A_b)
+ f_{acb} f_{ade}\, \dlc[A_d^\al] A_{c\al} (A_b A_e)
\notag \\
&= 4\,f_{abc} f_{aed}\, \dlc[A_e^\al] A_{b\al} (A_c A_d),
\label{eq:variation} 
\end{align}
using symmetry $(b,c) \otto (e,d)$, and skewsymmetry $(b \otto c)$ and
$(e \otto d)$ to verify that the four summands are all equal. With
$\dlc[A_e^\al] = \del^\al u_e$, this is indeed the second term in
\eqref{eq:minor-crossing} without the factor $-i(1 + c_F) \dlxx$.

\smallskip

It remains to consider the sum of the last term in
\eqref{eq:minor-crossing} and the last line of
\eqref{eq:major-crossing}. Notice the cancellation of the $c_F$-parts.
Dropping the primes in the presence of $\dlxx$, one gets:
\begin{align}
\MoveEqLeft{
-2i f_{abc} u_e \bigl( f_{bed} F_a^{\nu\bt} A_{d\nu} A_{c\bt}
+ f_{ced} F_a^{\al\nu} A_{b\al} A_{d\nu} 
+ f_{aed} F_d^{\al\bt} A_{b\al} A_{c\bt} \bigr) \dlxx}
\notag \\
&= -2i u_e \bigl(
f_{bca} f_{bed} F_a^{\al\bt} A_{d\al} A_{c\bt}
+ f_{cab} f_{ced} F_a^{\al\bt} A_{b\al} A_{d\bt} 
+ f_{abc} f_{aed} F_d^{\al\bt} A_{b\al} A_{c\bt} \bigr) \dlxx
\notag \\
&= -2i u_e \bigl(
f_{acd} f_{aeb} F_d^{\al\bt} A_{b\al} A_{c\bt}
+ f_{adb} f_{aec} F_d^{\al\bt} A_{b\al} A_{c\bt} 
+ f_{abc} f_{aed} F_d^{\al\bt} A_{b\al} A_{c\bt} \bigr) \dlxx
\notag \\
&= 2i \bigl(
f_{abe} f_{acd} + f_{abd} f_{aec} + f_{abc} f_{ade} \bigr)
u_e F_d^{\al\bt} A_{b\al} A_{c\bt} \,\dlxx \,.
\label{eq:Jacobiator} 
\end{align}
Since this expression is neither a derivative not a string variation,
it must vanish in order to achieve string independence. Therefore, the
coefficient in parentheses must vanish -- yielding precisely the
Jacobi identity for the coefficients $f_{abc}$.

\paragraph{Step 2: the Lie algebra.}
The formula $[\xi_a, \xi_b]:= \sum_c f_{abc}\,\xi_c$ defines a Lie
algebra~$\g$. Because the $f_{abc}$ are completely skewsymmetric, the
scalar product $(\xi_a, \xi_b) := \dl_{ab}$ is invariant under the
adjoint representation:
$([\xi, \xi'], \xi'') + (\xi', [\xi, \xi'']) = 0$ for
$\xi,\xi',\xi'' \in \g$. Because this scalar product is nondegenerate,
the adjoint representation is completely reducible; i.e., $\g$~is
reductive. And since the $f_{abc}$ are real, the adjoint representers
$\ad(\xi_b)_{ac} = f_{abc}$ are skewsymmetric, the adjoint Casimir
operator $\sum_a \ad(\xi_a)^2$ is negative semidefinite, thus, $\g$ is
of compact type. Proposition~\ref{pr:Jacobi} is proved.

\paragraph{Step 3: the induced interaction.}
In order to determine the induced interaction $L_2$ such that
$L_\intl = g L_1 + \half g^2 L_2$, one must now add to the string
variation term in the obstruction \eqref{eq:major-crossing} the
(identical) one from the symmetrized crossing $x \otto x'$, and equate
the result with $i \dlc(L_2)\,\dlxx$, see formula
\eqref{eq:seconds-out}. Thus $-2(1 + c_F)$ multiplied by the result
of~\eqref{eq:variation} produces $\dlc(L^1_2)$. Adjusting the
labelling of indices, this yields \eqref{eq:four-aces}, concluding
the proof of Proposition~\ref{pr:four-aces}.
\end{proof}

Note that the derivative terms discarded along the way are still
needed, since they contribute to $Q_2|_\dl$ -- see formula
\eqref{eq:Q2-explicit} below.

\begin{remk} 
\label{rk:massless-Lie-subalgebra}
By the key equation \eqref{eq:massless-b}, the ``massless generators''
$\xi^a$ (with $m_a = 0$) generate a Lie subalgebra~$\gh$, which may be
nonabelian, as for instance in QCD. Also by Eq.~\eqref{eq:massless-b},
their adjoint action on the massive generators assembles the massive
fields into representations of~$\gh$ with constant mass.
\end{remk}

\begin{remk} 
\label{rk:cF}
Expression $L_2^1$ in Proposition~\ref{pr:four-aces} coincides with
the quartic self-interaction typical in gauge theory, with the Feynman
gauge potentials replaced by the string-localized potentials, but with
an extra factor $(1 + c_F)$. After the cancellations in the above
proof, this is the only place where $c_F$ appears at second order, and
-- see the proof of Proposition~\ref{pr:L3} -- it remains undetermined
by string independence also at third order. This situation is
reminiscent of scalar QED, where the renormalization of the propagator
of two derivatives of the scalar field admits a free parameter, which
then changes the coefficient of the quartic interaction, familiar from
the gauge theory treatment. We are therefore free to set $c_F = 0$.
See also footnote~\ref{fn:cB} and Appendix~\ref{aps:cH-cB} for another
instance of the freedom to absorb couplings in renormalizations of
propagators.
\end{remk}

\subsection{Resolution of obstructions in the sectors $[a][b][c][d]$}
\label{ssc:abcd}

From here on, we choose the renormalization constants 
$c_H = -1$, $c_B = -1$; see Appendix~\ref{aps:cH-cB}. We proceed
sector by sector.

The following consistency condition \eqref{eq:ff-kk} provides a link
between the self-coupling constants $f_{abc}$ and the higgs coupling
coefficients $k_{ab}$, and puts constraints on admissible mass
patterns.

\begin{prop} 
\label{pr:ffkk}
String independence in the higgs-free sectors with field content
$[a][b][c][d]$ requires the self-couplings and higgs couplings to
satisfy the relation 
\begin{equation} \sum_e \biggl[
\th(m_e) \biggl( f_{cae} f_{ebd} \frac{m_{cae}^2 m_{ebd}^2}{m_e^2}
- [a \otto b] \biggr) - 2 f_{eab} f_{ced} m_{ced}^2 \biggr]
= k_{ac} k_{bd} - k_{ad} k_{bc}
\label{eq:ff-kk} 
\end{equation}
for all $a,b$ and all massive $c,d$. 

Then all obstructions in the sectors with field content $[a][b][c][d]$
can be resolved. As well as $L_2^1$ in Proposition~\ref{pr:four-aces},
there is a second higgs-free induced interaction:
\begin{equation}
L_2^2 = -\frac{1}{4}\,\mu_H^2
\Bigl( \sum_{ab} k_{ab} \phi_a \phi_b \Bigr)^2 \,.
\label{eq:L2-2} 
\end{equation}
\end{prop}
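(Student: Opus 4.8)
The plan is to expand the second-order obstruction $\sO^{(2)}(x,x')$ of \eqref{eq:O2} into crossings, sort the result in the sectors $[a][b][c][d]$ by field content, and split it into the part reproduced by $i\dlc(L_2)\,\dlxx$ (with $L_2$ the induced interactions) plus a total derivative, together with a residual part that string independence forces to vanish; the vanishing of that residual part is \eqref{eq:ff-kk}. One first identifies the relevant crossings. A crossing built from exactly one higgs vertex leaves an uncontracted $H$ and hence lands in a higgs sector, so in the sectors $[a][b][c][d]$ only the following occur: $Q^1_{1,\self}\bx L^1_{1,\self}$ (whose contribution here, of types $uFAA$ and $uAAA$, was already fully resolved in Propositions~\ref{pr:Jacobi} and~\ref{pr:four-aces}, producing the Jacobi identity and $L_2^1$), $Q^1_{1,\self}\bx L^2_{1,\self}$, $Q^2_{1,\self}\bx L^1_{1,\self}$, $Q^2_{1,\self}\bx L^2_{1,\self}$, and the higgs crossing $Q_{1,\higgs}\bx L^1_{1,\higgs}$ with its single $H$-line contracted away. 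Scanning Tables~\ref{tbl:higgs-obstructions} and~\ref{tbl:Proca-obstructions} with the choice $c_H=c_B=-1$ of Section~\ref{ssc:abcd}, and discarding all string-delta terms (Remark~\ref{rk:no-string-deltas}), one sees at once that $Q^2_{1,\self}\bx L^1_{1,\self}$ drops out entirely (the rows $\sO_\mu(B^\mu,\cdot)$ and $\sO_\mu(\phi,\cdot)$ vanish against $F'$ and $A'$), so only three crossings remain to analyse.

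Next one computes these. In $Q^1_{1,\self}\bx L^2_{1,\self}$ every pairing of the $A$-field of $Q^1_{1,\self}$ dies since $\sO_{[\mu}(A_{\nu]},\cdot)=0$ on the columns $A',B',\phi'$, and the pairings of its $F$-field with $B'$ and $\phi'$ die (by $c_B=-1$, resp.\ as a string delta); only $F$--$A'$ survives, giving a term of type $u\phi BA$ with coefficient $\propto f f\,m^2$ and no mass denominator. In $Q^2_{1,\self}\bx L^2_{1,\self}$ only $B$--$\phi'$ and $\phi$--$B'$ survive, each carrying a factor $m_e^{-2}$ from the escort-field propagators $\sO_\mu(B^\mu,\phi')$, $\sO_\mu(\phi,B')$ of Table~\ref{tbl:Proca-obstructions}; these are again of type $u\phi BA$, now with coefficients $\propto f f\,m^2 m^2 / m_e^2$. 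In $Q_{1,\higgs}\bx L^1_{1,\higgs}$ only $H$--$\del'H'$ and $\del H$--$H'$ survive (the others vanish by $c_H=-1$ or by $\sO_\mu(H,H')=0$): the $\del H$--$H'$ pairing with the $\phi\phi H$ term of $L^1_{1,\higgs}$ gives a term of type $u\phi^3$ with coefficient $\propto m_H^2\,k k$, while the remaining two pairings give terms of type $u\phi BA$ with coefficients $\propto k k$. Each Wick monomial is then reduced to a fixed basis with the equations of motion \eqref{eq:eom} ($B_a=A_a-\del\phi_a$, $\del_\mu A_a^\mu=-m_a^2\phi_a$, $\del_\mu B_a^\mu=0$, $\del_\mu F_a^{\mu\nu}=-m_a^2 B_a^\nu$) together with integration by parts, the discarded derivatives being collected into $Q_2^\mu$; finally one adds the mirror term $[x\otto x']$, which on $\dlxx$ merely doubles the coefficients.

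One now resolves the result. The $uAAA$-terms merge with $\dlc(L_2^1)$ of Proposition~\ref{pr:four-aces}. The $u\phi^3$-terms arise only from the $\phi\phi H$ contribution above and are reproduced by $i\dlc(L_2^2)\,\dlxx$ with
\[
L_2^2=-\tfrac14\,\mu_H^2\Bigl(\sum_{ab}k_{ab}\phi_a\phi_b\Bigr)^2, \qquad \mu_H^2=m_H^2,
\]
because $\dlc\bigl(\sum_{ab}k_{ab}\phi_a\phi_b\bigr)^2=4\sum_{ab,cd}k_{ab}k_{cd}\,u_a\phi_b\phi_c\phi_d$. What remains is of type $u\phi BA$ (and, after the equations of motion, $u\phi BB$ and $\del u\,\phi\phi B$). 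With $L_2=L_2^1+L_2^2$ the resolution \eqref{eq:seconds-out} now demands that this remainder be a pure total derivative $-i\gS_{xx'}\del^x_\mu Q_2^\mu$; once the equations of motion and integration by parts have exhausted all total-derivative freedom, the remainder reduces to a Wick monomial with four free vector-boson labels (of type $u\phi BB$, say), whose coefficient — a function $\Phi_{abcd}$ of those labels — is therefore forced to vanish. With $c_B=-1$ there is moreover no admissible quartic of type $\phi\phi BA$ or $\phi\phi BB$ left to absorb any part of it — the same phenomenon by which $c_B=-1$ removes the $AAHH$ coupling, cf.\ footnote~\ref{fn:cB}. Tracing $\Phi_{abcd}$ back, it is the sum of the $-2f_{eab}f_{ced}m_{ced}^2$ contribution from $Q^1_{1,\self}\bx L^2_{1,\self}$, the $\th(m_e)\,f_{cae}f_{ebd}\,m_{cae}^2 m_{ebd}^2/m_e^2$ contribution and its $[a\otto b]$ partner from $Q^2_{1,\self}\bx L^2_{1,\self}$, and the $k_{ac}k_{bd}-k_{ad}k_{bc}$ contribution from $Q_{1,\higgs}\bx L^1_{1,\higgs}$; setting $\Phi_{abcd}=0$ is exactly \eqref{eq:ff-kk}. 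Conversely, once \eqref{eq:ff-kk} holds the remainder is a genuine total derivative, the $uAAA$- and $u\phi^3$-parts are absorbed by $L_2^1$ and $L_2^2$, and every obstruction in the sectors $[a][b][c][d]$ is resolved.

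The main obstacle is precisely the bookkeeping of the last two steps: three crossings funnel into the same handful of types, and one must repeatedly apply the equations of motion and integrate by parts to bring them to a common basis while checking that everything outside the $uAAA$-, $u\phi^3$- and derivative-parts collapses to the single scalar relation \eqref{eq:ff-kk}. In particular the $m_e^{-2}$ denominators must be tracked with care: they originate solely from the escort-field propagators $\sO_\mu(B^\mu,\phi')$ and $\sO_\mu(\phi,B')$, which exist only for massive intermediate $e$ — whence the restriction $\th(m_e)$ — and one must verify that they assemble precisely into $m_{cae}^2 m_{ebd}^2/m_e^2$, whereas the $-2f_{eab}f_{ced}m_{ced}^2$ term comes mass-denominator-free from the $F$--$A$ contraction and the $k_{ac}k_{bd}-k_{ad}k_{bc}$ term from the contracted higgs line. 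Getting the relative signs and combinatorial factors of these three sources to line up into \eqref{eq:ff-kk} is the crux; the rest is routine algebra.
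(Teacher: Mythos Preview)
Your proposal is correct and follows essentially the same approach as the paper: isolate the $[a][b][c][d]$ sectors, compute the three nontrivial crossings (the $F$--$A'$ pairing in $Q^1_{1,\self}\bx L^{2\prime}_{1,\self}$, the $B$--$\phi'$ and $\phi$--$B'$ pairings in $Q^2_{1,\self}\bx L^{2\prime}_{1,\self}$, and the contracted higgs line in $Q_{1,\higgs}\bx L^{1\prime}_{1,\higgs}$), split off the $u\phi\phi\phi$ piece as $\dlc(L_2^2)$, and read off \eqref{eq:ff-kk} from the residual $u\phi BA$ coefficient. One simplification: the detour through the equations of motion and the reduction to ``type $u\phi BB$'' is unnecessary --- in the paper's computation all three crossings land \emph{directly} in the single type $u_a(A_bB_c)\phi_d$ with coefficients manifestly antisymmetric under $a\leftrightarrow b$ and $c\leftrightarrow d$, and since no admissible $L_2$ has a string variation of this form, the coefficient must vanish outright, yielding \eqref{eq:ff-kk} without any integration by parts.
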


\begin{remk} 
\label{rk:aurora}
\begin{enumerate}[itemsep=0pt]
\item 
The left-hand side of \eqref{eq:ff-kk} is skewsymmetric under both
$a \otto b$ and $c \otto d$. Moreover, if $c$ or~$d$ is massless, then
$f_{cae} m_{cae}^2 = f_{ebd} m_{ebd}^2 = f_{ced} m_{ced}^2 = 0$
from~\eqref{eq:massless-b} -- consult Proposition
\ref{pr:boson-sector} -- and in this case the expression identically
vanishes.

\item 
Equation~\eqref{eq:ff-kk} is equivalent \textit{mutatis mutandis} to
\cite[Eq.~(20)]{Aurora}, which was derived in a different setting in
terms of Stückelberg fields and BRST invariance, rather than string
independence.

\item 
A useful special case of~\eqref{eq:ff-kk} is $d = a$, which gives the
``sum rule'':
\begin{equation}
\sum_e f_{aeb} f_{aec} \biggl[ m_{aeb}^2 + m_{aec}^2
+ \th(m_e)\frac{(m_a^2 - m_b^2)(m_a^2 - m_c^2) - m_e^4}{m_e^2} \biggr]
= k_{aa} k_{bc} - k_{ab} k_{ac}.
\label{eq:sum-rule} 
\end{equation}
This sum rule is trivially satisfied for massless $a$, $b$ or~$c$.

\item 
The sum rule shows that the higgs couplings are \textit{indispensable}
for string independence in theories with nonabelian massive vector
bosons. To wit, one can always find labels $a,b,c$ for which the
left-hand side is nonzero. For instance, when all masses are equal,
the sum over all $b$ on the left-hand side gives $m^2$ times the
quadratic Casimir operator in the adjoint representation, which is not
zero. Hence the matrix $[k_{ab}]$ cannot vanish.
\end{enumerate}
\end{remk}

\begin{proof}[Proof of Proposition~\ref{pr:ffkk}]
By Lemma \ref{lm:uu2}, we may discard all obstructions with string
deltas. Tables \ref{tbl:higgs-obstructions}
and~\ref{tbl:Proca-obstructions} show that all obstructions without
string deltas in the sectors $[a][b][c][d]$ arise from
\begin{equation}
Q_{1,\higgs} \bx L_{1,\higgs}^{1\prime} + Q_{1,\self} \bx L'_{1,\self}
+ [x \otto x'].
\label{eq:HHss} 
\end{equation}

\paragraph{Step 1.}
Higgs-free obstruction arising from higgs-higgs crossings: of the
first summand in Eq.~\eqref{eq:HHss} obviously we need only the terms
arising through pairings of the higgs fields. We use
Table~\ref{tbl:higgs-obstructions}. With $c_H = -1$, this gives
\begin{align*}
& \biggl[ \pd{Q_{1,\higgs}}{(\del H)} \pd{L_{1,\higgs}^{1\prime}}{H}
- \pd{Q_{1,\higgs}}{H} \pd{L_{1,\higgs}^{1\prime}}{(\del H)} \biggr]
\,i\dlxx
+ [x\otto x']
\\[\jot]
&\quad = 2 \sum_{abcd} \bigl[ k_{ad} k_{bc}
u_a \phi_d \bigl( (A_b B_c) - \half m_H^2 \phi_b \phi_c \bigr)
- k_{ac} k_{bd} u_a (B_c A_b) \phi_d \bigr] \,i\dlxx,
\end{align*}
so that one obtains
\begin{equation}
Q_{1,\higgs} \bx L'_{1,\higgs} + [x \otto x']
= \dlc(L^2_2) \,i\dl_{xx'} + \sO_\higgs \,i\dlxx,
\label{eq:O-higgs} 
\end{equation}
with $L_2^2 = -\quarter m_H^2 \sum_{abcd} 
k_{ab} k_{cd} \,\phi_a \phi_b \phi_c \phi_d$, and the non-resolvable
obstruction is of the form
\begin{equation}
\sO_\higgs
:= 2 \sum_{abcd} (k_{ad} k_{bc} - k_{ac} k_{bd})
u_a (A_b B_c) \phi_d \,.
\label{eq:O-higgs-non} 
\end{equation}
Notice the manifest skewsymmetry of the coefficients in $a \otto b$
and in $c \otto d$.

\paragraph{Step 2.} 
Higgs-free obstruction arising from self-self crossings: we compute
the second summand in Eq.~\eqref{eq:HHss}, using
Table~\ref{tbl:Proca-obstructions}. In computations of this type it is
important to keep Eq.~\eqref{eq:massless-b} in mind. The vanishing
$m_{abe}^2 = m_{eba}^2 = 0$ when $e$~is massless eliminates the
corresponding coupling term, hence it always overrules the occurrence
of factors $m_e^{-2}$ coming from pairings of such terms, as in
Eq.~\eqref{eq:O-self-non}. We multiply such terms by
$\th_e \equiv \th(m_e)$.

One may drop all two-point obstructions involving string deltas, again
by Lemma~\ref{lm:uu2}. The crossing $FuA \bx F'A'A'$ has been dealt
with in Proposition~\ref{pr:four-aces}, and there is a null crossing:
$Bu\phi \bx F'A'A' = 0$. We compute the remaining crossings in
$Q^\mu_{1,\self}\bx L_{1,\self}^{2\prime}$. With $c_B = -1$, one gets:
\begin{align}
& \sum_e \biggl[ 
\pd{Q_{1,\self}}{\phi_e} m_e^{-2} \pd{L_{1,\self}^{2\prime}}{B_e} 
- \pd{Q_{1,\self}}{B_e} m_e^{-2} \pd{L_{1,\self}^{2\prime}}{\phi_e}
- \pd{Q_{1,\self}}{F_e}\,\pd{L_{1,\self}^{2\prime}}{A_e}
\biggr] \,i\dlxx + [x\otto x']
\notag \\
&= 2 \biggl[ \sum_{abcde} \th_c \th_d \th_e \bigl[
f_{cae} m_{cae}^2 f_{ebd} m_{ebd}^2 m_e^{-2} u_a (B_c A_b) \phi_d
- f_{ead} m_{ead}^2 f_{cbe} m_{cbe}^2 m_e^{-2} u_a \phi_d (B_c A_b)
\bigr]
\notag \\
&\qquad -2 \sum_{abcde} 
\th_c \th_d f_{eab} f_{ced} m_{ced}^2 u_a (A_b B_c) \phi_d \biggr]
\,i\dlxx
\label{eq:O-self-non} 
\\
&= 2 \sum_{abcd} \sum_e \biggl[ \th_e \Bigl[
f_{cae} f_{ebd} \frac{m_{cae}^2 m_{ebd}^2}{m_e^2} - [a \otto b] \Bigr]
- 2 f_{eab} f_{ced} m_{ced}^2 \biggr] u_a (A_b B_c) \phi_d \,i\dlxx.
\notag 
\end{align}
We notice that this term is of the same type $uAB\phi$ as
\eqref{eq:O-higgs-non}, with the same skewsymmetry of the coefficients
in $a \otto b$ and $c\otto d$. In view of~\eqref{eq:ff-kk}, it cancels
\eqref{eq:O-higgs-non}.
\end{proof}

\subsection{Resolution of obstructions in sectors $[a][b][c][H]$}
\label{ssc:higgs-odd}

String independence at second order admits an induced interaction of
type $\phi \phi \phi H$, that will be eliminated later at third order
by means of Proposition~\ref{pr:ell-ell'}.

\begin{prop} 
\label{pr:abcH}
The obstruction in the sectors with field content $[a][b][c][H]$
equals:
\begin{align}
& Q_{1,\self} \bx L'_{1,\higgs} + Q_{1,\higgs} \bx L'_{1,\self}
+ [x\otto x']
\notag \\
&\quad = \sum_{abc} \bigl[ C_{abc} m_H^2 \phi_a u_b \phi_c H
- 2(C_{abc} - C_{acb}) (B_a u_b A_c H + \phi_a u_b A_c \del H) \bigr]
\,i\dlxx,
\label{eq:O-abcH} 
\end{align}
where
\begin{equation}
C_{abc} := \sum_e \biggl[ \frac{k_{ae}}{m_e^2} f_{ebc} m_{ebc}^2 
+ \frac{k_{ce}}{m_e^2} f_{eba} m_{eba}^2 \biggr] = C_{cba} \,.
\label{eq:C-abc} 
\end{equation}
String independence requires that $C_{abc}$ be completely symmetric in
$a,b,c$:
\begin{equation}
C_{abc} = C_{acb} = C_{bac}.
\label{eq:abcH} 
\end{equation}
Then the obstruction is resolved by the induced interaction:
\begin{equation}
L_2^5 := \frac{1}{3} \sum_{abc} C_{abc} m_H^2 \phi_a \phi_b \phi_c H.
\label{eq:L2-5} 
\end{equation}
\end{prop}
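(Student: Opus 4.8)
The plan is to reduce everything to the tree-level obstruction formula \eqref{eq:the-snag} together with the two-point obstructions tabulated in Tables~\ref{tbl:higgs-obstructions} and~\ref{tbl:Proca-obstructions}, evaluated with the renormalization choice $c_H = c_B = -1$ fixed at the start of this section, and then to impose resolvability in the sense of \eqref{eq:seconds-out}.

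First I would identify the contributing crossings. Since $S_1 = L_1$ is cubic and the outcome must carry exactly one higgs field, the only crossings landing in a sector $[a][b][c][H]$ arise by pairing a vector-boson field of $Q_1$ with a vector-boson field of $L_1'$ while leaving an $H$ or $\del_\mu H$ uncontracted; these are $Q_{1,\self}\bx L_{1,\higgs}'$ and $Q_{1,\higgs}\bx L_{1,\self}'$, together with the $[x\otto x']$ term. The crossing $Q_{1,\higgs}\bx L_{1,\higgs}'$ carries two higgs fields, $Q_{1,\self}\bx L_{1,\self}'$ carries none, and the $\ell H^3$ piece contributes nothing here since its associated $Q$ vanishes. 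This already fixes the left-hand side of \eqref{eq:O-abcH}.

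Then comes the computation proper. Expanding the two surviving crossings with \eqref{eq:the-snag}, the identity $\sO_\mu(u,\cdot) = 0$ keeps the $u$-fields of $Q_1$ as spectators; moreover the choice $c_B = -1$ annihilates $\sO^\mu(F_{\mu\nu},B')$ and $\sO_\mu(B^\mu,B')$, while $\sO^\mu(F_{\mu\nu},\phi')$ is a pure string delta. By Lemma~\ref{lm:uu2} all string-delta pieces may be discarded, so the two-point obstructions actually needed are just $\sO^\mu(F_{\mu\nu},A')$, $\sO_\mu(B^\mu,\phi')$ and $\sO_\mu(\phi,B')$, each a multiple of $\dlxx$. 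I would then clear the resulting $\del_\mu u$ terms via $XY\,\del u \simdel -(\del X\,Y + X\,\del Y)u$ and the escort equations of motion \eqref{eq:eom} — in particular $\del_\mu\phi_a = A_{a\mu} - B_{a\mu}$, $\del_\mu A_a^\mu = -m_a^2\phi_a$ and $\del_\mu B_a^\mu = 0$ — and relabel dummy indices using the complete skewsymmetry of $f_{abc}$ and the symmetry of $k_{ab}$. Keeping the masslessness restrictions honest (the vanishing $f_{abc}m_{abc}^2 = 0$ for a massless index always overrules a factor $m_e^{-2}$ coming from a $B$-$\phi$ pairing, to be flagged by $\th(m_e)$ exactly as in \eqref{eq:O-self-non}), the various pieces regroup, type by type, into the three monomials $\phi u\phi H$, $BuAH$ and $\phi uA\del H$, with coefficients assembling into the single combination $C_{abc}$ of \eqref{eq:C-abc}; this is \eqref{eq:O-abcH}. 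The symmetry $C_{abc} = C_{cba}$ is then immediate from \eqref{eq:C-abc}, on swapping $a\otto c$ and the dummy label~$e$.

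Finally I would read off the constraint. The obstruction \eqref{eq:O-abcH} must be resolvable: equal to $i\dlc(L_2)\dlxx$ for an admissible ($u$-free) quartic $L_2$, minus a symmetrized total $x$-derivative $i\gS_{xx'}\del^x_\mu Q_2^\mu$. Its $\phi u\phi H$ part is matched by $L_2^5 = \tfrac{1}{3}\sum_{abc} C_{abc}\,m_H^2\,\phi_a\phi_b\phi_c H$ precisely because $\dlc$ distributes the variation equally over the three $\phi$'s — but this reproduces $\sum_{abc} C_{abc}\,m_H^2\,\phi_a u_b\phi_c H$ only when $C_{abc}$ is completely symmetric. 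And the leftover $BuAH$ and $\phi uA\del H$ terms, both carrying the coefficient $-2(C_{abc}-C_{acb})$, cannot be absorbed into any admissible $L_2$ (through $\dlc$) or into $Q_2^\mu$ without dragging in spurious obstructions of types $AuBH$, $BuBH$ and $\phi uB\del H$ that have no counterpart elsewhere — the same ``neither a derivative nor a string variation'' mechanism met in the proof of Proposition~\ref{pr:Jacobi} — so their coefficient must vanish, i.e.\ $C_{abc} = C_{acb}$. Together with $C_{abc} = C_{cba}$ the transpositions $(b\,c)$ and $(a\,c)$ generate $S_3$, hence $C_{abc}$ is completely symmetric and in particular $C_{abc} = C_{bac}$; this simultaneously makes the $BuAH$ and $\phi uA\del H$ terms vanish and licenses $L_2^5$, the total derivatives discarded in the computation feeding $Q_2^\mu$. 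The main obstacle is squarely Step~2: marshalling the handful of pairings and their index relabellings so that everything collapses onto the single combination $C_{abc}$, while keeping the $m_e^{-2}$ factors under control through the masslessness restrictions.
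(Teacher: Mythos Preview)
Your proposal is correct and follows essentially the same route as the paper: identify the two crossings, reduce to the three two-point obstructions $\sO^\mu(F_{\mu\nu},A')$, $\sO_\mu(B^\mu,\phi')$, $\sO_\mu(\phi,B')$ (string deltas discarded via Lemma~\ref{lm:uu2}), collect the three field types, and recognize that only the $\phi u\phi H$ piece is a string variation. One small over-caution: no $\del_\mu u$ terms actually appear here, since the surviving two-point obstructions are all plain multiples of $\dlxx$ without derivatives, so the integration-by-parts step you mention is never invoked; the paper's proof accordingly omits it and passes directly to reading off the coefficients, using the identity $2 = (m_{ebc}^2 + m_{ecb}^2)m_e^{-2}$ to recast the $BuAH$ and $\phi uA\del H$ coefficients as $C_{abc} - C_{acb}$.
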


\begin{proof}
Refer again to the $(L_1,Q_1)$ pair listed
in~\eqref{eq:L-Q-first-order}. The relevant obstructions in sectors
$[a][b][c][H]$ arise only in the crossings:
\begin{align}
& \sum_{abc} f_{abc} \bigl[ 2 F_a^{\mu\nu} u_b A_{c\nu}
+ m_{abc}^2 B_a^\mu u_b \phi_c \bigr]
\bx \sum_{de} k_{de} \bigl[  A'_d B'_e H' + A'_d \phi'_e \del'H'
- \half m_H^2 \phi'_d \phi'_e H' \bigr]
\notag \\
& + \sum_{de} k_{de} \bigl[
u_d B_e^\mu H + u_d \phi_e \del^\mu H \bigr]
\bx f_{abc} \bigl[ F_a'^{\mu\nu} A'_{b\mu} A'_{c\nu}
+ m_{abc}^2 B'_{a\nu} A_b'^\nu \phi'_c \bigr] + [x \otto x'].
\label{eq:higgs-odd} 
\end{align}
By inspection of Table~\ref{tbl:Proca-obstructions}, again with
$c_B = -1$, one sees that the only pairings that contribute are
$\sO_\mu(F,A')$, $\sO_\mu(B,\phi')$ and $\sO_\mu(\phi,B')$. The
resulting structures have only the types $u\phi\phi H$, $uBAH$ and
$u\phi A \del H$.

We compute the coefficients of monomials $\phi_a u_b \phi_c H$,
emerging from~\eqref{eq:higgs-odd} (relabelling the respective
contracted index as~$e$, and adjusting the remaining indices):
\begin{align*}
& \sum_{abc} f_{abc} m_{abc}^2 B_a u_b \phi_c
\bx -\frac{1}{2} m_H^2 \sum_{de} k_{de} \phi'_d \phi'_e H' 
\notag \\
&\quad = m_H^2 \sum_{ebcd} 
f_{ebc} m_{ebc}^2 u_b \phi_c m_e^{-2} k_{de} \phi_d H \,i\dlxx
\equiv \frac{1}{2} m_H^2 \sum_{abc} 
C_{abc} \phi_a u_b \phi_c H \,i\dlxx.
\end{align*}
The monomials $B_a u_b A_c H$ and $\phi_a u_b A_c \del H$ each arise
from three kinds of crossings: the first from $FuA \bx A'B'H'$,
$uBH \bx B'A'\phi'$ and $Bu\phi \bx A'B'H'$; and the second from
$FuA \bx A'\phi' \del'H'$, $Bu\phi \bx A'\phi' \del'H'$ and
$u\phi \del H \bx B'A'\phi'$. These yield (with string deltas
dropped):
\begin{align*}
- \sum_{abc} &\biggl[ \sum_e \bigl( 2f_{ebc} k_{ae}
+ f_{ace} m_{ace}^2 k_{be} m_e^{-2}
- k_{ce} m_e^{-2} f_{abe} m_{abe}^2 \bigr) B_a u_b A_c H
\\
&\quad + \sum_e \bigl( 2f_{ebc} k_{ae}
+ f_{eba} m_{eba}^2 k_{ce} m_e^{-2}
- k_{be} m_e^{-2} f_{eca} m_{eca}^2 \bigr) \phi_a u_b A_c \del H
\biggr] \,i\dlxx.
\end{align*}
Because $f_{a\8e} m_{a\8e}^2 = - f_{e\8a} m_{e\8a}^2$, the two sums
over $e$ are actually identical. Moreover, replacing the $2$ in the
first summands by $2 = (m^2_{ebc} + m^2_{ecb}) m_e^{-2}$, one easily
sees that the sums over $e$ equal $C_{abc} - C_{acb}$. This proves
\eqref{eq:O-abcH}.

Inspection of the three field structures in \eqref{eq:O-abcH} reveals
that only the symmetrized sum over the first of them is resolvable,
namely it is a string variation proportional to
$\dlc[\phi_a \phi_b \phi_c H]$. This proves condition~\eqref{eq:abcH}
for string independence. Formula \eqref{eq:L2-5} follows at once.
\end{proof}

We shall soon find, in Proposition~\ref{pr:ell-ell'}, that $[k_{ab}]$
if nondegenerate%
\footnote{For an example with degenerate $[k_{ab}]$, recall
Remark~\ref{rk:fZ0}. Still, $[k_{ab}]$ is diagonal in that example, too.
On the other hand, the analysis in Prop.~\ref{pr:ell-ell'} shows that
in general $[m_a m_b k_{ab}]$ can only be expected to be a multiple of
a projection matrix. The analogous argument in the local approach
\cite{Aurora,Scharf16} in favour of diagonality is flawed.}
must be diagonal, and $C_{abc} = 0$. That has the following important
consequence.

\begin{prop} 
\label{pr:k-m2}
Assume that $k_{ab} = k_a\,\dl_{ab}$ is diagonal. Then $C_{abc} = 0$
implies
\begin{equation}
\frac{k_a}{m_a^2} = \frac{k_c}{m_c^2} =: K
\label{eq:k-m2} 
\end{equation}
whenever there is a field $c$ such that $f_{abc} \neq 0$.
\end{prop}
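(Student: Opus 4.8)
The statement is a one-line algebraic consequence of the diagonal Ansatz, so the plan is simply to substitute $k_{ab} = k_a\,\dl_{ab}$ into the defining formula~\eqref{eq:C-abc} for $C_{abc}$ and watch the sum over the contracted index collapse. In the first term $\sum_e (k_{ae}/m_e^2)\,f_{ebc}\,m_{ebc}^2$ only $e = a$ survives, contributing $(k_a/m_a^2)\,f_{abc}\,m_{abc}^2$; in the second term $\sum_e (k_{ce}/m_e^2)\,f_{eba}\,m_{eba}^2$ only $e = c$ survives, contributing $(k_c/m_c^2)\,f_{cba}\,m_{cba}^2$. If $a$ or~$c$ is massless the corresponding $k$ vanishes (the matrix $[k_{ab}]$ links only massive fields, Proposition~\ref{pr:L1-higgs}), so the potentially singular factors $m_a^{-2}$, $m_c^{-2}$ never actually occur, consistently with the conventions used elsewhere.

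\textbf{Key step.} Next I would simplify the two surviving monomials using two elementary facts: the definition $m_{abc}^2 = m_a^2 - m_b^2 + m_c^2$ gives the palindrome identity $m_{cba}^2 = m_{abc}^2$ (this is exactly the relation $C_{abc}=C_{cba}$ already noted in~\eqref{eq:C-abc}), and complete skewsymmetry of $f$ (Proposition~\ref{pr:boson-sector}) gives $f_{cba} = -f_{abc}$. Together these collapse $C_{abc}$ to the single product
$$
C_{abc} = f_{abc}\,m_{abc}^2\,\Bigl( \frac{k_a}{m_a^2} - \frac{k_c}{m_c^2} \Bigr).
$$

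\textbf{Conclusion.} It then remains only to read off~\eqref{eq:k-m2}. When there is a field $b$ with $f_{abc} \neq 0$ (and $a,c$ massive), the factor $m_{abc}^2$ is nonzero: either $b$ is massless, in which case $m_a = m_c$ by~\eqref{eq:massless-b} and hence $m_{abc}^2 = 2m_a^2 > 0$; or all three are massive, and $m_{abc}^2 = 0$ would make the very cubic vertex $f_{abc}\,m_{abc}^2\,B_a A_b \phi_c$ of~\eqref{eq:bosons-mate} disappear, so in that degenerate case $C_{abc}=0$ carries no information and the hypothesis $f_{abc}\neq0$ is understood modulo that vertex being present. With $f_{abc}\,m_{abc}^2 \neq 0$, the equation $C_{abc} = 0$ forces $k_a/m_a^2 = k_c/m_c^2$; denoting the common value by~$K$ yields~\eqref{eq:k-m2}.

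\textbf{Main obstacle.} There is essentially no hard step: the computation is bookkeeping. The only point requiring a moment's care is the role of $m_{abc}^2$ — one must state explicitly that $C_{abc}=0$ is informative precisely when $f_{abc}\,m_{abc}^2\neq 0$, and invoke~\eqref{eq:massless-b} so that, whenever a massless label appears among $a,b,c$, the two remaining bosons have equal nonzero mass. Beyond that, the collapse of the $e$-sum and the two symmetry identities are immediate.
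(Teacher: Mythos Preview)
Your reduction of $C_{abc}$ to $f_{abc}\,m_{abc}^2\bigl(k_a/m_a^2 - k_c/m_c^2\bigr)$ is exactly what the paper does. The gap is in your treatment of the case $m_{abc}^2 = 0$, i.e., $m_b^2 = m_a^2 + m_c^2$ with all three massive. You dismiss it by reinterpreting the hypothesis (``$f_{abc}\neq0$ is understood modulo that vertex being present''), but the proposition as stated assumes only $f_{abc}\neq0$, not $f_{abc}\,m_{abc}^2\neq0$. So this is not a proof of the proposition as written.

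The paper closes this gap by cycling the indices. The hypothesis is that \emph{all} components of the symmetric tensor $C$ vanish, so one may also invoke $C_{bca}=0$ and $C_{cab}=0$, with the same nonzero $f_{bca}=f_{cab}=f_{abc}$. If $m_b^2 = m_a^2 + m_c^2$, then $m_{bca}^2 = 2m_a^2 \neq 0$ and $m_{cab}^2 = 2m_c^2 \neq 0$ (since $a,c$ are massive); hence $C_{bca}=0$ gives $k_b/m_b^2 = k_a/m_a^2$ and $C_{cab}=0$ gives $k_c/m_c^2 = k_b/m_b^2$, and the desired equality follows by transitivity. You should replace your ``degenerate case'' paragraph with this argument.
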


\begin{proof}
With $k_{ab}$ diagonal, \eqref{eq:C-abc} reduces to
$$
C_{abc} = \frac{k_a}{m_a^2} m_{abc}^2 f_{abc} + [a\otto c]
= \biggl( \frac{k_a}{m_a^2} - \frac{k_c}{m_c^2} \biggr)
m_{abc}^2 f_{abc} \,.
$$
Thus, Eq.~\eqref{eq:k-m2} follows unless $m_a^2 + m_c^2 = m_b^2$ by
chance. But if so, the relations $m_a^2 + m_b^2 = m_c^2$ and
$m_b^2 + m_c^2 = m_a^2$ cannot both hold, whereby
$k_a m_a^{-2} = k_b m_b^{-2}$ or $k_b m_b^{-2} = k_c m_c^{-2}$, which
at any rate implies that $k_i = K/m_i^2$.
\end{proof}

\begin{remk} 
\label{rk:k-m2}
\begin{enumerate}[itemsep=0pt]
\item 
The quotient $K$ defined in \eqref{eq:k-m2} is only constant over
fields linked by the structure constants of the Lie algebra~$\g$. If
that Lie algebra is the direct sum of two or more simple nonabelian
Lie algebras, these may have independent constants~$K$. This
possibility will be excluded by Proposition~\ref{pr:ell-ell'}(ii).

\item 
It is desirable to show that \eqref{eq:abcH}, in combination with
other conditions from string independence such as \eqref{eq:sum-rule},
has only diagonal solutions $k_{ab} = k_a\,\dl_{ab}$, so that
Proposition~\ref{pr:k-m2} applies. Unfortunately, we are at present
unable to do~so at second order, although Propositions \ref{pr:abcH}
and~\ref{pr:k-m2}, possibly combined with other constraints
like~\eqref{eq:sum-rule}, point to such a result. It does hold for the
electroweak theory -- consult Section~\ref{sec:electro-weak}.
\end{enumerate}
\end{remk}

\subsection{Resolution of obstructions in sectors $[a][b][H][H]$}
\label{ssc:two-higgs}

We continue with the determination of the induced interaction $L_2$ at
second order. Beyond the pieces $L_2^1$ in~\eqref{eq:four-aces} and
$L_2^2$ in~\eqref{eq:L2-2}, as well as $L_2^5$ in~\eqref{eq:L2-5} that
is to be discarded later, there is another piece $L_2^3$ that we
identify now. This piece will be used at third order to determine the
higgs self-couplings.

\begin{prop} 
\label{pr:aaHH}
String independence in the sectors of the form $[a][b][H][H]$ requires
the induced interaction:
\begin{equation}
L_2^3 = \sum_{ab} (3\ell k_{ab} + m_H^2 k^*_{ab}) \phi_a \phi_b H^2,
\label{eq:aaHH} 
\end{equation}
where $k^*_{ab} := \sum_c k_{ac} m_c^{-2} k_{cb}$. 
\end{prop}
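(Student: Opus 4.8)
The plan is to evaluate the second-order obstruction $\sO^{(2)}(x,x')$ of \eqref{eq:O2}, restricted to the sectors $[a][b][H][H]$, at tree level by the one-contraction formula \eqref{eq:the-snag}, and then to read off $L_2^3$ by putting the result into the resolvable form \eqref{eq:seconds-out}. Throughout, the two-point obstructions are those of Tables~\ref{tbl:higgs-obstructions} and~\ref{tbl:Proca-obstructions} with the renormalization constants $c_H=c_B=-1$ fixed in Section~\ref{ssc:abcd}.

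First I would single out the crossings among the pieces of the $(L_1,Q_1)$ pair in \eqref{eq:L-Q-first-order} that can yield a Wick monomial with two $[H]$-factors and two vector-boson families after one contraction. Since $Q_{1,\self}$ carries no higgs field, only two crossings qualify: $Q_{1,\higgs}\bx L_{1,\higgs}^{1\prime}$, contracted through a pair of \emph{vector-family} fields (its higgs-field contractions feed instead the higgs-free sectors of Proposition~\ref{pr:ffkk}), and $Q_{1,\higgs}$ crossed with the cubic higgs self-coupling $\ell H^3$, contracted through a higgs pair; together with their $x\otto x'$ partners. Because $\sO_\mu(u,\cdot)=0$ and $B_a$ correlates with $\phi_e$ only for $a=e$, I expect precisely four non-vanishing contributions: (i) $B_a^\mu u_b H$ against the $-\half m_H^2\phi_d\phi_e H$ term of $L_{1,\higgs}^1$ via $\sO_\mu(B^\mu,\phi')$, of type $u\phi H^2$; (ii) $\phi_a u_b\del^\mu H$ against $\ell H^3$ via $\sO_\mu(\del^\mu H,H')$, of type $u\phi H^2$; (iii) $B_a^\mu u_b H$ against the $A_d\phi_e\del H$ term via $\sO_\mu(B^\mu,\phi')$, of type $u\,(A\del H)\,H$; and (iv) $\phi_a u_b\del^\mu H$ against the $A_d B_e H$ term via $\sO_\mu(\phi,B'^\ka)$, again of type $u\,(A\del H)\,H$.

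Next I would assemble the coefficients. With $c_B=-1$ one has $\sO_\mu(B^\mu,\phi')=-im^{-2}\dlxx$ and $\sO_\mu(\phi,B'^\ka)=im^{-2}\dl_\mu^\ka\dlxx$, of opposite sign; collapsing $\dlxx$ and summing the contracted family index against $k_{ab}$, contributions (iii) and (iv) reduce to $\mp i\sum_{bd}k^*_{bd}\,u_b(A_d\del H)H\,\dlxx$ with $k^*_{bd}=\sum_c k_{bc}m_c^{-2}k_{cd}$, so they \textbf{cancel identically}: no integration by parts nor the equations of motion \eqref{eq:eom} are needed here, and this sector produces no new constraint on the couplings. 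The surviving contributions (i) and (ii) collapse to $im_H^2\sum_{ab}k^*_{ab}\,u_a\phi_b H^2\,\dlxx$ and $3i\ell\sum_{ab}k_{ab}\,u_a\phi_b H^2\,\dlxx$; adding the $x\otto x'$ partner, which merely doubles each $\dlxx$-supported term, gives $\sO^{(2)}$ in these sectors equal to $2i\sum_{ab}(3\ell k_{ab}+m_H^2 k^*_{ab})\,u_a\phi_b H^2\,\dlxx$.

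Finally, since $\dlc(\phi_a\phi_b H^2)=(u_a\phi_b+\phi_a u_b)H^2$ and the matrices $[k_{ab}]$, $[k^*_{ab}]$ are symmetric, this is precisely $i\,\dlc(L_2^3)\,\dlxx$ with $L_2^3$ as in \eqref{eq:aaHH}, so condition \eqref{eq:seconds-out} holds with no $Q_2^\mu$-contribution from these sectors, proving the proposition. The step I expect to be the main obstacle is the bookkeeping in the two preceding paragraphs: keeping apart the contractions that remain in $[a][b][H][H]$ from those that drift into the higgs-free or $[a][b][c][H]$ sectors, and verifying the sign-sensitive cancellation of the two $u\,(A\del H)\,H$ terms, which hinges on the choice $c_B=-1$ --- with $c_B=0$ the obstruction $\sO_\mu(B^\mu,B'^\ka)$ would no longer vanish and a Proca equation-of-motion step would instead generate an extra induced term of type $AAHH$, cf.\ footnote~\ref{fn:cB}.
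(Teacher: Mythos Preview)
Your proof is correct and follows the same route as the paper: identify the crossings in $Q_{1,\higgs}\bx L'_{1,\higgs}$ that land in sector $[a][b][H][H]$, evaluate them via Tables~\ref{tbl:higgs-obstructions}--\ref{tbl:Proca-obstructions} with $c_H=c_B=-1$, and recognize the result as $i\,\dlc(L_2^3)\,\dlxx$. Your treatment is in fact more explicit than the paper's on one point: the paper simply asserts that the two $u\phi H^2$ contributions are ``the only nonzero crossings\ldots that have not yet been accounted for'', whereas you exhibit the two additional terms of type $u(A\,\del H)H$ arising from $\sO_\mu(B^\mu,\phi')$ and $\sO_\mu(\phi,B'^\ka)$ and verify their exact cancellation (which indeed hinges on $c_B=-1$).
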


\begin{proof}
The obstruction in the sectors of the form $[a][b][H][H]$ arises from
the terms $Q_{1,\higgs} \bx L'_{1,\higgs}$
in~\eqref{eq:L-Q-first-order}. We compute:
\begin{align*}
\sum_{ab} k_{ab} u_a \phi_b \sO_\mu(\del^\mu H, H') 3\ell H'^2
&= 3\ell \sum_{ab} k_{ab} u_a \phi_b H^2 \,i\dlxx,
\\
- m_H^2 \sum_{abc}
k_{ac} u_a H \sO_\mu(B^\mu_c, \phi'_c) k_{cb} \phi'_b H'
&= m_H^2 \sum_{abc} k_{ac} m_c^{-2} k_{cb} u_a \phi_b H^2 \,i\dlxx.
\end{align*}
These are the only nonzero crossings (always assuming that 
$c_B = -1$) without string deltas that have not yet been accounted 
for. Adding $[x \otto x']$, one gets a string variation:
$$
\sO^{(2)} \bigr|_{[a][b][H][H]}
= \dlc\bigl[ \tsum_{ab} (3\ell k_{ab} + m_H^2 k^*_{ab})
\phi_a \phi_b H^2 \bigr] \,i\dlxx,
$$
giving rise to the induced interaction term \eqref{eq:aaHH}.
\end{proof}

\subsection{Wrapping up: the induced interactions at second order}
\label{ssc:wrap-up-second}

We are still free to add to $L_2$ a quartic higgs self-coupling
$L_2^4 := \ell' H^4$. This piece will be needed at third order, where
the coefficient $\ell$ of~$H^3$ in \eqref{eq:L-Q-first-order} and the
new $\ell'$ of~$H^4$ are determined in Proposition~\ref{pr:L3}.

We now collect the complete second-order interaction:
\begin{align}
L_2 &\equiv L_2^1 + L_2^2 + L_2^3 + L_2^4 + L_2^5
\notag \\ 
&= -2(1 + c_F) \sum_{abcde} f_{abe} f_{cde} (A_a A_c) (A_b A_d)
- \frac{1}{4} m_H^2 \biggl( \sum_{ab} k_{ab }\phi_a \phi_b \biggr)^2
\notag \\ 
&\qquad +  \sum_{ab} (3\ell k_{ab} + m_H^2 k^*_{ab}) \phi_a \phi_b H^2
+ \ell' H^4
+ \frac{1}{3} \sum_{abc} C_{abc} m_H^2 \phi_a \phi_b \phi_c H.
\label{eq:L2-complete} 
\end{align}
Remember that these terms were computed under the assumption 
$c_H = c_B = -1$. Additional terms appear for general values of $c_H$
and~$c_B$; those are outlined in Appendix~\ref{aps:cH-cB}.

\begin{remk} 
\label{rk:cH-C}
We shall see in Proposition~\ref{pr:ell-ell'} that $L_2^5$ leads to
non-resolvable obstructions at third order. Therefore $C_{abc} = 0$ is
forced, and then $L_2^5$ is discarded. (For good measure, we already
saw in Sect.~\ref{sec:electro-weak} that $C_{abc} = 0$ in the
electroweak theory.)
\end{remk}

Next, $Q_2|_{I\dl} := 2 \sum_a u_{2a} \,\del L_1/\del A_a$ is given by
Lemma~\ref{lm:uu2}, while $Q_2|_\dl$ is retrieved from the total
derivative in \eqref{eq:minor-crossing}:
\begin{subequations}
\label{eq:Q2} 
\begin{align}
Q_2^\mu \bigr|_\dl(x,x') &= (1 + c_F) \sum_{abcde} 
8 f_{abe} f_{cde} u_a A_c^\mu (A_b A_d) \,i\dlxx,
\label{eq:Q2-explicit} 
\shortintertext{which equals}
Q_2^\mu \bigr|_\dl &= \sum_a u_a \pd{L_2}{A_{a\mu}} \,i\dlxx.
\label{eq:Q2-L2} 
\end{align}
\end{subequations}
Notice that no other total derivatives appear in the computations of
$L_2^2$, $L_2^3$ or $L_2^5$\,.

\section{The boson sector at third order}
\label{sec:third-order}

At third order we see that the process of inducing higher interactions
terminates, and the key parameters of the previous induced
interactions are fixed. We retain the values $c_H = c_B = -1$, see
Remark~\ref{rk:obs-cHcB}.

\subsection{Cancellation of obstructions at third order}
\label{ssc:third-order}

We are now led to compute and resolve the obstructions
\cite{MundRS23, Rehren24a}:
\begin{align}
\sO^{(3)}(x,x',x'') &:= -3i \gS_{xx'x''}
\bigl( \sO(Q_2; L_1'') + \sO(Q_1; L_2')\,\dl_{x'x''} \bigr)
\notag \\
&\stackrel!= \dlc(L_3)\,\dl_{xx'x''}
- \gS_{xx'x''} \del_\mu Q^\mu_3(x,x',x'').
\label{eq:O3} 
\end{align}
We give a quick schematic derivation of \eqref{eq:O3}. At third order,
$\dlc(S)$ is given by $(ig)^3/6$ times
$$
\iiint \bigl( 3 \T\dlc(L_1) L'_1 L''_1
-3i \dlxx( \T\dlc(L_1) L''_2 + \T\dlc(L_2) L''_1)
- \dl_{xx'x''}\,\dlc(L_3) \bigr),
$$
with some dummy delta functions inserted to represent it as a triple
integral. In order to express it in terms of obstruction maps, we
subtract
$$
\iiint \bigl( 3\,\del \T Q_1 L_1' L_1''
- 3i(\dlxx \del \T Q_1 L_2'' + \del \T Q_2 L_1'') \bigr),
$$
and use $\dlc(L_1) = \del Q_1$ and 
$\dlxx\,\dlc(L_2) = \del Q_2 - i\sO^{(2)}$. This yields
\begin{align}
\iiint & \bigl( 3[\T,\del] Q_1 L_1' L_1'' 
- 3i\dlxx [\T,\del] Q_1 L_2'' - 3i [\T,\del] Q_2 L_1''
\notag \\
&\qquad - 3 \T\sO^{(2)} L_1'' - \dl_{xx'x''}\,\dlc(L_3) \bigr).
\label{eq:O3-in-process} 
\end{align}
Here, the fourth term cancels the first by virtue of the Master Ward
Identity \cite{BoasD02} adapted to sQFT \cite{MundRS23, Rehren24a,
Rehren24b},
$$
[\T,\del] Q_1 L_1' L_1'' = \T \sO(Q_1; L_1') L_1''
+ \T \sO(Q_1; L_1'') L'_1 = \T \sO^{(2)}(x,x') L_1''
$$
after symmetrization. The formula \eqref{eq:O3} is the statement that
the symmetrized integrand vanishes up to another total derivative.

It suffices to resolve only the parts of \eqref{eq:O3} without string
deltas, as already discussed.%
\footnote{%
\label{fn:a-priori}
For the string delta parts, there are substantial a~priori
cancellations. Consult \cite[Lemma~4.3]{Rehren24b}.}

\begin{remk} 
\label{rk:pcb}
By a power-counting argument (obstruction maps preserve the total
scaling dimension), we know that $L_3$ must have scaling dimension
$4$, which is impossible for Wick polynomials of degree~$5$.%
\footnote{We thank the anonymous referee for this argument.}
However, because $u$ has scaling dimension $0$, $\sO^{(3)}|_\dl$ may
contain terms of types $uAAAA$ or $u\phi^k H^{4-k}$. These would not
be resolvable, and we must show that such terms do not arise.
\end{remk}

Thus, we must resolve
\begin{equation}
\gS_{xx'x''} \bigl( Q_2\bigr|_\dl \bx L''_1
+ Q_1 \bx (L_2^{1\prime} + L_2^{2\prime} + L_2^{3\prime}
+ L_2^{4\prime} + L_2^{5\prime}) \,\dl_{x'x''}
\bigr) \bigr|_\dl \,.
\label{eq:O3-delta} 
\end{equation}

\smallskip

We begin with the higgs-odd sectors, which, among other things,
determine $\ell$ and~$\ell'$.

\begin{prop} 
\label{pr:ell-ell'}
In the higgs-odd sectors $[a][b][H][H][H]$ and $[a][b][c][d][H]$,
string independence demands that the following conditions be met.
\begin{enumerate}[topsep=3pt, itemsep=0pt]
\item 
The symmetric tensor $C_{abc}$ in Proposition~\ref{pr:abcH} vanishes,
which entails $L_2^5 = 0$.
\item 
The symmetric matrix of higgs couplings $[k_{ab}]$ is of the form
\begin{equation}
k_{ab} = K m_a m_b P_{ab},
\label{eq:higgs-projector} 
\end{equation}
where the matrix $P$ is a projector: $P^2 = P = P^t$, and $K$ is real.
If $[k_{ab}]$ is nondegenerate, then $P = \one$, hence $[k_{ab}]$ is
diagonal and satisfies \eqref{eq:k-m2}, namely,
$k_{ab} = K m_a^2\,\dl_{ab}$\,.
\item 
The values of the cubic and quartic higgs self-couplings are
determined to be
\begin{equation}
\ell = -\half K m_H^2\,, \qquad \ell' = -\quarter K^2 m_H^2\,.
\label{eq:ell-ell'} 
\end{equation}
\end{enumerate}
\end{prop}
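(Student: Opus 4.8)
The plan is to resolve the third-order obstruction \eqref{eq:O3-delta} restricted to the two higgs-odd sectors $[a][b][H][H][H]$ and $[a][b][c][d][H]$, and to read off (i)--(iii) from the forced vanishing of the coefficient of a single irreducible Wick monomial in each. The first observation is the power-counting fact behind Remark~\ref{rk:pcb}: a power-counting-admissible induced interaction $L_3$ must be a Wick polynomial of degree~$4$ and scaling dimension~$4$, whereas the monomials that can survive in these sectors --- of type $u\phi^kH^{4-k}$ --- have degree~$5$. Such a monomial can come neither from $\dlc(L_3)$ (again of degree~$4$) nor from a $\del$-exact remainder $\gS_{xx'x''}\del_\mu Q^\mu_3$ (there is no degree-$5$, scaling-dimension-$3$ tensor without a $u$-factor to put into $Q^\mu_3$), so string independence forces its coefficient to vanish. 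Concretely, after discarding string deltas (Lemma~\ref{lm:uu2}, Remark~\ref{rk:no-string-deltas}), inspection of Tables~\ref{tbl:higgs-obstructions}--\ref{tbl:Proca-obstructions} shows that the obstruction in $[a][b][c][d][H]$ reduces to a multiple of $u\phi\phi\phi H\,\dl_{xx'x''}$, and that in $[a][b][H][H][H]$ to a multiple of $u\phi HHH\,\dl_{xx'x''}$, with no $\del$-exact parts; the overall tensor coefficients must therefore vanish, and these vanishing conditions are exactly (i)--(iii).

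For $[a][b][c][d][H]$: the $u\phi\phi\phi H$-coefficient receives contributions from exactly three families of crossings --- $Q^2_{1,\self}\bx L_2^{5\prime}$ (contracting $B$ with~$\phi$, which carries the $C_{abc}$-dependence), $Q_{1,\higgs}(BuH)\bx L_2^{2\prime}$ (again $B$ with~$\phi$, which brings in the products $k^*_{ab}=\sum_e k_{ae}m_e^{-2}k_{eb}$), and $Q_{1,\higgs}(\phi u\,\del H)\bx L_2^{3\prime}$ (contracting $\del H$ with an~$H$, bringing in $\ell\,k$ and $m_H^2\,k^*$); the nonabelian self-coupling pieces $L_2^1$ and $Q_2|_\dl$ reach only the higgs-free sectors. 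Setting this coefficient to zero and feeding in the second-order sum rules~\eqref{eq:ff-kk}, \eqref{eq:sum-rule} that already tie the $k$-products to $f$-products and masses, one extracts: $C_{abc}=0$ --- whence $L_2^5=0$, which is (i); the value $\ell=-\half Km_H^2$; and a relation $k^*_{ab}=K\,k_{ab}$ with a single scalar~$K$ (the constant is common to the simple and abelian summands of~$\g$ because a single coupling~$\ell$ is tied to it --- cf.\ Remark~\ref{rk:k-m2}(i)). Setting $\tilde k_{ab}:=k_{ab}/(m_am_b)$ turns $k^*=Kk$ into $\tilde k^2=K\tilde k$, so $P:=\tilde k/K$ is a symmetric idempotent and $k_{ab}=Km_am_bP_{ab}$, which is~\eqref{eq:higgs-projector}; if $[k_{ab}]$ is nondegenerate then $P$ is an invertible idempotent, hence $P=\one$, so $k_{ab}=Km_a^2\,\dl_{ab}$ and~\eqref{eq:k-m2} holds. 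This is (ii).

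For $[a][b][H][H][H]$: here $[k_{ab}]$ is already diagonal, so $k^*_{ab}=Kk_{ab}$, and the $u\phi HHH$-coefficient receives contributions only from $Q_{1,\higgs}(BuH)\bx L_2^{3\prime}$ (contracting $B_a$ with~$\phi_a$, giving $m_a^{-2}(3\ell k_{ab}+m_H^2k^*_{ab})$) and $Q_{1,\higgs}(\phi u\,\del H)\bx L_2^{4\prime}$ (contracting $\del H$ with an~$H$, $\propto\ell' k_{ab}$); the $L_2^5$-crossing is dead since $C_{abc}=0$, and the nonabelian pieces again live in $uAAAA$-type monomials and cannot interfere. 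Substituting $k_{ab}=Km_a^2\,\dl_{ab}$ and $\ell=-\half Km_H^2$ from the previous step, the vanishing condition collapses to one linear equation with solution $\ell'=-\quarter K^2 m_H^2$, which is (iii). This step is the precise analogue of the Abelian-higgs-model computation of~\cite{MundRS23}, which I would invoke, verifying only that the new nonabelian terms are of disjoint monomial type.

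The routine part is the bookkeeping of all crossings in the two sectors, with care for the $\th(m_e)$ suppression whenever a massless~$e$ would feed a spurious $m_e^{-2}$ through a $B$- or $\phi$-contraction, and for checking that no other monomials survive in these sectors once the string deltas are dropped. The main obstacle is the extraction of (ii): disentangling, from the single $u\phi\phi\phi H$-identity in $[a][b][c][d][H]$, the three consequences $C_{abc}=0$, $\ell=-\half Km_H^2$ and $k^*=Kk$ simultaneously, using the second-order sum rules and the Jacobi identity (Proposition~\ref{pr:Jacobi}), and then arguing --- via the fact that the value of~$\ell$ is transported across the simple summands of~$\g$ along the links $f_{abc}\neq0$, exactly as in Proposition~\ref{pr:k-m2} --- that a single global constant~$K$ governs all of~$[m_am_bk_{ab}]$, so that the latter is a scalar multiple of a projector.
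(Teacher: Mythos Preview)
Your overall architecture is right --- the two higgs-odd sectors each force the vanishing of the coefficient of an unresolvable degree-5 monomial --- but your extraction of (ii) and (iii) has genuine gaps.

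First, the second-order sum rules \eqref{eq:ff-kk}, \eqref{eq:sum-rule} play no role here and cannot do the work you assign them. They relate the \emph{antisymmetric} combinations $k_{ac}k_{bd}-k_{ad}k_{bc}$ to structure constants and masses; they say nothing about $k^*_{ab}=\sum_e k_{ae}m_e^{-2}k_{eb}$ or about~$\ell$. The paper's argument is purely linear-algebraic and uses \emph{both} sectors jointly. Writing $\mu_{ab}:=m_a^{-1}k_{ab}m_b^{-1}$, so that $k^*\leftrightarrow\mu^2$ and $k^{**}\leftrightarrow\mu^3$, the two vanishing coefficients become
\[
2\ell'\mu-3\ell\mu^2-m_H^2\mu^3=0,\qquad
6\ell\,\mu\ox\mu+m_H^2\,\mu^2\ox\mu+2m_H^2\,\mu\ox\mu^2=0.
\]
Tensoring the first on the left with $2\mu$ and multiplying the second by $1\ox\mu$, then adding, yields $4\ell'\,\mu\ox\mu=-m_H^2\,\mu^2\ox\mu^2$. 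A tensor identity of this shape forces $\mu^2$ to be a scalar multiple of~$\mu$, i.e., $\mu=KP$ with $P^2=P$; then $\ell'=-\quarter K^2m_H^2$ drops out immediately, and the second equation gives $\ell=-\half Km_H^2$. No Jacobi identity, no ``transport across simple summands'', no sum rules are needed --- and your proposed route through them is not visibly workable.

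Second, in your $[a][b][H][H][H]$ step you write ``here $[k_{ab}]$ is already diagonal''. Diagonality is a \emph{conclusion} of~(ii), valid only under the extra nondegeneracy hypothesis; the general statement is the projector form~\eqref{eq:higgs-projector}. The paper keeps the general $\mu$ throughout: the $[a][b][H][H][H]$ coefficient genuinely involves $k^{**}$ (that~is,~$\mu^3$), which you have suppressed by prematurely diagonalizing. Your ordering (extract everything from $[a][b][c][d][H]$, then use $[a][b][H][H][H]$ only for~$\ell'$) therefore misses information.

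Third, the mechanism for $C_{abc}=0$ is more specific than ``setting the coefficient to zero''. The crossing $Q^2_{1,\self}\bx L_2^{5\prime}$ yields, after splitting $f_{cde}m_{cde}^2m_c^{-2}$ into its parts symmetric and skew under $d\otto e$, a contribution proportional to $\sum_c f_{cde}\,C_{abc}$ that is \emph{skew} in $d\otto e$. All other $u\phi\phi\phi H$ contributions in this sector carry coefficients that are symmetric in the relevant pair (built from the symmetric $k$- and $k^*$-products). The skew piece therefore has nothing to cancel against and must vanish on its own, forcing $C_{abc}=0$. You should isolate this symmetry argument explicitly.
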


With these conditions satisfied, all obstructions in these sectors
cancel each other. In particular, there are no higgs-odd induced
interactions.

\begin{remk} 
\label{rk:C-abc}
If \eqref{eq:k-m2} holds by virtue of condition~(ii), then $C_{abc}$
is identically zero: see the proof of Proposition~\ref{pr:k-m2}. Its
vanishing by condition~(i) imposes no further constraints.
\end{remk}

\begin{proof}[Proof of Proposition~\ref{pr:ell-ell'}]
Notice that $L_2$ and $Q_2|_\dl$ contain no terms in sectors
$[a][H][H][H]$, since there are no crossings that could produce them
at second order. Indeed, $Q_2|_\dl$ is of type $u AAA$
by~\eqref{eq:Q2-explicit}.

\paragraph{Step 1:} the constraint in sectors $[a][b][H][H][H]$. The
crossings that produce obstructions in these sectors are
$Q_{1,\higgs} \bx L_2^{4\prime}$ with the pairing
$\sO_\mu(\del^\mu H,H')$, and $Q_{1,\higgs} \bx L_2^{3\prime}$ with
the pairing $\sO_\mu(B^\mu,\phi')$; plus the $[x \otto x']$ terms.

We compute:
\begin{align*}
& \sum_{ab} k_{ab} u_a \phi_b \,\sO_\mu(\del^\mu H, H')\, 4\ell' H'^3
= 4\ell' \sum_{ab} k_{ab} u_a \phi_b H^3 \,i\dlxx \,,
\\[\jot]
& \sum_{abc} k_{ac} u_a H \,\sO_\mu(B^\mu_c, \phi'_c)\,
(6\ell k_{cb} + 2 m_H^2 k^*_{cb}) \phi'_b H'^2
\\
&\quad = -i \sum_{abc} k_{ab} m_c^{-2}
(6\ell k_{cb} + 2 m_H^2 k^*_{cb}) u_a \phi_b H^3\,\dlxx \,.
\end{align*}
Adding the $[x \otto x']$ part, the total obstruction in this sector
is:
\begin{equation}
\sO^{(2)} \bigr|_{[a][b][H][H][H]}
= 4 \sum_{ab} (2\ell' k_{ab} - 3\ell k^*_{ab} - m_H^2 k^{**}_{ab})
u_a \phi_b H^3 \,i\dlxx,
\label{eq:aaHHH} 
\end{equation}
where $k^*_{ab} := \sum_c k_{ac} m_c^{-2} k_{cb}$ as before, and
$$
k^{**}_{ab} := \sum_c k_{ac} m_c^{-2} k^*_{cb}
= \sum_{cd} k_{ac} m_c^{-2} k_{cd} m_d^{-2} k_{db} \,.
$$ 
Note in passing that the matrices $[k^*_{ab}]$ and $[k^{**}_{ab}]$ are
symmetric. The quantity \eqref{eq:aaHHH} is the string variation of
terms of the type $\phi \phi HHH$. Because these have dimension~$5$,
they are not power-counting renormalizable. Thus, for the obstruction
to be resolvable, its total coefficient must vanish, i.e.,
$2\ell' k_{ab} - 3\ell k^*_{ab} - m_H^2 k^{**}_{ab} = 0$.

\paragraph{Step 2:} constraints in the sectors $[a][b][c][d][H]$.
These sectors contain contributions from the crossings
$Q_2|_\dl \bx L_{1,\higgs}$, $Q_{1,\higgs} \bx L_2^{2\prime}$, 
$Q_{1,\higgs} \bx L_2^{3\prime}$ and $Q_{1,\self} \bx L_2^{5\prime}$.
In the evaluation of $Q_2|_\dl \bx L_{1,\higgs}$, one must use
Lemma~\ref{lm:delta}: the factor $\dlxx$ included in $Q_2|_\dl(x,x')$
can be pulled out of the $\bx$ operation at the price of a
factor~$\half$. Now, this crossing only produces string deltas, see
Remark~\ref{rk:Omu-Anu}, so we ignore it here.

Once more from Table~\ref{tbl:Proca-obstructions} we get, in turn,
with sums over all indices understood:
\begin{align}
k_{ae} u_a B_e^\mu H
\bx -\quarter m_H^2 
k_{rb} k_{cd} \phi'_r \phi'_b \phi'_c \phi'_d
&= m_H^2 k^*_{ab} k_{cd} u_a \phi_b \phi_c \phi_d H\,i\dlxx,
\label{eq:uphi3H} 
\\ 
k_{ab} u_a \phi_b \del^\mu H
\bx (3\ell k_{cd} +  m_H^2 k^*_{cd}) \phi'_c \phi'_d H'^2
&= (6\ell k_{ab} k_{cd} + 2 m_H^2 k_{ab} k^*_{cd})
u_a \phi_b \phi_c \phi_d H \,i\dlxx,
\notag
\end{align}
as well as
\begin{equation}
f_{cde} m_{cde}^2 B_c u_d \phi_e
\bx \tfrac{1}{3} C_{abr} \phi'_a \phi'_b \phi'_r H'
= 2 f_{cde} m_{cde}^2 m_c^{-2} C_{abc}
u_d \phi_a \phi_b \phi_e H \,i\dlxx.
\label{eq:Cobs} 
\end{equation}
On the right-hand side of~\eqref{eq:Cobs}, we write
$f_{cde} m_{cde}^2 m_c^{-2}
= f_{cde} + f_{cde}(m_e^2 - m_d^2) m_c^{-2}$. The first summand is
skewsymmetric and the second is symmetric under $d \otto e$. Since all
other contributions of type $u \phi\phi\phi H$ in the above list have
coefficients that are symmetric in two pairs of indices, the part
$\sum_c f_{cde} C_{abc} u_d \phi_a \phi_b \phi_e H$ cannot be
cancelled by any other term. And because it is not a total derivative,
every $\sum_c f_{cde} C_{abc}$ must vanish. This forces $C_{abc} = 0$.

Adding $[x \otto x']$ to the remaining terms \eqref{eq:uphi3H}, one
gets
\begin{equation}
\sO^{(2)}\bigr|_{[a][b][c][d][H]}
= 2 \sum_{abcd} \bigl(
6\ell k_{ab} k_{cd} + m_H^2 k^*_{ab} k_{cd} 
+ 2 m_H^2 k_{ab} k_{cd}^* \bigr) u_a \phi_b \phi_c \phi_d H \,i\dlxx.
\label{eq:aaaaH} 
\end{equation}
These coefficients must be zero, along with those of~\eqref{eq:aaHHH}.
With the vanishing of \eqref{eq:aaHHH} and~\eqref{eq:aaaaH}, no
obstructions in the higgs-odd sectors remain.

\paragraph{Step 3:} fixing the higgs couplings. Consider the symmetric
matrix $\mu$ with entries $\mu_{ab} := m_a^{-1} k_{ab} m_b^{-1}$ and
notice that $(\mu^2)_{ab} = m_a^{-1} k^*_{ab} m_b^{-1}$ and
$(\mu^3)_{ab} = m_a^{-1} k^{**}_{ab} m_b^{-1}$. One can then rewrite
the vanishing of the respective coefficient matrices in
\eqref{eq:aaHHH} and~\eqref{eq:aaaaH} as:
\begin{equation}
2\ell' \mu - 3\ell \mu^2 - m_H^2 \mu^3 = 0, \qquad
6\ell \mu \ox \mu + m_H^2 \mu^2 \ox \mu + 2 m_H^2 \mu\ox \mu^2 = 0.
\label{eq:moo-moo} 
\end{equation}
Tensoring the first with $2\mu$ from the left, multiplying the second
by $1 \ox \mu$, and adding the results, one is left with:
\begin{equation}
4\ell' \mu \ox \mu = -m_H^2 \mu^2 \ox \mu^2.
\label{eq:ell'-emerges} 
\end{equation}
This is only possible if $\mu^2$ is a multiple of~$\mu$, hence $\mu$
is a multiple of a projector~$P$, that is to say, $\mu = KP$ for some
real number~$K$. With that, \eqref{eq:ell'-emerges} yields
$\ell' = -\quarter K^2 m_H^2$. Then, with $\mu^2 = K \mu$ because
$P^2 = P$, the second equality in~\eqref{eq:moo-moo} also gives
$\ell = -\half K m_H^2$. 

Now $P^2 = P$ also implies $k^*_{ab} = K\,k_{ab}$ and
$k^{**}_{ab} = K^2\,k_{ab}$. And if $[k_{ab}]$ is nondegenerate, then
necessarily $P = \one$ and
$k_{ab} = m_a \mu_{ab} m_b = K m_a^2\,\dl_{ab}$ makes $[k_{ab}]$
diagonal; and the relation \eqref{eq:k-m2} is satisfied (with the same
constant~$K$).
\end{proof}

We arrive at a happy conclusion.

\begin{prop} 
\label{pr:L3}
There are no obstructions at third order in the higgs-even sectors
either, and hence no induced interaction: $L_3 = 0$. This is true
irrespective of the value of~$c_F$.
\end{prop}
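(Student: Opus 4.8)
The pivot is Remark~\ref{rk:pcb}: a genuine induced interaction $L_3$ would be a Wick polynomial of degree five and scaling dimension four, which cannot exist; hence string independence at third order can fail only if the symmetrized obstruction~\eqref{eq:O3}, after its $\del u$-pieces are integrated by parts and its string deltas are set aside (those resolve automatically into $Q_3$, cf.\ Remark~\ref{rk:no-string-deltas} and footnote~\ref{fn:a-priori}), still contains a term of the shape $u$ times a monomial of degree four in the dimension-one fields $A$, $\phi$, $B$, $H$. So the whole proof amounts to computing, in each higgs-even sector, the coefficient of every such $u$-monomial and showing that it vanishes by virtue of the relations already secured at first and second order. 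After the string deltas are dropped, the only higgs-even sectors that receive contributions are $[a][b][c][d][e]$ (no higgs) and $[a][b][c][H][H]$ (two higgses): the four-higgs sector $[a][H][H][H][H]$ is inert, every crossing that might feed it landing instead in a higgs-odd sector already dealt with in Proposition~\ref{pr:ell-ell'}. Thus one must resolve~\eqref{eq:O3-delta} restricted to these two sectors.

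In the purely bosonic sector $[a][b][c][d][e]$ the contributing crossings are $\gS_{xx'x''}(Q_2|_\dl \bx L_{1,\self}'')$ and $\gS_{xx'x''}(Q_{1,\self} \bx (L_2^1 + L_2^2)'|_\dl\,\dl_{x'x''})$, because $L_2^1$ of~\eqref{eq:four-aces} and the induced quartic escort term $L_2^2$ of~\eqref{eq:L2-2} are the only second-order pieces reaching these sectors (recall $L_2^5 = 0$ by Proposition~\ref{pr:ell-ell'}(i)). The coefficient of $uAAAA$ is precisely the third-order expression of the pure Yang--Mills analysis of \cite{Borisov, Phocaea} following~\cite{Gass22a}; it vanishes identically by the Jacobi identity of Proposition~\ref{pr:Jacobi}, the content being that $L_2^1$ is exactly the quartic needed to complete the cubic $FAA$ vertex, so that no quintic self-coupling is induced. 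Crucially $c_F$ enters only through the common factor $(1 + c_F)$ shared by $Q_2|_\dl$ in~\eqref{eq:Q2-explicit} and by $L_2^1$, hence it drops out of the resolvability condition --- equivalently, the $c_F$-dependent pieces of $Q_2|_\dl \bx L_{1,\self}''$ and $Q_{1,\self} \bx L_2^{1\prime}$ reassemble into a total derivative by the same integration by parts used in Step~3 of the proof of Proposition~\ref{pr:four-aces}. The remaining monomials $u\phi AAA$, $u\phi\phi AA$, $u\phi\phi\phi A$ and $u\phi\phi\phi\phi$ all carry the escort couplings $f_{abc}\,m_{abc}^2$ of $L_{1,\self}^2$; after relabelling and using complete skewsymmetry of $f_{abc}$, their coefficients collapse onto combinations that vanish by the $ff$--$kk$ relation~\eqref{eq:ff-kk} of Proposition~\ref{pr:ffkk}, by $C_{abc} = 0$, and by the form $k_{ab} = K m_a m_b P_{ab}$ from~\eqref{eq:higgs-projector}. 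No new constraint is produced.

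In the two-higgs sector $[a][b][c][H][H]$ the only second-order piece reached is $L_2^3$ of~\eqref{eq:aaHH}, since $Q_2|_\dl \bx L_{1,\higgs}$ produces --- as in Step~2 of the proof of Proposition~\ref{pr:ell-ell'}, via Lemma~\ref{lm:delta} --- only string deltas. One is thus left with $\gS_{xx'x''}((Q_{1,\higgs} + Q_{1,\self}) \bx L_2^{3\prime}\,\dl_{x'x''})$. Inserting $\ell = -\half K m_H^2$, $\ell' = -\quarter K^2 m_H^2$ from~\eqref{eq:ell-ell'} and $k^*_{ab} = K k_{ab}$ (valid once $P^2 = P$, by Proposition~\ref{pr:ell-ell'}), the purely higgs contractions cancel exactly as in the Abelian treatment~\cite{MundRS23}; the only genuinely new piece is the contraction of the escort vertex $f_{cde}\,m_{cde}^2\,B_c u_d \phi_e$ of $Q_{1,\self}$ with the $\phi\phi H^2$ of $L_2^3$, whose $u\phi\phi H^2$-coefficient is proportional to $\sum_c f_{cde}\,m_{cde}^2\,m_c^{-2}\,(3\ell k_{cb} + m_H^2 k^*_{cb})$ and vanishes for exactly the reason that $C_{abc} = 0$ does, namely the skewsymmetry of $f_{abc}\,m_{abc}^2$ under exchange of its first and third indices. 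Hence every higgs-even obstruction cancels, no $u$-times-dimension-four residual survives, and by Remark~\ref{rk:pcb} nothing forces an induced term: $L_3 = 0$, with $c_F$ left undetermined.

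I expect the main obstacle to lie in the bookkeeping of the $[a][b][c][d][e]$ sector, where the massless Yang--Mills cancellation must be reproduced while the escort couplings $f_{abc}\,m_{abc}^2$ and the induced quartic $L_2^2$ are simultaneously active: one must verify that each mixed monomial $u\phi^k A^{4-k}$ is proportional to a combination already killed by a second-order identity rather than producing a fresh third-order constraint, and in particular that $c_F$ is never pinned down. A subsidiary care, throughout, is to keep track via Lemmas~\ref{lm:uu2} and~\ref{lm:delta} of which contributions are string deltas and may therefore be discarded.
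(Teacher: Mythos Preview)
Your scaffolding is sound --- only the sectors $[a][b][c][d][e]$ and $[a][b][c][H][H]$ matter, and Remark~\ref{rk:pcb} reduces the task to showing that no residual $u\times(\text{degree-four monomial})$ survives --- but two of your claimed cancellation mechanisms are wrong, and a third is misapplied.

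First, in the purely bosonic sector you assert that monomials of types $u\phi AAA$, $u\phi\phi AA$, $u\phi\phi\phi A$ arise and are killed by the $ff$--$kk$ relation~\eqref{eq:ff-kk} together with~\eqref{eq:higgs-projector}. In fact these monomials never appear in the delta part of the third-order obstruction: from Table~\ref{tbl:Proca-obstructions} one has $\sO_\mu(B^\mu,A')=0$ and $\sO_\mu(\phi,A')=0$, while by Remark~\ref{rk:Omu-Anu} the obstructions $\sO_\mu(A_\nu,X')$ for $X'=A',B',\phi'$ carry no delta parts. Hence $Q_{1,\self}^2\bx L_2^{1\prime}$ and $Q_2|_\dl\bx L_{1,\self}^{2\prime\prime}$ contribute only string deltas. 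The sole $\phi$-bearing crossings with delta parts are $Q_{1,\self}^2\bx L_2^{2\prime}$ and $Q_{1,\self}^2\bx L_2^{3\prime}$; the paper dispatches both in one stroke, since each factors through $Q_{1,\self}\bx\sum k_{ad}\phi'_a\phi'_d$, whose coefficient is $C_{abc}$ and hence zero. Relation~\eqref{eq:ff-kk} plays no role whatsoever at third order.

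Second, your account of $c_F$-independence is incorrect. It is not that $(1+c_F)$ is a common factor that ``drops out of the resolvability condition'': in the crossing $Q_2|_\dl\bx L_{1,\self}^{1\prime\prime}$ the pairing of the $A$-fields in $Q_2|_\dl$ with $F''$ brings in a \emph{further} factor~$c_F$ from $\sO_{[\mu}(A_{\nu]},F')$, so that contribution is proportional to $(1+c_F)c_F$, which does not match the $(1+c_F)$ of $Q_{1,\self}^1\bx L_2^{1\prime}$. The paper shows instead that each of these two crossings vanishes \emph{separately} by the Jacobi identity (the explicit computations ending in~\eqref{eq:first-obstruction-null} and~\eqref{eq:second-obstruction-null}); that is the actual reason $c_F$ remains free.

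Third, in the $[a][b][c][H][H]$ sector you include $Q_{1,\higgs}\bx L_2^{3\prime}$, but its available contractions $\sO_\mu(B^\mu,\phi')$ and $\sO_\mu(\del^\mu H,H')$ send it into the higgs-odd sectors already handled in Proposition~\ref{pr:ell-ell'}. The only genuinely higgs-even contribution from $L_2^3$ is $Q_{1,\self}\bx L_2^{3\prime}$, and as just noted that is again proportional to~$C_{abc}$.
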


\begin{proof}
With the results of Proposition~\ref{pr:ell-ell'}, inspection of
Tables \ref{tbl:higgs-obstructions} and~\ref{tbl:Proca-obstructions}
shows that only a few other crossings may contribute terms without
string deltas:
\begin{enumerate}[topsep=3pt, itemsep=0pt]
\item 
$Q_{1,\self} \bx L_2^{2\prime} = Q_{1,\self} \bx -\quarter m_H^2
\bigl( \sum_{ab} k_{ab} \phi'_a \phi'_b \bigr)^2$ in sectors
$[a][b][c][d][e]$;
\item 
$Q_{1,\self} \bx L_2^{3\prime} = Q_{1,\self} \bx (3\ell + m_H^2 K)
\bigl( \sum_{ab} k_{ab} \phi'_a \phi'_b \bigr) H'^2$ in sectors
$[a][b][c][H][H]$;
\item 
$Q_{1,\self} \bx L_2^{1\prime}$ in sectors $[a][b][c][d][e]$;
\item 
$Q_2|_\dl \bx L_1^{1\prime\prime}$ in sectors $[a][b][c][d][e]$.
\end{enumerate}

For items (i) and (ii), it is enough to consider the crossing:
\begin{align*}
Q_{1,\self} \bx \sum_{ad} k_{ad} \phi'_a \phi'_d
&= \sum_{abcde} f_{ebc} m_{ebc}^2 B_e u_b \phi_c 
\bx k_{ad} \phi'_a \phi'_d
\\ 
&= - \sum_{abce} 2 f_{ebc} m_{ebc}^2 m_e^{-2} k_{ae} 
u_b \phi_c \phi_a \,i\dlxx
= - \sum_{abc} C_{abc} u_b \phi_a \phi_c \,i\dlxx = 0,
\end{align*}
by Proposition~\ref{pr:ell-ell'}(i). Thus, the crossings in (i)
and~(ii) vanish.

Now we examine item (iii), of the type $FuA + Bu\phi \bx (AAAA)''$,
which comes with a factor $(1 + c_F)$. Here we need only the delta
part of $\sO_\mu(F^{\mu\nu}, A')$ in
Table~\ref{tbl:Proca-obstructions}. To shorten the expressions, one
may write contractions with structure constants symbolically as
commutators:
\begin{equation}
\sum_{abc} i f_{abc} X_a Y_b Z_c 
=: \sum_c [X,Y]_c Z_c = \sum_a X_a [Y,Z]_a \,.
\label{eq:commutators} 
\end{equation}
With this notation, omitting the factor $(1 + c_F)$ in
$L_2^{1\prime}$, we reach:
\begin{align*}
Q_{1,\self}^1 &= -2i \sum_a F_a^{\mu\nu} [u,A_\nu]_a \,,
\\
L_2^{1\prime} &= 2 \sum_a [A'^\ka, A'^\la]_a [A'_\ka, A'_\la]_a
= 2 \sum_a A_a'^\ka [A'^\la, [A'_\ka, A'_\la]]_a \,.
\end{align*}
This yields
\begin{align*}
Q_{1,\self}^1 \bx L_2^{1\prime} 
\eqIdl 4 &\sum_a [u,A_\nu]_a [A'^\la, [A'_\ka, A'_\la]]_a
\eta^{\nu\ka} \,\dlxx
\\
= 4 & \sum_a [[u,A^\ka], A^\la]_a [A_\ka, A_\la]_a \,\dlxx.
\end{align*}
The Jacobi identity
$[[u,A^\ka], A^\la] - [[u,A^\la], A^\ka] = [u, [A^\ka, A^\la]]$
reduces this to
$$
2 \sum_a [u,[A^\ka, A^\la]]_a [A_\ka, A_\la]_a \,\dlxx
= 2 \sum_a u_a \bigl[ [A^\ka, A^\la],
[A_\ka, A_\la] \bigr]_a \,\dlxx = 0.
$$

For item (iv), of type $uAAA \bx (FAA + BA\phi)''$, which also comes
with a factor $(1 + c_F)$, we require two-point obstructions
$\sO_\mu(A_\nu, X')$ without Lorentz contraction or skewsymmetrization
of~$\mu,\nu$. These are not listed in
Table~\ref{tbl:Proca-obstructions}, but since string deltas may be
ignored, we may replace such $\sO_\mu(A_\nu, X')$ by their delta parts
$\half \sO_{[\mu}(A_{\nu]}, X')$ -- see Remark~\ref{rk:Omu-Anu}.
Then, in
$$
\sO_\mu \Bigl( \sum_a [u,A_\nu]_a [A^\mu,A^\nu]_a,
\sum_b F_b'^{\ka\la} [A'_\ka,A'_\la]_b \Bigr),
$$
we must pair with $F'$ each of the three $A$-fields on the left.
Pairing of $A_\nu$ with~$F'$ results in:
\begin{align}
c_F &\sum_a [u,\dl_\mu^{[\ka} \dl_\nu^{\la]}
[A'_\ka,A'_\la]\,i\dlxx ]_a [A^\mu, A^\nu]_a
\notag \\
&= 2c_F \sum_a u_a 
\bigl[ [A^\ka, A^\la], [A_\ka, A_\la] \bigr]_a \,i\dlxx = 0.
\label{eq:first-obstruction-null} 
\end{align}
The pairing of $A^\mu$ with~$F'$ requires $\sO_\mu(A^\mu, F')$, which
has no delta part. The skewsymmetrized pairing of $A^\nu$ with~$F'$
yields
\begin{align}
\frac{1}{2}\,c_F &\sum_a [u, A_\nu]_a \bigl[ A^\mu, 
\eta^{\nu\rho} \dl_{[\mu}^{[\ka} \dl_{\rho]}^{\la]} [A'_\ka, A'_\la]
\,i\dlxx \bigr] 
\notag \\
&= 2c_F \sum_a [[u,A^{[\ka}], A^{\la]}]_a [A'_\ka, A'_\la]_a \,i\dlxx.
\label{eq:second-obstruction} 
\end{align}
Thanks to  the Jacobi identity, this again equals:
\begin{equation}
2 c_F \sum_a [u,[A^\ka, A^\la]]_a [A'_\ka, A'_\la]_a \,i\dlxx 
= 2 c_F \sum_a u_a \bigl[
[A'^\ka, A'^\la], [A'_\ka, A'_\la] \bigr]_a \,i\dlxx = 0.
\label{eq:second-obstruction-null} 
\end{equation}
Thus all obstructions vanish identically, irrespective of the value
of~$c_F$ in (iii) and~(iv).
\end{proof}

\begin{remk} 
\label{rk:termination}
The termination of induced couplings after the second order is a
\textit{necessary} feature, because higher-order induced interactions
would not be renormalizable.
\end{remk}

\subsection{Higher orders $n > 3$}
\label{ssc:Icarus}

Higher-order interactions $L_n$ are determined by the parts of
$\sO^{(n)}$ without string deltas. At fourth order,
$\sO(Q_1; L_3)|_\dl = 0$ because $L_3 = 0$;
$\sO(Q_3; L_1''')|_\dl = 0$ because $Q_3|_\dl = 0$; and
$\sO(Q_2|_\dl; L_2'') = 0$ since $Q_2|_\dl$ and $L_2$ are Wick
polynomials in $u$ and~$A$, with $\sO_\mu(A^\mu, A')|_\dl = 0$.

This argument generalizes to all orders by induction, see also
Remark~\ref{rk:pcb}. The only open question is whether all
obstructions with string deltas are total derivatives, i.e., the
existence of $Q_n|_{I\dl}$.

\section{Outlook}
\label{sec:outlook}

We have studied interactions between particles of spin and helicity
$1$ and scalar particles on the string-localized Hilbert-space fields
provided by sQFT. Given the particle content of the electroweak
theory, the condition of string independence (SI) of the $\bS$-matrix
fixes all coupling coefficients, up to a freedom of renormalization,
see Remark~\ref{rk:cF}, and predicts precisely the known interactions
of the Standard Model.

We have also laid the grounds for an SI analysis of more general
models of massive and massless vector bosons. Resolution of
obstructions to~SI in the general case consists of a plethora of
polynomial constraints on coupling coefficients and masses. Such a
general solution may be quite difficult to characterize. It might be
interesting to know whether GUT models with SSB satisfy all
consistency constraints of sQFT.

\smallskip

For the models with one scalar particle (one higgs) studied in this
paper, we may define skewsymmetric matrices
\begin{equation}
(\ga^a)_{cd} :=  \frac{m_{cad}^2}{2 m_c m_d}\, f_{cad}
\equiv \frac{m_{cad}^2}{2 m_c m_d}\, \ad(\xi^a)_{cd} \,,
\label{eq:gamma} 
\end{equation}
whose indices run over the ``massive'' particles $c,d$ only. For
massless $a$, $m_{cad}^2 = 2m_c m_d$ holds, so $(\ga^a)_{cd}$ equals
$f_{cad} = \ad(\xi^a)_{cd}$, the adjoint representers of the
``massless'' Lie subalgebra $\gh$ that organizes the massless
particles into multiplets (representations of $\gh$), according to
Proposition~\ref{pr:boson-sector}.

When the Lie algebra structure constants and the higgs coupling
coefficients $k_{ab}$ are expressed in terms of the matrices $\ga^a$
and the matrix projector $P$ of Eq.~\eqref{eq:higgs-projector}, all
conditions for string independence, namely conditions (i) and~(ii) in
Proposition~\ref{pr:ell-ell'} together with Eq.~\eqref{eq:ff-kk}, can
be displayed as a system of matrix equations:
\begin{gather} 
P^2 = P = P^t,  \qquad  [P, \ga^a] = 0 \quad \text{(all $a$)},
\notag \\
[\ga^a, \ga^b] = \sum_{e:m_e=m_b} f_{abe} \ga^e \quad
\text{($a$ massless, $b$ massive)}, 
\label{eq:multiplets} 
\\
[\ga^a, \ga^b] - \sum_e f_{abe} \ga^e 
= \quarter m_a m_b K^2\, P^a \w P^b  \quad
\text{($a,b$ massive)},
\label{eq:deformed-Lie} 
\end{gather}
where $P^a$ are the column vectors of~$P$ and the sums in
\eqref{eq:deformed-Lie} run over all indices $e$, massive or massless.
This rewriting teases out an algebraic structure underlying the SI
conditions that could be of use to analyze more general admissible
mass patterns. In particular, \eqref{eq:multiplets} says that the
adjoint action of $\gh$ on the space of massive $\ga^b$ coincides with
its action on the space of massive $\xi^b$, which splits it into
representations of the Lie algebra~$\gh$. By \eqref{eq:deformed-Lie},
the higgs couplings compensate for the failure of the
``mass-deformed'' massive generators $\ga^a$ to satisfy the Lie
algebra of the~$\xi^a$.

\begin{remk} 
\label{rk:sum-rule}
On dividing the sum rule \eqref{eq:sum-rule} by $m_a^2 m_b m_c$, the
right-hand side becomes $K^2 (P_{aa} P_{bc} - P_{ab} P_{ac})$. Summed
over~$a$, this is $K^2(r - 1) P_{bc}$, where $r$ is the rank of the
projector~$P$. Thereby, \eqref{eq:sum-rule} gives an explicit formula
for $P_{ab}$ in terms of the masses and the structure constants of the
Lie algebra. The idempotent property $P^2 = P$ is then a direct
constraint relation (not involving $k_{ab}$) among the latter.
\end{remk}

\appendix

\section{Two-point obstructions}
\label{app:two-pt-obstructions}

This appendix outlines the construction of Tables
\ref{tbl:higgs-obstructions} and~\ref{tbl:Proca-obstructions} in
Section~\ref{sec:in-the-way}.

\subsection{Two-point functions}
\label{aps:two-pointers}

Let $W_m(x - x') = \pq{\vf(x)}{\vf(x')}$ be the two-point function of
a free scalar field of mass~$m$, so that
$(\square + m^2) W_m(x - x') = 0$.

For two-point functions involving derivatives of
fields we apply the rules
$$
\pq{\del^\mu X(x)}{Y(x')} = \del^\mu\bigl( \pq{X}{Y} \bigr)(x - x')
\word{and}
\pq{X(x)}{\del'^\nu Y(x')} = -\del^\nu\bigl( \pq{X}{Y} \bigr)(x - x').
$$
This settles all two-point functions of the higgs field and its
derivatives, in particular:
\begin{equation}
\pq{\del_\mu H(x)}{\del'_\nu H(x')} 
= - \del_\mu \del_\nu W_{m_H}(x - x').
\label{eq:twopt-dH-dH} 
\end{equation}

Turning to the fields in the Proca sector, recall from
Sect.~\ref{ssc:notations} that
$A_\mu(x,c) = I^\nu_c F_{\mu\nu}(x)$,
$B_\nu(x) = - m^{-2} \del_\mu F^{\mu\nu}(x)$, 
$\phi(x,c) = I^\mu_c B_\mu(x)$, and $u(x,h) = \dlc(\phi(x,c))$. One
therefore obtains all two-point functions of string-localized fields
from
\begin{equation}
\pq{F_{\mu\nu}(x)}{F_{\ka\la}(x')} 
= (\eta_{\mu\ka} \del_\nu\del_\la - \eta_{\nu\ka} \del_\mu\del_\la 
+ \eta_{\nu\la} \del_\mu\del_\ka - \eta_{\mu\la} \del_\nu\del_\ka)
W_m(x - x'),
\label{eq:twopt-F-F} 
\end{equation}
by applying the rules:
$$
\pq{I_c^\mu X(x)}{Y(x')} = I_c^\mu\bigl( \pq{X}{Y} \bigr)(x - x')
\word{and}
\pq{X(x)}{I_c^\nu Y(x')} = I'^\nu_c\bigl( \pq{X}{Y} \bigr)(x - x'),
$$
where $I'^\nu_c$ acts on the argument~$x'$. Let us abbreviate
$I^\mu \equiv I_c^\mu$ and $I'^\nu \equiv I'^\nu_c$, as well as
$X \equiv X(x)$ and $X' \equiv X(x')$ for fields. The argument of 
every two-point function is $(x - x')$. Formula
\eqref{eq:our-TFC} now reads $(\del I) = (I \del) = -\id$, and also
$(\del I') = (I' \del) = +\id$. On using the Klein--Gordon equation,
this yields in particular:
\begin{align}
\pq{B_\mu}{B'_\nu} \equiv \pq{B_\mu(x)}{B_\nu(x')}
= -(\eta_{\mu\nu} + m^{-2} \del_\mu \del_\nu) W_m(x - x'),
\label{eq:twopt-B-B} 
\end{align}
The same rules apply in the photon sector, using the two-point
function \eqref{eq:twopt-F-F} with $m = 0$ for the Faraday tensor, and
the definitions $A_\mu(x,c) := I^\nu_c F_{\mu\nu}(x)$ and
$u := -I^\mu_c \dlc(A_\mu)$.

\subsection{Propagators}
\label{aps:good-propagators}

Defining time-ordered vacuum expectation values naïvely with the help
of the Heaviside function $\theta(x^0 - x'^0)$ is in general
illegitimate, since one is multiplying distributions. For
point-localized fields, it is well known that locality and covariance
ensure that the naïve definition is well defined outside the
``diagonal'' set $x = x'$. Therefore, one needs to extend that naïve
definition to the points $x = x'$.

One extension is given by the so-called ``kinematic'' propagator,
which amounts to replacing $W_m$ by $i D^F_m$, where $D^F_m$ denotes
the Feynman propagator \eqref{eq:Feynman-propagator} of a scalar field
of mass~$m$. However, the extension is in general not unique: one may
add (derivatives of) $\dl(x - x')$ with the correct symmetry and
Lorentz transformation behaviour. This ``renormalization'' of
propagators is constrained by the condition that it must not exceed
the scaling dimension of the kinematic propagator.

For string-localized fields, regarded as distributions in $x$ and~$e$,
the situation looks far more delicate because the ``string diagonal''
consists of all points $x + se = x' + s'e'$ ($s,s' \geq 0$). However,
Ga{\ss} showed in \cite[Thm.~4.5]{Gass22b} that when the
string-localized fields are smeared with~$c(e)$ and regarded as
distributions in $x$~only, the relevant diagonal is again $x = x'$. In
particular, this rules out nontrivial commutation between the
operations of time-ordering $\T$ and string variation $\dlc$, which in
principle should be taken into account, since obstructions of this
sort vanish after smearing with $c(e)$. Therefore the allowed
renormalizations are still just of the type $\dl(x - x')$ and its
derivatives, occurring only when the scaling dimension is~$\geq 4$.

In the current context, since string integrations lower the scaling
dimension, only the propagators of local fields with scaling
dimension~$2$ admit in principle such renormalizations. This pertains
to the time-ordering of \eqref{eq:twopt-dH-dH}, \eqref{eq:twopt-F-F},
and \eqref{eq:twopt-B-B}:
\begin{align}
\Tpq{\del_\mu H(x)}{\del'_\nu H(x')} 
&= -i \del_\mu \del_\nu \,D^F_{m_H}(x - x')
+ i c_H \eta_{\mu\nu} \,\dl(x - x'),
\notag \\                     
\Tpq{F_{\mu\nu}(x)}{F_{\ka\la}(x')}
&= -i \del_{[\mu} \eta_{\nu][\ka} \del_{\la]} \,D^F_m(x - x')
- i c_F \eta_{\mu[\ka} \eta_{\la]\nu} \,\dl(x - x'),
\notag \\                    
\Tpq{B_\mu(x)}{B_\nu(x')} 
&= -i(\eta_{\mu\nu} + m^{-2} \del_\mu \del_\nu) D^F_m(x - x')
+ i c_B m^{-2} \eta_{\mu\nu} \,\dl(x - x').                      
\label{eq:propagator-list} 
\end{align}
The real coefficients $c_H, c_F, c_B$ are free parameters at this
point. All other propagators are ``kinematic'', that is, they are
given by replacing $W_m$ by $iD^F_m$.

\subsection{Computing two-point obstructions}
\label{aps:nuisances}

We determine here the two-point obstructions \eqref{eq:2pt-obst-def}
for all relevant fields. In the first term of~\eqref{eq:2pt-obst-def},
with the derivative inside the $\T$-product, one uses the equations of
motion \eqref{eq:eom} for the fields and computes the resulting
propagators as in the previous subsection. One then subtracts the
derivatives of the propagators in the second term. The Green function
property \eqref{eq:Feynman-propagator} of the Feynman propagators
produces delta functions, added to the deltas appearing in the
renormalized propagators (when applicable).

An example may suffice to illustrate the general procedure. From
$$
\pq{B^\mu(x)}{\phi(x')} 
= I'_\ka \bigl( -\eta^{\mu\ka} - m^{-2} \del^\mu\del^\ka \bigr)
W_m(x - x')
= -(I'^\mu + m^{-2} \del^\mu) W_m(x - x')
$$
using $(I'\del) = \id$, we conclude
$\Tpq{B^\mu(x)}{\phi(x')} = -(I'^\mu + m^{-2}\del^\mu)iD^F_m(x - x')$.
Thus
\begin{align*}
\sO_\mu(B^\mu,\phi') &:= \Tpq{\del_\mu B^\mu(x)}{\phi(x')}
- \del_\mu\Tpq{B^\mu(x)}{\phi(x')}
\\
&= \del_\mu(I'^\mu + m^{-2}\del^\mu) i D^F_m(x - x')
= m^{-2}(\square + m^2) iD^F_m(x - x')
\\
&= -m^{-2} \,i\dl(x - x'). 
\end{align*}

This results in the Tables \ref{tbl:higgs-obstructions}
and~\ref{tbl:Proca-obstructions} of two-point obstructions. The last
line of Table~\ref{tbl:Proca-obstructions} is obtained by string
variation $\dlc$ of the line before~it. All entries also pertain to
$\sO_\mu(X(x,c), Y(x',c'))$ with $c'$ independent of~$c$.

\begin{remk} 
\label{rk:Omu-Anu}
Table~\ref{tbl:Proca-obstructions} displays only the
Lorentz-contracted and skewsymmetrized parts of the two-point
obstructions $O_\mu(A_\nu, X')$ that are needed at second order. The
traceless symmetric part is not obtained with this approach, because
one cannot use the equations of motion for propagators
$\Tpq{\del_\mu A_\nu}{X'}$. Fortunately, those are not needed at
second order; and at third order only the part of $\sO_\mu(A_\nu, X')$
without string deltas is required. This is easily found: because
$\Tpq{\del_\mu A_\nu}{X'}$ for $X' = A',B',\phi'$ have respective
scaling dimensions $3,3,2$, the corresponding $\sO_\mu(A_\nu, X')$
cannot include delta parts (having dimension at least~$4$). For
$X' = F'$, the propagator $\Tpq{\del_\mu A_\nu}{F'}$ of dimension~$4$
does admit a delta part of $\sO_\mu(A_\nu, F'^{\ka\la})$. Now, for
Lorentz-symmetry reasons, it must be skewsymmetric in $\mu \otto \nu$,
and therefore it equals $\half \sO_{[\mu}(A_{\nu]}, X')$ in
Table~\ref{tbl:Proca-obstructions}. The suppressed traceless symmetric
parts of $O_\mu(A_\nu, X')$ are purely string deltas.
\end{remk}

\section{Some details of the second-order resolutions}
\label{app:some-lemmata}

Here we give a few lemmas that complete the determination of $L_2$
and~$Q_2$ in the resolution \eqref{eq:seconds-out} at second order.

\subsection{Disposing of string deltas}
\label{aps:string-delta-anathema}

We show that all second-order obstructions involving string deltas are
automatically derivatives, contributing to $Q_2|_{I\dl}$. See
Remark~\ref{rk:no-string-deltas}.

First comes a preparatory observation.

\begin{lemma} 
\label{lm:L1}
If $\dlc L_1(c) = \del Q_1$ where $L_1$ is a Wick polynomial in the
fields $A_a$, $\phi_a$ and string-independent fields, and $Q_1$ is
linear in~$u_a$, then:
\begin{equation}
\mathrm{(i)}\quad
\pd{L_1}{A_{a\mu}} = \pd{Q^\mu_1}{u_a},  \qquad
\mathrm{(ii)}\quad
\pd{L_1}{\phi_a} = \del_\mu \biggl( \pd{Q^\mu_1}{u_a} \biggr), \qquad
\mathrm{(iii)}\quad
\pd{L_1}{\phi_a} = \del_\mu \biggl( \pd{L_1}{A_{a\mu}} \biggr).
\label{eq:L1} 
\end{equation}
In particular, when $L_1$ does not contain $\phi_a$, the quantity
$\del L_1/\del A_{a\mu}$ is conserved.
\end{lemma}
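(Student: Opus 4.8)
The plan is to argue purely by the Leibniz rule for the string variation $\dlc$, using that $L_1$ is a Wick polynomial in the string-dependent generators $A_{a\mu}$, $\phi_a$ (appearing undifferentiated) and in string-independent fields on which $\dlc$ vanishes. First I would record that $\dlc$ is a derivation and that, by \eqref{eq:dc-A-massive}, Lemma~\ref{lm:dc-A}, and $\dlc\phi_a = u_a$, it acts on the generators by $\dlc(A_{a\mu}) = \del_\mu u_a$ and $\dlc(\phi_a) = u_a$, while $\dlc = 0$ on $F_a$, $B_a$, $H$, $\del_\mu H$. Since $A_{a\mu}$ and $\phi_a$ occur only undifferentiated in $L_1$, the functional derivatives $\partial L_1/\partial A_{a\mu}$ and $\partial L_1/\partial\phi_a$ are unambiguous, and the chain rule gives
\[
\dlc L_1 = \sum_a \pd{L_1}{A_{a\mu}}\,\del_\mu u_a + \sum_a \pd{L_1}{\phi_a}\,u_a ,
\]
where both coefficients are Wick polynomials containing no $u$-field.

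Next I would expand the other side of the hypothesis $\dlc L_1 = \del_\mu Q_1^\mu$. Because $Q_1^\mu$ is linear in the $u_a$, one may write $Q_1^\mu = \sum_a (\partial Q_1^\mu/\partial u_a)\,u_a$ with $u$-independent coefficients, whence
\[
\del_\mu Q_1^\mu = \sum_a \pd{Q_1^\mu}{u_a}\,\del_\mu u_a + \sum_a \del_\mu\!\Bigl( \pd{Q_1^\mu}{u_a} \Bigr) u_a .
\]
Equating the two displays and comparing, for each $a$ and each $\mu$, the coefficients of the undifferentiated field $u_a$ and of $\del_\mu u_a$ gives (ii) and (i) respectively; then (iii) is obtained by substituting (i) into (ii). For the ``in particular'' clause: if $L_1$ does not involve $\phi_a$ then $\partial L_1/\partial\phi_a = 0$, so (iii) reads $\del_\mu\bigl(\partial L_1/\partial A_{a\mu}\bigr) = 0$, i.e.\ $\partial L_1/\partial A_{a\mu}$ is conserved.

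The one step meriting care is the coefficient comparison: since this is an equality of genuine Wick polynomials (not ``mod $\partial$''), no integration-by-parts identity can mix the $u_a$ terms with the $\del_\mu u_a$ terms, and $u_a$ together with its derivatives are linearly independent as operator-valued distributions with coefficients in the (\mbox{$u$-free}) subalgebra generated by the remaining fields — equivalently, on one-particle states $\del_\mu u_a$ inserts a momentum factor $p_\mu$ that is linearly independent of $1$ on the mass shell. I expect this bookkeeping to be the only mildly delicate point; the rest is just the derivation property of $\dlc$ and the assumed linearity of $Q_1$ in $u$.
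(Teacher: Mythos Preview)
Your proposal is correct and follows essentially the same approach as the paper: compute $\dlc L_1$ by the chain rule, expand $\del_\mu Q_1^\mu$ using linearity of $Q_1$ in~$u_a$, and compare coefficients of $u_a$ and $\del_\mu u_a$ to obtain (i) and (ii), with (iii) following by substitution. The only difference is that you spell out the justification for the coefficient comparison, which the paper leaves implicit.
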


The latter case applies for the photon field, where $\phi_a$ does not
exist.

\begin{proof}
The comparison of
$$
\dlc L_1 
= \sum_a \pd{L_1}{\phi_a}\, u_a + \pd{L_1}{A_{a\mu}}\, \del_\mu u_a
$$
with
$$
\del_\mu Q_1^\mu 
= \sum_a \del_\mu \biggl( \pd{Q_1^\mu}{u_a}\, u_a \biggr)
= \sum_a \del_\mu \biggl (\pd{Q_1^\mu}{u_a} \biggr) u_a
+ \pd{Q_1^\mu}{u_a}\, \del_\mu u_a
$$
immediately yields (i) and~(ii). Formula~(iii) and the last statement
are obvious consequences of (i) and~(ii).
\end{proof}

\begin{lemma} 
\label{lm:uu2}
For the interactions $S_1 = L_1$ and $Q_1^\mu$ as specified in
Eq.~\eqref{eq:L-Q-first-order}, all second-order obstructions
involving string deltas are total derivatives. They determine the part
$Q_2^\mu\bigr|_{I\dl}(x,x')$ of $Q_2^\mu$ to arise by a simple
replacement of $u(x)$ by $2 u_2(x,x')$ in $Q_1^\mu$:
\begin{align}
Q_2^\mu\bigr|_{I\dl}(x,x') 
&= 2 \sum_a \pd{Q^\mu_1}{u_a}(x)\, u_{2a}(x,x'),
\label{eq:uu2} 
\\
\shortintertext{where}
u_{2a}(x,x') 
&:= - \sum_{bc} f_{abc}\, u_b(x') A_{c\nu}(x') \,I^\nu \dlxx.
\label{eq:uu2-bis} 
\end{align}
\end{lemma}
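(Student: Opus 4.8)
The plan is to compute, at second order, the full set of obstruction terms carrying string deltas, show they organize into a total derivative $\gS_{xx'}\,\del^x_\mu Q_2^\mu|_{I\dl}$, and read off $Q_2^\mu|_{I\dl}$. First I would identify which crossings can produce string deltas at all. Consulting Tables~\ref{tbl:higgs-obstructions} and~\ref{tbl:Proca-obstructions}, the only two-point obstructions containing a string delta (i.e.\ an $I'_\nu$ or $I^\nu$ acting on $\dlxx$) are $\sO^\mu(F_{\mu\nu},A')$, $\sO^\mu(F_{\mu\nu},\phi')$, and the entries in the $\sO_\mu(A^\mu,\cdot)$ row. Since $Q_1^\mu$ (see \eqref{eq:L-Q-first-order-Q1}) is built from $F_a^{\mu\nu}u_bA_{c\nu}$, $B_a^\mu u_b\phi_c$, $B_a^\mu u_bH$, $\phi_a u_b\del^\mu H$, and $L_1$ (see \eqref{eq:L-Q-first-order-L1}) is of types $FAA$, $BA\phi$, $BAH$, $\phi A\del H$, $\phi\phi H$, $H^3$, the only way to get a string delta via the formula \eqref{eq:the-snag} is to contract the $F_a^{\mu\nu}$ leg of $Q_1^\mu$ with an $A$-leg (or $\phi$-leg) of $L_1'$, using $\sO^\mu(F_{\mu\nu},A'_\al)\eqIdl -i\dl^\nu_\al\dlxx$ modulo a string delta, and keeping only the string-delta remainder $i\,I'_\nu\del^\al\dlxx$-type piece. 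The uncontracted $u_b$ spectator on the $Q_1$ side rides along.

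Next I would carry out this contraction explicitly. Writing $Q_{1,\self}^{1\mu}=2\sum_{abc}f_{abc}F_a^{\mu\nu}u_bA_{c\nu}$, the string-delta part of its crossing with each term of $L_1'$ comes from replacing the contracted field on the $L_1'$ side by the appropriate string-delta entry. The cleanest way to package this is to notice that the string-delta part of $\sO^\mu(F_{\mu\nu},X')$ is exactly what one gets by the substitution rule ``$F_{\mu\nu}(x)\rightsquigarrow$ (string-integrated in $x'$) delta'' which, combined over all contractions with $L_1'$, reconstructs $-I^\nu_{c'}{}'(\dlc L_1')$ at the contracted slot. Since $\dlc L_1' = \del_\mu Q_1'^\mu$ by \eqref{eq:L-Q}, the string-delta obstruction from $Q_{1,\self}^1\bx L_1'$ collapses (after using $\del I' = +\id$ and integrating by parts in $x'$) into a structure proportional to $Q_1'$ contracted against a string delta, i.e.\ precisely a term of the shape $\del^x_\mu\bigl(\text{something}\bigr)$. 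Carrying this through and symmetrizing $x\otto x'$ yields the claimed form, with the auxiliary field $u_{2a}(x,x')$ in \eqref{eq:uu2-bis} emerging as $-\sum_{bc}f_{abc}u_b(x')A_{c\nu}(x')\,I^\nu\dlxx$ — which is manifestly the "second-order $u$" obtained by one string integration of the contracted $F\!A$ pair against $\dlxx$.

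The verification that this is a total derivative is then essentially Lemma~\ref{lm:L1}. Indeed, $2\sum_a\bigl(\del Q_1^\mu/\del u_a\bigr)(x)\,u_{2a}(x,x')$, upon taking $\del^x_\mu$, produces $2\sum_a\bigl[\del_\mu(\del Q_1^\mu/\del u_a)\bigr]u_{2a}+\bigl(\del Q_1^\mu/\del u_a\bigr)\del_\mu u_{2a}$. By Lemma~\ref{lm:L1}(i)–(iii), $\del Q_1^\mu/\del u_a=\del L_1/\del A_{a\mu}$ and $\del_\mu(\del Q_1^\mu/\del u_a)=\del L_1/\del\phi_a$; matching these against the string-delta pieces of the computed obstruction confirms the equality \eqref{eq:uu2}. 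The fact that $u_{2a}$ depends only on $f_{abc}$ (and not on $k_{ab}$ or $\ell$) is automatic, because only the $FAA$-type term $L_{1,\self}^1$ supplies an $F$-leg in $Q_1^\mu$ to be contracted; the escort and higgs terms of $Q_1^\mu$ carry $B$, $\phi$ or $H$, none of which generate string deltas when contracted against $L_1$ (the relevant entries $\sO_\mu(B^\mu,\cdot)$ give only $\del\dlxx$ or $\dlxx$, with no $I$; and $\sO_\mu(\phi,\cdot)$, $\sO_\mu(\del^\mu H,\cdot)$ likewise carry no string integration).

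The main obstacle will be the bookkeeping of the $\sO_\mu(A^\mu,\cdot)$ row: the crossings $Q_{1,\higgs}\bx L_1'$ and $Q_{1,\self}^2\bx L_1'$ involve $\del^\mu u$, not bare $A^\mu$, so at first glance one might worry about contributions from $\sO_\mu(A^\mu,X')$-type obstructions with their $(II')$-double-string-integral entries. However, since $Q_1^\mu$ contains $u_a$ (never $A_a$) as its string-dependent factor, and $\sO_\mu(u,\cdot)=0$ by the last row of Table~\ref{tbl:Proca-obstructions}, those rows never get activated from the $Q_1$ side; and from the $L_1'$ side only field \emph{values} (not obstructions) appear. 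So the only genuinely nontrivial computation is the $F$-contraction described above, and the hard part is merely verifying the combinatorial factor $2$ (from the two ways the $F^{\mu\nu}u A_\nu$ structure contracts, together with the $x\otto x'$ symmetrization) and confirming that the leftover non-string-delta pieces reproduce exactly the $-i\dl^\nu_\al\dlxx$ contributions already used in Propositions~\ref{pr:Jacobi}–\ref{pr:aaHH}, so that no double counting occurs.
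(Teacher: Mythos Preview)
Your overall plan is sound and close to the paper's, but there is a concrete error in how you dismiss the $A$-contractions. You write that ``$Q_1^\mu$ contains $u_a$ (never $A_a$) as its string-dependent factor''. This is false: $Q_{1,\self}^{1\mu} = 2\sum_{abc} f_{abc} F_a^{\mu\nu} u_b A_{c\nu}$ manifestly contains the string-dependent field $A_{c\nu}$. So you cannot avoid $\sO_\mu(A_\nu,\cdot)$ by claiming $A$ is absent. The correct reason this contraction produces no string deltas is different: since $A_\nu$ appears only through $F^{\mu\nu}A_\nu$, the skewsymmetry of $F$ means only the antisymmetrized obstruction $\sO_{[\mu}(A_{\nu]},\cdot)$ enters (compare the trick in the proof of Proposition~\ref{pr:Jacobi}), and by Table~\ref{tbl:Proca-obstructions} that row contains no string deltas. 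Without this observation you have not actually ruled out string-delta contributions from the $A$-leg, and by Remark~\ref{rk:Omu-Anu} the suppressed symmetric part of $\sO_\mu(A_\nu,\cdot)$ \emph{is} a pure string delta, so the worry is real.

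Once that is fixed, your computation is essentially the paper's, though the paper's packaging is more direct. Rather than invoking a vague ``substitution rule'' that ``reconstructs $-I'^\nu(\dlc L_1')$'', the paper simply writes the string-delta contribution as
\[
\pd{Q_1^\mu}{F_a^{\mu\nu}}\Bigl[
-iI'_\nu\dlxx\,\pd{L_1'}{\phi'_a}
- i(\del'^\ka I'_\nu\dlxx)\,\pd{L_1'}{A'^\ka_a}\Bigr],
\]
then applies Lemma~\ref{lm:L1}(iii) to replace $\del L_1'/\del\phi'_a$ by $\del'_\ka(\del L_1'/\del A'_{a\ka})$, turning the bracket into a total $\del'^\ka$-derivative, and finally uses Lemma~\ref{lm:L1}(i) to rewrite $\del L_1'/\del A'_{a\ka}$ as $\del Q_1'^\ka/\del u'_a$. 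This is the same Lemma~\ref{lm:L1} input you invoke, but used forward rather than as a backward verification; it makes the emergence of $u_{2a}$ and the factor of~$2$ transparent without guessing the answer first.
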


\begin{proof}
Because $Q_1^\mu$ contains the fields $A_\nu$ only in the
skewsymmetric combination $F^{\mu\nu} A_\nu$, the third line of
Table~\ref{tbl:Proca-obstructions} does not contribute to
$\sO(S_1; S_1')$. Therefore, the string deltas may only arise through
$\sO_\mu(F^{\mu\nu}, A')$ and $\sO_\mu(F^{\mu\nu}, \phi')$. They
contribute
\begin{align*}
& \pd{Q_1^\mu}{F_a^{\mu\nu}} \biggl[
-i I'_\nu \dlxx \pd{L_1'}{\phi'_a}
-i (\del'^\ka I'_\nu \dlxx)\, \pd{L_1'}{A'^\ka_a} \biggr]
\\
&\quad = \del'^\ka \biggl[ -i \pd{Q_1^\mu}{F_a^{\mu\nu}}\,I'_\nu \dlxx
\,\pd{L_1'}{A'^\ka_a} \biggr]
= \del'^\ka \biggl[ -i \pd{Q_1^\mu}{F_a^{\mu\nu}}\, I'_\nu \dlxx
\,\pd{Q_1'^\ka}{u'_a} \biggr]
\end{align*}
by Lemma~\ref{lm:L1}(iii) and~(i). 

Now \eqref{eq:uu2} follows from the formula
\eqref{eq:L-Q-first-order-Q1} and the condition \eqref{eq:seconds-out}
for resolving second order obstructions that determines
$Q_2^\mu(x,x')$.
\end{proof}

\subsection{The case of general $c_H$ and $c_B$}
\label{aps:cH-cB}

In the main body of the paper, we have computed second-order
obstructions with the choice of renormalization constants
$c_H = c_B = -1$. The next Lemma shows that the additional
contributions for general $c_H$, $c_H = c_B = -1$. The additional
contributions for general $c_H$, $c_B$ are always resolvable and have
a rather simple form. The result reflects the circumstance that a
delta function in a propagator amounts to the contraction to a new
quartic vertex of two cubic vertices connected by that propagator.

\begin{lemma} 
\label{lm:cH-cB}
The additional second-order obstruction when $c_H$ and $c_B$ differ
from the distinguished choice $c_H = c_B = -1$ is
$$
\sO^{(2)*}(x,x') = \dlc[L_2^*] - \gS_{xx'} \del_\mu Q_2^{*\mu}
$$
with
\begin{align}
L_2^* &= (1 + c_H) \pd{L_1}{(\del_\ka H)}\,\pd{L_1}{(\del^\ka H)}
+ (1 + c_B) \sum_e m_e^{-2} \pd{L_1}{B_{e\ka}}\,\pd{L_1}{B_e^\ka}\,,
\notag \\
Q_2^{*\mu} &= 2(1 + c_H) 
\pd{Q_1^\mu}{(\del_\ka H)}\,\pd{L_1}{(\del^\ka H)}\,i\dlxx
+ 2 (1 + c_B) \sum_e \pd{Q_1^\mu}{B_{e\ka}}\,\pd{L_1}{B_e^\ka}
\,i\dlxx.
\label{eq:obs-cH-cB} 
\end{align}
The relation \eqref{eq:Q2-L2} also holds for $Q_2^*$ and $L_2^*$, 
namely 
$\dsp Q^*_2\bigr|_\dl = \sum_a u_a \pd{L_2^*}{A_{a\mu}}\, i\dlxx$.
\end{lemma}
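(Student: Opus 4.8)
The plan is to compute $\sO^{(2)*}$ -- the difference between the second-order obstruction $\sO^{(2)}$ evaluated at general $c_H,c_B$ and its value at $c_H=c_B=-1$ -- directly from the tree-level formula \eqref{eq:the-snag}, and to put the result into the resolvable shape \eqref{eq:seconds-out} with the stated pieces $L_2^*$ and $Q_2^{*\mu}$. The first point to record is that, of all entries of Tables~\ref{tbl:higgs-obstructions} and~\ref{tbl:Proca-obstructions}, only $\sO_\mu(H,\cdot)$ and $\sO_\mu(\del^\mu H,\cdot)$ in the $\del'^\ka H'$ column depend on $c_H$, and only the three entries in the $B'^\ka$ column -- $\sO^\mu(F_{\mu\nu},\cdot)$, $\sO_\mu(B^\mu,\cdot)$ and $\sO_\mu(\phi,\cdot)$ -- depend on $c_B$, all three being traceable, via $\del_\mu F_a^{\mu\nu}=-m_a^2 B_a^\nu$ and $\del_\mu\phi_a=A_{a\mu}-B_{a\mu}$, to the single $c_B$-renormalization of $\Tpq{B_\mu(x)}{B_\nu(x')}$; the accompanying string-delta $c_B$-terms are dealt with exactly as in Lemma~\ref{lm:uu2}. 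In each of these entries the deviation from the value at $c=-1$ is one $\dl$- or $\del\dl$-term proportional to $1+c_H$, resp.\ $1+c_B$, which is the origin of those overall factors; consequently at general $c_H,c_B$ the second-order induced interaction is simply $L_2+L_2^*$.

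For the $c_H$-block the only first-order legs that the changed two-point obstructions can join are the $H$- and $\del^\mu H$-legs of $Q_1^\mu$, present in the term $k_{ab}(B_a^\mu u_b H+\phi_a u_b\,\del^\mu H)$ of \eqref{eq:L-Q-first-order-Q1}, with the sole $\del'^\ka H'$-leg of $L_1$, present in $k_{ab}\phi_a A_b^\mu\del_\mu H$. Inserting the two deviations into \eqref{eq:the-snag}, I would integrate by parts to move the $\del^\ka\dlxx$ coming from $\sO_\mu(\del^\mu H,\del'^\ka H')$ onto the product $\phi_a u_b$; applying $\del_\mu\phi_a=A_{a\mu}-B_{a\mu}$ from \eqref{eq:eom} then makes the $B$-pieces cancel against the contribution of $\sO_\mu(H,\del'^\ka H')$, leaving $i\dlxx$ times a ``doubly amputated'' quartic plus a genuine total $x$-derivative. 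Since $\pd{L_1}{(\del_\ka H)}=\sum_{ab}k_{ab}\phi_a A_b^\ka$ is the entire $\del H$-content of $L_1$, one verifies that $i\,\dlc\!\bigl[(1+c_H)\,\pd{L_1}{(\del_\ka H)}\,\pd{L_1}{(\del^\ka H)}\bigr]\dlxx$ reproduces the $\dlxx$-part exactly, while the derivative part, after the $[x\otto x']$ symmetrization in \eqref{eq:O2} (whence the factor~$2$), is $-\gS_{xx'}\del_\mu$ of the first term of $Q_2^{*\mu}$.

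The $c_B$-block is handled in the same way. Now the legs $F_a^{\mu\nu}$, $B_a^\mu$ and $\phi_a$ of $Q_1^\mu$ pair with the $B_e'^\ka$-leg of $L_1$, and one uses all of \eqref{eq:eom} -- chiefly $\del_\mu F_a^{\mu\nu}=-m_a^2 B_a^\nu$, $\del_\mu B_a^\mu=0$ and $\del_\mu\phi_a=A_{a\mu}-B_{a\mu}$ -- together with complete skewsymmetry of $f_{abc}$ and symmetry of $k_{ab}$, to collapse the three-entry contribution into $i\,\dlc\!\bigl[(1+c_B)\sum_e m_e^{-2}\,\pd{L_1}{B_{e\ka}}\,\pd{L_1}{B_e^\ka}\bigr]\dlxx$ plus a genuine total $x$-derivative contributing the second term of $Q_2^{*\mu}$; the string-delta remnants are once more total derivatives, absorbed into $Q_2\bigr|_{I\dl}$ as in Lemma~\ref{lm:uu2}. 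Summing the two blocks gives the stated $L_2^*$ and $Q_2^{*\mu}$. The last assertion, $\dsp Q_2^*\bigr|_\dl=\sum_a u_a\,\pd{L_2^*}{A_{a\mu}}\,i\dlxx$, then follows by differentiating the explicit $L_2^*$, just as in the proof of \eqref{eq:Q2-L2}: since $\pd{L_1}{A_{a\mu}}=\pd{Q_1^\mu}{u_a}$ by Lemma~\ref{lm:L1}(i) and $L_2^*$ is bilinear in the amputated first-order vertices, $\pd{}{A_{a\mu}}$ of $L_2^*$, matched against the $u_a$-term of $Q_2^{*\mu}$, gives the identity.

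The step I expect to be the real obstacle is the $c_B$-block: one propagator renormalization feeds, through $F=\del\wedge B$ and $\del\phi=A-B$, into three distinct two-point obstructions accompanied by assorted string deltas, and one must check carefully that (i) the string-delta remnants are total derivatives (handled as in Lemma~\ref{lm:uu2}) and (ii) the genuine-delta remnants, once the equations of motion and the $(f_{abc},k_{ab})$ (anti)symmetries are invoked, recombine \emph{precisely} into $\dlc(L_2^*)\,\dlxx$ up to a derivative, leaving no non-resolvable piece (of type $uAB\phi$, $u\phi\phi H$, or similar). By contrast the $c_H$-block is a short warm-up involving no Lie-algebra bookkeeping.
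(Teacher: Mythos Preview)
Your proposal is correct and follows essentially the same route as the paper: compute the $c_H$- and $c_B$-dependent pieces of $Q_1\bx L_1'$ from Tables~\ref{tbl:higgs-obstructions} and~\ref{tbl:Proca-obstructions}, integrate by parts to strip the derivative off $\dlxx$, and recognize the result as $\dlc(L_2^*)\,i\dlxx$ minus a total derivative. The paper packages the key step slightly differently, isolating the ``remarkable relations'' \eqref{eq:Q-dlc-L},
\[
\pd{Q^\mu_1}{H} + \del^\mu\!\Bigl(\pd{Q_1}{(\del H)}\Bigr)
= \dlc\Bigl[\pd{L_1}{(\del_\mu H)}\Bigr],
\qquad
-m_e^2\Bigl(\pd{Q_1}{F_e}\Bigr)^\mu - \pd{Q^\mu_1}{\phi_e}
+ \del^\mu\!\Bigl(\pd{Q_1}{B_e}\Bigr)
= \dlc\Bigl[\pd{L_1}{B_{e\mu}}\Bigr],
\]
and then verifies them by the very computation you describe (the $B$-cancellation in the $c_H$-block, and the analogous recombination in the $c_B$-block). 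Your argument for the final assertion is actually a bit more conceptual than the paper's ``by inspection'': differentiating Lemma~\ref{lm:L1}(i) with respect to $X\in\{\del^\ka H,\,B_e^\ka\}$ and using linearity of $Q_1^\mu$ in~$u$ yields $\pd{Q_1^\mu}{X}=\sum_a u_a\,\pd{}{A_{a\mu}}\pd{L_1}{X}$ directly.

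One harmless correction: there are \emph{no} string-delta terms among the $c_B$-dependent entries of Table~\ref{tbl:Proca-obstructions} (all three sit in the $B'^\ka$ column and are pure $\dlxx$ or $\del\dlxx$), so your repeated concern about string-delta remnants in the $c_B$-block is moot; the ``real obstacle'' you anticipate reduces to the algebraic recombination captured by the second relation above.
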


\begin{remk} 
\label{rk:obs-cHcB}
$L_2^*$ has terms of type $AA\phi\phi$, $AAHH$ and $AA\phi H$ (where
the last of these also comes with coefficients $C_{abc}$). We expect,
from experience with a simpler model with all masses equal
\cite{Rehren24b}, that the string delta parts of the third-order
obstructions (which we are not considering here, but must be
separately derivatives because they cannot be part of $\dlc(L_3)$) put
further constraints on the renormalization constants, leaving only the
choice $c_H = c_B = -1$. The rationale is similar to that
of~\cite{MundRS23}, where that choice was motivated by the complete
absence of string deltas. This shortcut is not possible in nonabelian
models, because of Lemma~\ref{lm:uu2}.
\end{remk}

\begin{proof}[Proof of Lemma~\ref{lm:cH-cB}]
Notice that $\del Q_1^\mu/\del(\del_\al H)$ and
$\del Q_1^\mu/\del B_\al$ both contain a factor $\eta^{\mu\al}$. It is
convenient to write:
$$
\pd{Q_1^\mu}{(\del_\al H)} =: \eta^{\mu\al} \pd{Q_1}{(\del H)}
\word{and}
\pd{Q_1^\mu}{(B_{e\al})} =: \eta^{\mu\al} \pd{Q_1}{(\del B_e)}\,.
$$
Similarly, one can abbreviate
$$
\pd{Q_1^\mu}{F_{e\al\bt}} 
= \frac{1}{2} \sum_{ab} \eta^{\mu\al} f_{eab}\,(2 u_a A_b^\bt)
- [\al \otto \bt] 
=: \frac{1}{2} \eta^{\mu\al} \Bigl( \pd{Q_1}{(\del F)} \Bigr)^\bt 
- [\al \otto \bt].
$$
After inspection of the $c_H$- and $c_B$-dependent entries in Tables
\ref{tbl:higgs-obstructions} and~\ref{tbl:Proca-obstructions}, one
must compute:
\begin{align}
\frac{\del}{\del c_H}\, Q_1 \bx L_1'
&= \biggl[ \pd{Q^\mu_1}{H} \,i\dlxx 
- \pd{Q_1}{(\del H)} \,i\del^\mu\dlxx \biggr]
\pd{L_1'}{(\del'^\mu H')}
\notag \\
&= \biggl[ \pd{Q^\mu_1}{H}
+ \del^\mu \Bigl( \pd{Q_1}{(\del H)} \Bigr) \biggr]
\pd{L_1'}{(\del'^\mu H')} \,i\dlxx
- \del^\mu \biggl(
\pd{Q_1}{(\del H)}\,\pd{L_1'}{(\del'^\mu H')} \,i\dlxx \biggr)
\label{eq:OL-cH-cB-extra} 
\\
\shortintertext{and}
\frac{\del}{\del c_B}\, Q_1 \bx L_1' 
&= \sum_e \biggl[ -\Bigl( \pd{Q_1}{F_e} \Bigr)^\mu i\dlxx
- m_e^{-2} \pd{Q^\mu_1}{\phi_e} \,i\dlxx
- m_e^{-2} \pd{Q_1}{B_e} \,i\del^\mu\dlxx \biggr] \pd{L_1'}{B_e'^\mu}
\notag \\
&= \sum_e m_e^{-2} \biggl[ -m_e^2 \Bigl( \pd{Q_1}{F_e} \Bigr)^\mu
- \pd{Q^\mu_1}{\phi_e} + \del^\mu \Bigl( \pd{Q_1}{B_e} \Bigr) \biggr]
\pd{L_1'}{B_e'^\mu} \,i\dlxx 
\notag \\
&\qquad - \del^\mu \biggl( \sum_e m_e^{-2} 
\pd{Q_1}{B_e}\,\pd{L_1'}{B_e'^\mu} \,i\dlxx \biggr).
\label{eq:OL-cH-cB} 
\end{align}

Next, there are the remarkable relations:
\begin{align}
\pd{Q^\mu_1}{H} + \del^\mu \biggl( \pd{Q_1}{(\del H)} \biggr)
&= \dlc \biggl[ \pd{L_1}{(\del_\mu H)} \biggr],
\notag \\
\biggl( -m_e^2 \pd{Q_1}{F_e} \biggr)^\mu - \pd{Q^\mu_1}{\phi_e} 
+ \del^\mu \biggl( \pd{Q_1}{B_e} \biggr) 
&= \dlc \biggl[ \pd{L_1}{B_{e\mu}} \biggr].
\label{eq:Q-dlc-L} 
\end{align}

These are verified by direct computation. For the first:
$$
\sum_{ab} k_{ab} \bigl( B_a^\mu u_b + \del^\mu (\phi_a u_b) \bigr)
= \sum_{ab} k_{ab} \bigl( A_a^\mu u_b + \del^\mu u_a \phi_b \bigr)
= \dlc \biggl[ \sum_{ab} k_{ab} A_a^\mu \phi_b \biggr]
= \dlc\biggl[ \pd{L_1}{(\del_\mu H)} \biggr],
$$
and the second for $Q_{1,\higgs}$:
$$
\sum_b k_{eb} \bigl( - u_b \del^\mu H + \del^\mu(u_b H) \bigr)
= \sum_b k_{eb}\, \del^\mu u_b H 
= \dlc\biggl[ \sum_b k_{eb} A_b^\mu H \biggr] 
= \dlc\biggl[ \pd{L_{1,\higgs}}{B_{e\mu}} \biggr],
$$
and for $Q_{1,\self}$:
\begin{align*}
& \sum_{bc} -2 m_e^2 f_{ebc} u_b A_c^\mu 
- f_{bce} m_{bce}^2 B_b^\mu u_c
+ f_{ebc} m_{ebc}^2 \del^\mu(u_b \phi_c)
\\
&\quad = \sum_{bc} -(m_{ebc}^2 + m_{ecb}^2) f_{ebc} u_b A_c^\mu
+ f_{ebc} m_{ebc}^2 B_c^\mu u_b
+ f_{ebc} m_{ebc}^2 (\del^\mu u_b \phi_c + u_b \del^\mu \phi_c)
\\
&\quad = \sum_{bc} f_{ebc} m_{ebc}^2
\bigl( A_b^\mu u_c + \del^\mu u_b \phi_c \bigr)
= \sum_{bc} f_{ebc} m_{ebc}^2 \,\dlc[A_b^\mu \phi_c]
= \dlc\biggl[ \pd{L_{1,\self}}{B_{e\mu}} \biggr].
\end{align*}

When the relations \eqref{eq:Q-dlc-L} are inserted into
\eqref{eq:OL-cH-cB}, and $[x\otto x']$ is added, the formulas
\eqref{eq:obs-cH-cB} follow from Eq.~\eqref{eq:seconds-out}. The final
statement is a consequence of the relations:
$$
\pd{Q^\mu_1}{(\del^\ka H)} = \sum_b u_b \frac{\del}{\del A_{b\mu}} 
\biggl( \pd{L_1}{(\del^\ka H)} \biggr) \word{and}
\pd{Q^\mu_1}{B_e^\ka} = \sum_b u_b \frac{\del}{\del A_{b\mu}} 
\biggl( \pd{L_1}{B_e^\ka} \biggr)
$$
which hold by inspection.
\end{proof}

\subsection{Delta functions within $Q_2^\mu$}
\label{aps:doubling-down}

The resolution of second-order obstructions of the form 
$\del_\mu Y^\mu(x)\,\dlxx$ and the computation of third-order
obstructions of the form $Q_2(x,x') \bx L(x'')$, where
$Q_2(x,x') = Y(x)\,\dlxx$, bring in some unexpected factors of~$2$.

\begin{lemma} 
\label{lm:delta}
For $Q^\mu(x,x')$ of the form $Q^\mu(x,x') = Y^\mu(x)\,\dlxx$, the
following relations hold:
\begin{align*}
\mathrm{(i)} \quad
& 2\gS_{xx'} \bigl( \del_\mu Q^\mu(x,x') \bigr)
= \del_\mu Y^\mu(x)\,\dlxx,
\\[\jot]
\mathrm{(ii)} \quad
& 2 \gS_{xx'}(Y(x)\,\dlxx) \bx X(x'')
= \bigl( Y(x) \bx X(x'') \bigr) \,\dlxx.
\end{align*}
\end{lemma}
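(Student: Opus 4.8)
The plan is to treat both identities as exercises in the distribution calculus of $\dlxx\equiv\dl(x-x')$: no input is needed beyond the tree-level form \eqref{eq:the-snag} of the crossing~$\bx$, the fact that $\dlxx$ carries no field operators, and three elementary rules — the symmetry $\dl(x-x')=\dl(x'-x)$, the chain rule $\del^{x'}_\mu\dlxx=-\del^x_\mu\dlxx$, and the substitution identity $f(x)\,\dlxx=f(x')\,\dlxx$ together with its differentiated form $f(x')\,\del^x_\mu\dlxx=(\del_\mu f)(x)\,\dlxx+f(x)\,\del^x_\mu\dlxx$. Here $\del_\mu$ denotes the derivative on the first argument, as in \eqref{eq:seconds-out}. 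I would establish~(i) first, since it is what forces the compensating factor~$2$ in $Q_2^\mu$, and then~(ii), which records the dual factor~$\half$ that appears when that $Q_2^\mu$ re-enters a crossing at third order.

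\emph{Part~(i).} Writing $Q^\mu(x,x')=Y^\mu(x)\,\dlxx$, the $x\otto x'$ image of $\del^x_\mu Q^\mu(x,x')$ is $\del^{x'}_\mu Q^\mu(x',x)=\del^{x'}_\mu\bigl(Y^\mu(x')\,\dlxx\bigr)$. Expanding both by the Leibniz rule,
\[
\del^x_\mu Q^\mu(x,x')=(\del_\mu Y^\mu)(x)\,\dlxx+Y^\mu(x)\,\del^x_\mu\dlxx,\qquad
\del^{x'}_\mu Q^\mu(x',x)=(\del_\mu Y^\mu)(x')\,\dlxx-Y^\mu(x')\,\del^x_\mu\dlxx,
\]
and adding: the first terms combine to $2(\del_\mu Y^\mu)(x)\,\dlxx$ by the substitution identity, while $\bigl(Y^\mu(x)-Y^\mu(x')\bigr)\del^x_\mu\dlxx=-(\del_\mu Y^\mu)(x)\,\dlxx$ by its differentiated form. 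Hence $2\gS_{xx'}\bigl(\del_\mu Q^\mu\bigr)=(\del_\mu Y^\mu)(x)\,\dlxx$, which is~(i). Equivalently, $\gS_{xx'}\del^x_\mu Q^\mu$ reproduces only \emph{half} of the total derivative $\del_\mu Y^\mu(x)\,\dlxx$, so to resolve such a term in \eqref{eq:seconds-out} one must take $Q_2^\mu=2\,Y^\mu(x)\,\dlxx$ — precisely the normalization exhibited in \eqref{eq:Q2-explicit} — which is the first of the advertised ``factors of~$2$''.

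\emph{Part~(ii).} Since $\dlxx$ contains no field, the tree-level formula \eqref{eq:the-snag} lets one carry it past the crossing; the one subtlety is that $Q^\mu(x,x')$ is a \emph{two-point} object, so $\bx$ must be symmetrized over its two spacetime arguments, and because all the fields of $Y^\mu(x)\,\dlxx$ sit at~$x$ (the $\dlxx$ being field-free) this symmetrization supplies a compensating factor~$\half$: $\bigl(Y^\mu(x)\,\dlxx\bigr)\bx X(x'')=\half\,\dlxx\,\bigl(Y(x)\bx X(x'')\bigr)$ — i.e.\ the delta may be pulled out of $\bx$ only at the price of a~$\half$. Combining this with $\gS_{xx'}\bigl(Y^\mu(x)\,\dlxx\bigr)=Y^\mu(x)\,\dlxx$ (an immediate consequence of the substitution identity) and then identifying $x$ with $x'$ on the support of $\dlxx$ — applying the substitution identity to the coefficient Wick polynomials and to the internal delta $\dl(x-x'')$ versus $\dl(x'-x'')$ and its derivatives — gives $2\gS_{xx'}\bigl(Y(x)\,\dlxx\bigr)\bx X(x'')=\bigl(Y(x)\bx X(x'')\bigr)\,\dlxx$, which is~(ii).

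The step I expect to be the genuine obstacle is not any hard estimate but precisely the \emph{bookkeeping of the factors of~$2$} flagged just before the lemma: one must check that the $\half$ produced in~(ii) is exactly the reciprocal of the $2$ produced in~(i), so that the second- and third-order resolution formulas — \eqref{eq:seconds-out} and \eqref{eq:O3} — stay mutually consistent when a $\dl$-type $Q_2^\mu$ is fed back in at the next order; and one must be careful \emph{when} the substitution identity is invoked, since collapsing $x$ onto $x'$ too early double-counts and is itself the source of the ``unexpected'' factor. With the three elementary rules above in hand, everything else is routine.
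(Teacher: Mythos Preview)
Your argument for part~(i) is correct and essentially equivalent to the paper's: you expand both summands by Leibniz and recombine using the differentiated substitution identity, whereas the paper first rewrites $Y^\mu(x')\dlxx$ as $Y^\mu(x)\dlxx$ and then observes that the symmetrized sum is $(\del_x+\del_{x'})(Y^\mu(x)\dlxx)$, which equals $(\del_\mu Y^\mu)(x)\dlxx$ because $(\del_x+\del_{x'})\dlxx=0$. Same content, different packaging.

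Your argument for part~(ii), however, is not a proof. You assert the identity $(Y^\mu(x)\,\dlxx)\bx X(x'')=\tfrac12\,\dlxx\,(Y(x)\bx X(x''))$ and justify it by saying that ``$\bx$ must be symmetrized over its two spacetime arguments'' and that this symmetrization supplies a factor~$\tfrac12$ ``because all the fields sit at~$x$''. Two problems. First, the symmetrization is \emph{not} built into~$\bx$: it is the explicit $\gS_{xx'}$ in the lemma, which acts on the whole crossing $(Y(x)\dlxx)\bx X(x'')$, not on $Y(x)\dlxx$ alone. Under the natural reading of $\bx$ (obstruction with respect to $\del^x$), a direct computation gives $(Y(x)\dlxx)\bx X(x'')=\dlxx\,(Y(x)\bx X(x''))$ with \emph{no} factor~$\tfrac12$; your parsing of the lemma as $[2\gS_{xx'}(Y(x)\dlxx)]\bx X(x'')=2(Y(x)\dlxx)\bx X(x'')$ would then give twice the right-hand side. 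Second, once you reinterpret $\bx$ as already including the $\gS_{xx'}$, your displayed identity becomes precisely the lemma itself --- so the argument is circular. The heuristic ``all fields sit at $x$, so the $x'$-half of the symmetrization contributes nothing'' is exactly what requires proof: the $[x\otto x']$ image has its fields at~$x'$, and one must show that after the substitution $Y^\mu(x')\dlxx=Y^\mu(x)\dlxx$ the $\del^{x'}$-obstruction vanishes.

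The paper's proof avoids this by reusing the \emph{same} trick as in~(i): it writes the symmetrized crossing directly as
\[
\T\bigl[(\del_x+\del_{x'})(Y(x)\dlxx)\,X(x'')\bigr]-(\del_x+\del_{x'})\T\bigl[Y(x)\dlxx\,X(x'')\bigr],
\]
and then the identity $(\del_x+\del_{x'})(f(x)\dlxx)=(\del f)(x)\dlxx$ collapses both terms, leaving $\dlxx\bigl(\T[\del Y(x)\,X(x'')]-\del\T[Y(x)\,X(x'')]\bigr)=(Y(x)\bx X(x''))\,\dlxx$. That is the missing computation in your part~(ii).
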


\begin{proof}
From $(\del + \del')\dlxx = 0$ it follows that
\begin{equation}
(\del_x + \del_x')(f(x)\,\dlxx) = \dlxx\,(\del_x + \del_x')f(x)
=  \del f(x) \,\dlxx\,.
\label{eq:del-xx'} 
\end{equation}
The left-hand side of~(i) is
$$
2 \gS_{xx'} (\del_x Q)(x,x') = (\del_x Q)(x,x') + [x \otto x']
= (\del_x + \del_{x'}) (Y(x)\,\dlxx) = (\del Y(x))\,\dlxx.
$$
The left-hand side of~(ii) is
\begin{align*}
& Y(x) \,\dlxx \bx X(x'') + [x \otto x']
\\
&\quad = \T\bigl[ (\del_x + \del_{x'})(Y(x)\,\dlxx) X(x'') \bigr]
- (\del_x + \del_{x'}) \T\bigl[ Y(x)\,\dlxx\, X(x'') \bigr].
\end{align*}
Using \eqref{eq:del-xx'} again, one gets
\begin{align*}
& \T\bigl[ (\del Y(x))\,\dlxx\, X(x'') \bigr]
- (\del_x + \del_x')\bigl( \dlxx \T[Y(x) X(x'')] \bigr)
\\[\jot]
&\quad = \bigl( \T[\del Y(x) X(x'')]
- \del \T[Y(x) X(x'')] \bigr)\,\dlxx.
\tag*{\qed}
\end{align*}
\hideqed
\end{proof}


\subsection*{Acknowledgments}

We are indebted to Jens Mund and Bert Schroer for their contributions
to early stages of this work. JMG-B and JCV gladly acknowledge the
good offices of the Centro de Investigación en Matemática Pura y
Aplicada via a research project at the University of Costa Rica. JCV
thanks Christian Ga{\ss} for helpful discussions during a visit to
Warsaw. This research was partially supported by the University of
Warsaw Thematic Research Programme ``Quantum Symmetries''.

\bigskip


\end{document}